\begin{document}

\title{How Many Human Survey Respondents is a Large Language Model Worth? An Uncertainty Quantification Perspective}

\author{Chengpiao Huang\thanks{Department of IEOR, Columbia University. Email: \texttt{chengpiao.huang@columbia.edu}.}
	\and Yuhang Wu\thanks{Decision, Risk, and Operations Division, Columbia Business School. Email: \texttt{yuhang.wu@columbia.edu}}
	\and Kaizheng Wang\thanks{Department of IEOR and Data Science Institute, Columbia University. Email: \texttt{kaizheng.wang@columbia.edu}.}
}

\date{This version: \today}

\maketitle

\begin{abstract}
Large language models (LLMs) are increasingly used to simulate survey responses, but synthetic data can be misaligned with the human population, leading to unreliable inference. We develop a general framework that converts LLM-simulated responses into reliable confidence sets for population parameters of human responses, quantifying the uncertainty induced by the human-LLM misalignment. The key design choice is the number of simulated responses: too many produce overly narrow sets with poor coverage, while too few yield overly wide and uninformative sets dominated by stochastic noise. We propose a data-driven approach that adaptively selects the simulation sample size to achieve nominal average-case coverage, regardless of the LLM's simulation fidelity or the confidence set construction procedure. The selected sample size is further shown to reflect the effective human population size that the LLM can represent, providing a quantitative measure of its simulation fidelity. Experiments on real survey datasets reveal heterogeneous simulation fidelity across different LLMs and domains.

\end{abstract}
\noindent{\bf Keywords:} Synthetic data, Large language models, Uncertainty quantification, Simulation

\section{Introduction}

Large language models (LLMs) have demonstrated remarkable capabilities in mimicking human behaviors. Recent studies have leveraged LLMs to simulate human responses in various domains, including economic and social science experiments \citep{AAK23, Hor23, CLS23, BCD24, HYZ24, YLW24, ZHS24}, market research \citep{BIN23, GTo23, GSi24, WZZ24}, education \citep{MOL23, ZMT23, LWa24}, and so on. The typical simulation procedure consists in prompting an LLM with a real or fictional persona as well as a survey question, and collecting the LLM's responses. Compared to traditional survey methods that recruit and query real people, LLM simulations offer significant advantages in terms of time and cost efficiency, enabling the generation of large-scale synthetic responses with minimal effort.

However, a growing body of evidence suggests that LLMs are not perfectly aligned with the human population and, in some cases, the misalignment can be substantial \citep{AAK23, SDL23, GLB25}. This raises critical concerns about the reliability of insights derived from LLM-generated data. It remains a challenge how to properly simulate human responses using LLMs and how to account for their imperfections when using the simulated samples to make inference about the true human population.

To make the problem concrete, consider a surveyor who wishes to estimate the average human response to a Likert-scale survey question. The surveyor may prompt an LLM to simulate human respondents and use the synthetic responses in place of human data. However, due to LLM misalignment, na\"{i}vely generating a large number of synthetic responses can lead to an estimate that appears precise but is systematically biased. This naturally motivates the following two questions: 
\begin{center}
\emph{How many synthetic samples should be generated?}
\end{center}
\begin{center}
\emph{How can we quantify the error of estimates based on synthetic data?}
\end{center}

We propose to address these questions through the lens of \emph{uncertainty quantification}. Specifically, we seek to use LLM-generated data to construct a confidence interval (or confidence set) for a population parameter (e.g., the mean) of human responses. The resulting confidence interval provides a statistically valid range of plausible values for the true human population parameter, and quantifies the uncertainty of the synthetic estimate induced by the real-synthetic gap. The key to constructing such a valid confidence interval is choosing how many synthetic samples to use, so that the interval accounts for both simulation noise and the real-synthetic gap. On one hand, generating too many samples can make the confidence interval concentrate too tightly around the synthetic population parameter, failing to cover the true human population parameter. On the other hand, generating too few samples yields overly wide confidence intervals dominated by stochastic noise. The optimal sample size depends on the discrepancy between the real and synthetic response distributions, which is unknown in practice. This necessitates a data-driven approach to determining an appropriate number of simulated responses.

\paragraph{Contributions.} In this paper, we propose a general framework that selects a simulation sample size $\widehat{k}$ for reliable uncertainty quantification, and give this sample size interpretable operational meanings. In particular, the sample size $\widehat{k}$ reveals three operational insights:
\begin{enumerate}
\item \textbf{Practical simulation guidance.} It specifies how many synthetic samples should be generated for future simulation tasks. This yields statistically reliable uncertainty quantification while avoiding the false precision that can arise from generating too many synthetic samples.
\item \textbf{Intuitive alignment metric.} The width of the resulting confidence interval, which is determined by $\widehat{k}$, serves as an intuitive measure of the misalignment between the LLM and the human population. A wide interval indicates high uncertainty and a large human-LLM discrepancy.
\item \textbf{Effective human sample size.} For survey questions with binary responses, we further establish that $\widehat{k}$ serves as a novel measure of the LLM's simulation fidelity, providing a direct answer to the question: \emph{``How many human samples is the LLM worth?''}
It quantifies the size of the human population that the LLM effectively represents, as if the model were a \emph{Mechanical Turk} made up of $\widehat{k}$ real human agents that respond to user queries.
A larger $\widehat{k}$ indicates that the LLM captures richer information of the human population and thus has a higher simulation fidelity. We visualize this interpretation in \myCref{fig-LLMTurk}.
\end{enumerate}

\begin{figure}[h]
\FIGURE
{\includegraphics[scale=0.15]{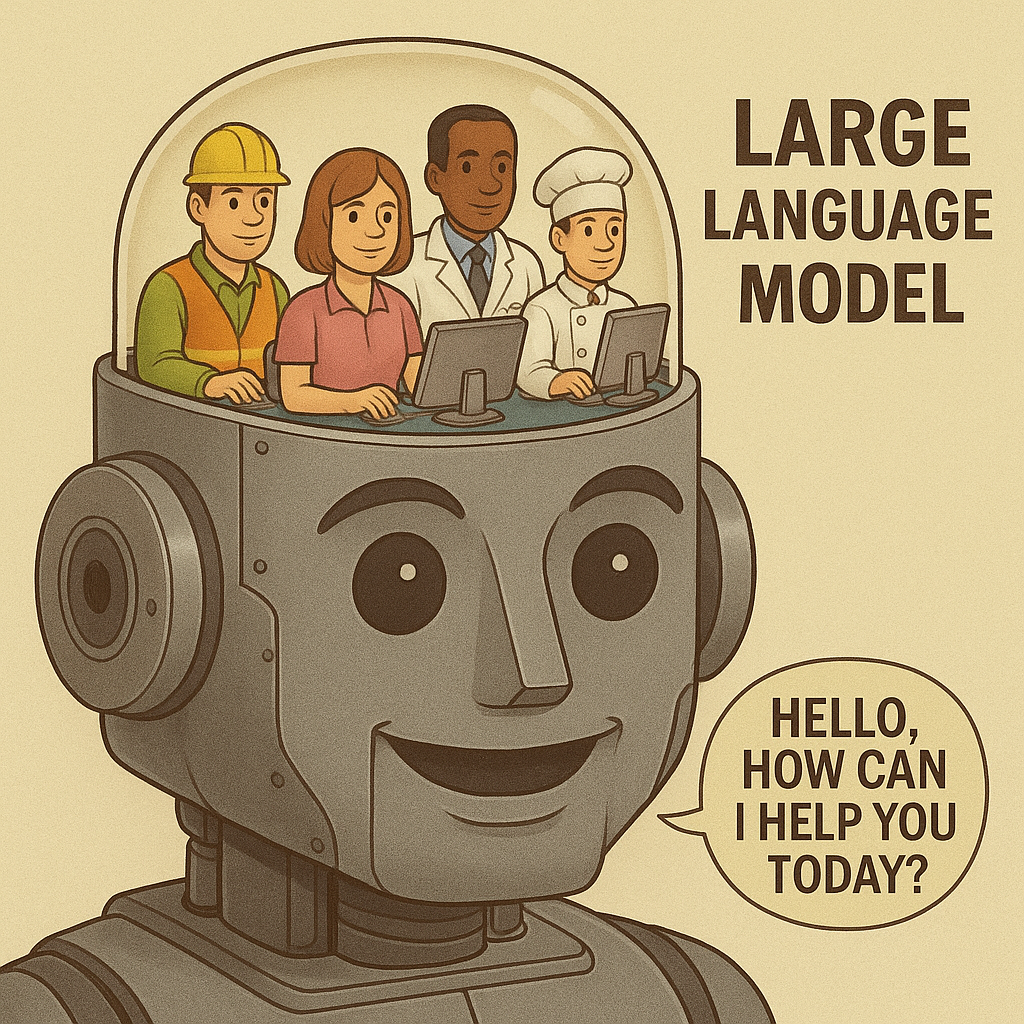}}
{An Interpretation of an LLM as Being Made Up of $\widehat{k}$ Real Human Agents. \label{fig-LLMTurk}}
{Generating an output from the LLM can be thought of as sampling a response from a human agent inside the LLM. The figure is generated by ChatGPT 5 \citep{GPT5}, and borrows ideas from the \emph{Mechanical Turk}, a chess-playing machine from the 18th century with a human player hidden inside.}
\end{figure}

To realize these operational insights, we make the following technical contributions:

\begin{itemize}
\item \textbf{Formulation.} We provide a rigorous mathematical framework for uncertainty quantification in LLM-based survey simulations.

\item \textbf{Methodology.} We propose a flexible methodology that adaptively selects a simulation sample size for valid uncertainty quantification. Our method applies to any LLM and any confidence set construction procedure.

\item \textbf{Theory.} We show that the selected sample size $\widehat{k}$ yields confidence sets with valid average-case coverage. In the one-dimensional case, $\widehat{k}$ is shown to quantify the human-LLM discrepancy through information-theoretic divergences. Moreover, under a ``Mechanical Turk'' model where the synthetic distribution is made up of a hidden population of $\kappa$ real humans, $\widehat{k}$ recovers the hidden population size $\kappa$.

\item \textbf{Experiments.} We test our method on multiple LLMs and two real survey datasets, one on social opinions and the other on educational test. The results verify the coverage validity and reveal substantial heterogeneity in LLM simulation fidelity across domains.
\end{itemize}

It is important to emphasize that our goal is to quantify the simulation fidelity of any given LLM-based procedure (that is, a fixed LLM paired with a fixed prompting rule), rather than improve or optimize it. We treat the human-LLM discrepancy as inherent: if a simulation procedure has poor fidelity, our framework faithfully reveals this through larger confidence sets, and the resulting uncertainty is a property of the procedure itself, not an artifact of our method.

\paragraph{Related works.} 

Our work relates to research on assessing the fidelity of LLM simulations and measuring their alignment with real human populations. Prior studies have explored similarity metrics between synthetic and human distributions \citep{SDL23, HMG24, DHM24, DNL24, CRD25} and Turing-type tests \citep{ABFG23, MXY24, JBe25} to evaluate LLM reliability. While these approaches provide valuable insights into LLM misalignment, their results can be hard to interpret and do not provide an operational guide for using an imperfect LLM in downstream tasks. In contrast, our work introduces a framework that measures simulation fidelity in a way that is immediately operational. By answering the question \emph{``How many humans is the LLM worth?''}, we provide a metric that is both an intuitive measure of the model's fidelity and a direct input for generating statistically sound inferences from its simulations. As a result, we obtain a data-driven selection rule for the synthetic sample size, which has been used as a knob for optimizing the bias-variance tradeoff in data augmentation \citep{SLS23,ORo25}.

Our work also connects to the broad literature on uncertainty quantification for stochastic simulation \citep{NPe21}. The uncertainty in the output of a simulation system is often decomposed into two primary sources: Monte Carlo error from a finite number of simulation runs, and input uncertainty that arises from using finite data to fit the input model \citep{CHo04,Lam16,BLS22}. While LLM's misalignment bears a resemblance to input uncertainty, it presents new and unique challenges due to the black-box nature. Classical approaches to input uncertainty, including the delta method, the bootstrap and Bayesian methods, often require access to model information such as gradients, or extensive model re-evaluation or retraining on i.i.d.~data. These conditions are violated for modern generative models, which typically have complex architectures, are accessed only through black-box interfaces, are trained on massive, non-i.i.d.~mixtures of real and synthetic data, and are prohibitively costly to retrain. In contrast, our method treats the LLM as a black box, acts as a post-processing step on the LLM's outputs, and does not require any model retraining. It jointly addresses the Monte Carlo error and the misalignment by selecting a simulation sample size that balances these two sources of uncertainty, and yields reliable confidence intervals.

The challenges of modern generative models have motivated a recent line of work on model-free statistical inference, including conformal inference \citep{VGF05, SVo08, BAL21, ABF24,KOR24} and prediction-powered inference \citep{ABF23}. At a high level, these methods use labeled data from the true distribution to calibrate imperfect point predictions from an arbitrary black-box model and then construct valid set estimates.
Our approach follows a similar spirit. The ``features'' and ``labels'' in our setting correspond to the survey questions and the population parameters of human responses, respectively.  However, our labels are not directly observable. As a result, the labeled calibration data needed for these methods is not available. Moreover, for every simulation sample size $k$, one can produce a point prediction of the label using $k$ synthetic responses generated by the LLM. As the optimal sample size is not known a priori, there are infinitely many candidate point predictions to choose from. This makes it difficult to apply existing statistical inference methods.

A preliminary version of this work \citep{Short25} appeared in the Forty-Second International Conference on Machine Learning (ICML 2025). The current version contains significantly expanded theoretical analysis and more in-depth discussions of $\widehat{k}$ as an effective human sample size (\myCref{sec-interpretations}), an additional case study on a concentration-based confidence interval (\myCref{sec-KL}), and more comprehensive numerical experiments (\myCref{sec-experiments}).

\paragraph{Outline.} The rest of the paper is organized as follows. \myCref{sec-warmup} studies binary response simulation as a motivating example. \myCref{sec-general} presents the general setup and methodology. \myCref{sec-interpretations} gives interpretations of the selected simulation sample size. \myCref{sec-experiments} illustrates our proposed method on real datasets. \myCref{sec-discussions} concludes the paper.

\paragraph{Notation.} We use $\ZZ_+$ to denote the set of positive integers. For $n\in\ZZ_+$, define $[n]=\{1,2,...,n\}$. For non-negative sequences $\{a_n\}_{n=1}^{\infty}$ and $\{b_n\}_{n=1}^{\infty}$, we write $a_n=O(b_n)$ if there exists $C>0$ such that for all $n$, it holds that $a_n \le C b_n$. We write $a_n = \Omega(b_n)$ if $b_n = O(a_n)$. We write $a_n = \Theta(b_n)$ if $a_n=O(b_n)$ and $a_n=\Omega(b_n)$. The notation $\Bernoulli(p)$ denotes the Bernoulli distribution with mean $p$. The notation $N(\mu,\sigma^2)$ denotes the normal distribution with mean $\mu$ and variance $\sigma^2$.
\section{Warm-up: Simulating a Binary Survey}\label{sec-warmup}

To motivate our problem and methodology, we will start with a simple setting where an LLM simulates binary responses to a survey question. In \myCref{sec-general}, we will present the general setup and the methodology.

\subsection{Motivating Example: Educational Test}\label{sec-example-education}

Consider a school aiming to estimate the proportion $\mu \in [0,1]$ of students that can correctly answer a new test question. This estimate is critical for evaluating both student progress and the question's effectiveness at differentiating between students with varying levels of understanding. These insights can in turn inform how the school can tailor its teaching strategies to better meet student needs.

The most direct approach is to give the test to $n$ students and collect their results $y_1,...,y_n\in\{0,1\}$, where $y_i$ indicates whether student $i$ answers the question correctly. A point estimate for $\mean$ is the sample mean $\responsebar = \frac{1}{n} \sum_{i=1}^n \response_i$. Given $\alpha\in(0,1)$, we can construct a confidence interval for $\mean$:
\begin{equation}\label{eqn-CI-intro-standard}
\left[ \responsebar - \frac{z_{\alpha/2}\cdot \samplesd}{\sqrt{n}}, ~  \responsebar + \frac{z_{\alpha/2}\cdot \samplesd}{\sqrt{n}} \right],
\end{equation}
where $\samplesd = \sqrt{\responsebar (1-\responsebar) }$ is the sample standard deviation, and $z_{\alpha/2}$ is the $(1-\alpha/2)$-quantile of the standard normal distribution $\Normal(0,1)$. By the Central Limit Theorem (CLT), this interval has asymptotic coverage probability $1-\alpha$ as $n\to\infty$. 

Alternatively, the school can use an LLM to simulate students' responses to the question. Compared with testing real students, this approach is more time- and cost-efficient. If we prompt the LLM $k$ times with random student profiles, then it generates $k$ synthetic responses, which leads to synthetic outcomes $\simresponse_1,...,\simresponse_k\in\{0,1\}$. We may then compute the synthetic sample mean $\simresponsebar_k = \frac{1}{k} \sum_{i=1}^k \simresponse_i$. However, as the LLM is generally misaligned with the true student population, $\simresponsebar_k$ can be a poor and unreliable estimate of $\mean$. To quantify its error and make reliable inference, one may form a CLT-based confidence interval for $\mean$:
\begin{equation}\label{eqn-CI-intro-sim}
\simCI(k) =  \left[ \, \simresponsebar_k - z_{\alpha/2}\cdot\simsamplesd_k\sqrt{\frac{C}{k}}, ~~
\simresponsebar_k + z_{\alpha/2}\cdot\simsamplesd_k\sqrt{\frac{C}{k}} \, \right],
\end{equation}
where $\simsamplesd_k = \sqrt{\simresponsebar_k (1-\simresponsebar_k)}$, and $C>1$ is a scaling parameter that dilates the width of the confidence interval. This dilation accounts for the fact that, when the LLM's response distribution deviates from the true population, an unscaled interval may fail to achieve the target $(1-\alpha)$ coverage probability, regardless of the sample size $k$; we give an example in \myCref{sec-impossibility-exact-CLT}. As we will show in \myCref{sec-interpretations}, the dilation factor $C$ also plays a crucial role in revealing the LLM's simulation fidelity.

The statistical validity of $\simCI(k)$ depends crucially on the simulation sample size $k$. As $k\to\infty$, the interval concentrates tightly around the synthetic mean $\simmean = \EE [\simresponse_1]$ and fails to cover the true mean $\mean$. On the other hand, when $k$ is small, the interval becomes too wide and uninformative, even though it may cover $\mean$ with high probability. We provide an illustration in \myCref{fig-tradeoff}.

\begin{figure}[h]
\FIGURE
{\includegraphics[scale=0.55]{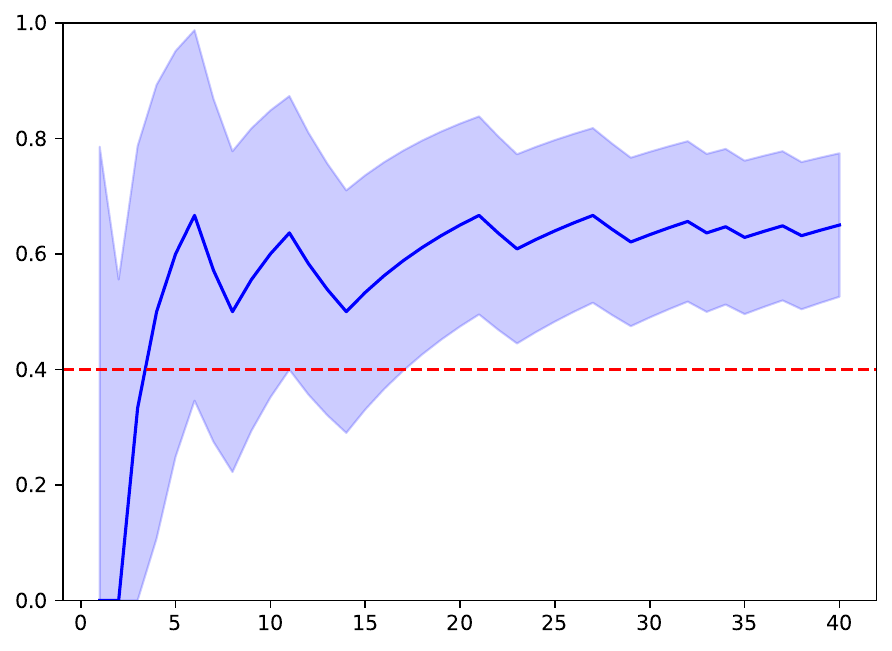}}
{The Coverage-Width Trade-off for the Simulation Sample Size $k$. \label{fig-tradeoff}}
{The true mean is $\mean=0.4$ (red dashed line), and the synthetic distribution has a mean of $\simmean=0.6$. The horizontal axis is the simulation sample size $k$. The blue curve plots the sample mean $\simresponsebar_k$ of the synthetic data, and the blue shaded region visualizes the confidence interval $\simCI(k)$, for $k\in[40]$. For a small sample size ($k\le 6$), the interval is too wide. For a large sample size (say $k\ge 18$), the interval becomes too narrow and fails to cover $\mean$.}
\end{figure}

To tackle the coverage-width trade-off illustrated in \myCref{fig-tradeoff}, we will develop a principled approach to selecting a simulation sample size $\widehat{k}$. Our approach will ensure that the resulting confidence interval $\simCI(\widehat{k})$ is both statistically valid (achieving target coverage $\PP\big(\mean\in\simCI(\widehat{k})\big)\approx 1-\alpha$) and practically informative (not excessively wide). The width of this interval quantifies the LLM's misalignment with the human population: a wide interval signifies a large gap $|\mean - \simmean|$ between the real and synthetic response distributions.

\subsection{Methodology for Selecting Simulation Sample Size}\label{sec-method-1D}

We now introduce our method for choosing a good simulation sample size $\widehat{k}$. The key idea is to leverage a set of test questions for which real students' results are available. By observing how well the LLM's synthetic confidence interval performs on this set of questions, we can guide the choice of $k$ for a new question that bears similarity to the existing ones.

Specifically, we assume access to a calibration set of $m$ similar test questions, which may come from previous tests or a question bank. For each calibration question $j=1,...,m$, we have access to $n_j$ real students' results $\dataset_j = \{ \response_{j,i} \}_{i=1}^{n_j}$. We also simulate LLM responses $\simdataset_j = \{ \simresponse_{j,i} \}_{i=1}^K$, where $K\in\ZZ_+$ is the simulation budget.

Similar to \eqref{eqn-CI-intro-sim}, we use the simulated responses $\simdataset_j$ to form confidence intervals, aiming to cover the true proportion $\mean_j$ of students that can answer the $j$-th question correctly. The first $k$ samples $\{\simresponse_{j,i}\}_{i=1}^k$ of $\simdataset_j$ yield a confidence interval
\begin{equation}\label{eqn-CI-intro-sim-calibrate}
\simCI_j(k) = \left[ \, \simresponsebar_{j,k} - z_{\alpha/2} \cdot \simsamplesd_{j,k} \sqrt{\frac{C}{k}}  ,  ~~
\simresponsebar_{j,k} + z_{\alpha/2} \cdot \simsamplesd_{j,k} \sqrt{\frac{C}{k}}  \, \right], 
\end{equation}
where $\simresponsebar_{j,k} = \frac{1}{k} \sum_{i=1}^k \simresponse_{j,i}$ is the sample mean, and $\simsamplesd_{j,k} = \sqrt{\simresponsebar_{j,k} ( 1 - \simresponsebar_{j,k} )}$ is the estimated standard deviation. We also set the convention $\simCI(0) = \simCI_j(0) = [0,1]$, as nothing can be said about the true parameter without data. We will pick $\widehat{k}\in\{0,1...,K\}$ such that $\simCI_j(\widehat{k})$ covers $\mean_j$ on most calibration questions $j\in[m]$. We expect this choice of $\widehat{k}$ to also work for $\simCI(k)$, as the test questions are similar.

Ideally, we would select $k$ such that the average miscoverage across the calibration set does not exceed the target level $\alpha$:
\begin{equation}\label{eqn-oracle-criterion-1D}
\frac{1}{m} \sum_{j=1}^m \ind \{ \mean_j \not\in \simCI_j(k) \} \le \alpha.
\end{equation}
As the true mean $\mean_j$ is not available, we use the real data $ \dataset_j$ to compute the sample average $\responsebar_j = \frac{1}{n_j} \sum_{i=1}^{n_j} \response_{j,i}$ as a proxy. This leads to the following miscoverage metric:
\begin{equation}\label{eqn-proxy-1D}
\coverage(k) = \frac{1}{m} \sum_{j=1}^m \ind \{ \responsebar_j \not\in \simCI_j(k) \}.
\end{equation}
Our selection criterion is to choose the largest sample size $k$ that maintains a low miscoverage rate across all smaller sample sizes:
\begin{equation}\label{eqn-empirical-criterion-1D}
\widehat{k} = \max \left\{ 0\le k \le K : \coverage(i) \le \alpha/2 ~~\forall i\le k \right\}.
\end{equation}
Note that $\widehat{k}$ is well-defined because $\coverage(0) = 0$ by convention.

The threshold $\alpha/2$ in \eqref{eqn-empirical-criterion-1D} is chosen to account for the additional sampling error introduced by using $\responsebar_j$ as a proxy of $\mean_j$. By CLT, when $n_j$ is large, the sample mean $\responsebar_j$ falls on the left and right sides of $\mean_j$ almost equally likely. When the synthetic confidence interval $\simCI_j(k)$ misses the true mean $\mean_j$, the sample mean $\responsebar_j$ has a $50\%$ chance of falling on the side of $\mean_j$ that is missed by the confidence interval as well. Thus, roughly speaking, the frequency of having $\responsebar_j \not\in \simCI_j(k)$ is at least half of the frequency of having $\mean_{j} \not\in \simCI_j(k)$. In other words, it approximately holds that
\begin{equation}\label{eqn-proxy-to-oracle-1D}
\coverage(k) \geq \frac{1}{2} \cdot \frac{1}{m} \sum_{j=1}^m \ind \{ \mean_{j} \not\in \simCI_j(k) \}.
\end{equation}
Substituting \eqref{eqn-proxy-to-oracle-1D} into \eqref{eqn-oracle-criterion-1D} yields the threshold $\alpha/2$ for choosing $\widehat{k}$.

\subsection{Theoretical Analysis}\label{sec-theory-1D}

In this section, we present a theoretical analysis of our proposed method. We begin by formalizing the setup in \myCref{sec-example-education} and \myCref{sec-method-1D} in mathematical terms.

We model the student population by a distribution $\distribution$ over a space $\profilespace$ of \emph{student profiles} (e.g., vectors of background information, classes taken, grades, etc.). To simulate student responses from the LLM, synthetic student profiles are drawn from a synthetic student population $\simdistribution$ over $\profilespace$, and then fed to the LLM.

We use $\testfunction$ and $\{ \testfunction_j \}_{j=1}^m$ to denote the new test question of interest and the $m$ calibration questions, respectively. 
Students' performance on test questions is characterized by a \emph{performance function} $\performancefunction$: a student with profile $\profile\in\profilespace$ answers a question $\testfunction$ correctly with probability $\performancefunction ( \profile , \testfunction ) \in [0, 1]$. The average student performance on the questions $\testfunction$ and $\{ \testfunction_j \}_{j=1}^m$ are then $\mean = \EE_{\profile \sim \distribution } \performancefunction ( \profile ,  \testfunction ) $ and $\mean_j = \EE_{\profile \sim \distribution } \performancefunction (  \profile, \testfunction_j ) $, respectively.
Similarly, the LLM generates synthetic student performance from a \emph{synthetic performance function} $\simperformancefunction$: when prompted with a synthetic profile $\simprofile\in\profilespace$, the LLM answers a question $\testfunction$ correctly with probability $\simperformancefunction ( \simprofile , \testfunction ) \in [0, 1]$. 

The collection of the real dataset $\dataset_j = \{ \response_{j,i} \}_{i=1}^{n_j}$ can be thought of as drawing $n_j$ i.i.d.~student profiles $\{ \profile_{j,i} \}_{i=1}^{n_j} \sim \distribution$ and then sampling $\response_{j,i} \sim \Bernoulli (  \performancefunction ( \profile_{j,i} , \testfunction_j  ) )$ for each $i\in[n_j]$. Similarly, the generation of the synthetic dataset $\simdataset_j = \{ \simresponse_{j,i} \}_{i=1}^{K}$ can be thought of as drawing i.i.d.~synthetic profiles $\{ \simprofile_{j,i} \}_{i=1}^{K} \sim \simdistribution$ and then sampling $\simresponse_{j,i} \sim \Bernoulli ( 
\simperformancefunction
( \simprofile_{j,i} , \testfunction_j  ) )$ for each $i\in [K]$. We denote the synthetic responses to the new question $\testfunction$ by $\simdataset = \{ \simresponse_i \}_{i=1}^K$. We note that when collecting real or synthetic samples, the performance functions $\performancefunction$ and $\simperformancefunction$ never appear explicitly. They are conceptual tools for the theoretical analysis only.

Finally, we assume that the test questions are drawn randomly from a question bank, and that the datasets are independent.

\begin{assumption}[Randomly sampled questions]\label{assumption-iid-test-1D}
The questions $\testfunction,\testfunction_1,...,\testfunction_m$ are independently sampled from a distribution $\Pi$ over a space $\testfunctionfamily$.
\end{assumption}

\begin{assumption}[Independent data]\label{assumption-indep-data-1D}
For each $j\in[m]$, conditioned on $\testfunction_j$, the datasets $\dataset_j$ and $\simdataset_j$ are independent. Conditioned on $\testfunction_1,...,\testfunction_m$, the dataset tuples $(\dataset_1,\simdataset_1),...,(\dataset_m,\simdataset_m)$ are independent. Finally, $(\testfunction,\simdataset)$ is independent of $\big\{ (\testfunction_j,\dataset_j,\simdataset_j) \big\}_{j=1}^m$.
\end{assumption}

We are now ready to state the theoretical guarantee of our approach. The assumption $\PP( \responsebar_j \le \mean_{j} \mid \testfunction_j ) \in [ \frac{1}{2} - \eta, \frac{1}{2} + \eta ]$ quantifies the CLT approximation that the sample mean $\responsebar_j$ falls on the left and right sides of $\mean_j$ almost equally likely.

\begin{theorem}[Coverage guarantee]\label{thm-coverage-1D}
Let Assumptions \myref{assumption-iid-test-1D} and \myref{assumption-indep-data-1D} hold. Assume that $\PP( \responsebar_j \le \mean_{j} \mid \testfunction_j ) \in [ \frac{1}{2} - \eta, \frac{1}{2} + \eta ]$ for each $j\in[m]$, where $\eta\in[0,1/2)$. Fix $\alpha\in(0,1)$. Then the simulation sample size $\widehat{k}$ defined by \eqref{eqn-empirical-criterion-1D} satisfies
\[
\PP\Big( \mean \in \simCI(\widehat{k}) \Big) \ge 1-(1-2\eta)^{-1} \bigg( \alpha + \sqrt{\frac{2}{m}} \bigg).
\]
The probability is taken with respect to the randomness of $\big\{ ( \testfunction_j , \dataset_j, \simdataset_j ) \big\}_{j=1}^m$, $\testfunction$ and $\simdataset$.
\end{theorem}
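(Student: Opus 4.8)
The plan is to convert the surely-satisfied empirical bound $\coverage(\widehat{k})\le\alpha/2$ into a coverage statement for the true mean $\mean$ on the new question, in three steps.

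\emph{Step 1: a pointwise proxy-to-oracle conversion.} Fix a deterministic sample size $k$ and a calibration index $j$. Since $\mean_j=\EE_{\profile\sim\distribution}\performancefunction(\profile,\testfunction_j)$ is a deterministic function of $\testfunction_j$, and by Assumption~\myref{assumption-indep-data-1D} the real data $\dataset_j$ is independent of the synthetic data $\simdataset_j$ given $\testfunction_j$, I would condition on $(\testfunction_j,\simdataset_j)$, which fixes both the interval $\simCI_j(k)$ and the target $\mean_j$. If $\mean_j\notin\simCI_j(k)$, then $\mean_j$ lies strictly on one side of the interval; whenever $\responsebar_j$ falls on that same side of $\mean_j$, it too is excluded from $\simCI_j(k)$. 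The assumption $\PP(\responsebar_j\le\mean_j\mid\testfunction_j)\in[\tfrac12-\eta,\tfrac12+\eta]$ forces each one-sided event to have conditional probability at least $\tfrac12-\eta$, so
\[
\EE\big[\ind\{\responsebar_j\notin\simCI_j(k)\}\mid\testfunction_j,\simdataset_j\big]\ge\big(\tfrac12-\eta\big)\,\ind\{\mean_j\notin\simCI_j(k)\}.
\]
Taking expectations and averaging over $j\in[m]$ gives, for the population quantities $\bar{\coverage}(k):=\EE[\coverage(k)]$ and $\gamma(k):=\PP(\mean_j\notin\simCI_j(k))$,
\[
\gamma(k)\le\big(\tfrac12-\eta\big)^{-1}\bar{\coverage}(k)=2(1-2\eta)^{-1}\bar{\coverage}(k).
\]
Because the tuples are i.i.d.\ across $j$ and share the law of $(\testfunction,\simdataset)$, this same $\gamma(k)$ equals $\PP(\mean\notin\simCI(k))$ for the new question.

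\emph{Step 2: removing the randomness of $\widehat{k}$ on the new question.} By Assumption~\myref{assumption-indep-data-1D}, $(\testfunction,\simdataset)$ is independent of the calibration data that determines $\widehat{k}$. Conditioning on $\widehat{k}=k$ therefore leaves the law of $(\testfunction,\simdataset)$ unchanged, so
\[
\PP\big(\mean\notin\simCI(\widehat{k})\big)=\EE\big[\gamma(\widehat{k})\big]\le 2(1-2\eta)^{-1}\,\EE\big[\bar{\coverage}(\widehat{k})\big].
\]
It then remains to show $\EE[\bar{\coverage}(\widehat{k})]\le\tfrac\alpha2+\tfrac1{\sqrt{2m}}$, since $2(1-2\eta)^{-1}(\tfrac\alpha2+\tfrac1{\sqrt{2m}})=(1-2\eta)^{-1}(\alpha+\sqrt{2/m})$ is exactly the target.

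\emph{Step 3: from the empirical to the population miscoverage at $\widehat{k}$.} By construction $\coverage(\widehat{k})\le\alpha/2$ holds surely, so
\[
\bar{\coverage}(\widehat{k})\le\tfrac\alpha2+\big(\bar{\coverage}(\widehat{k})-\coverage(\widehat{k})\big)\le\tfrac\alpha2+\max_{0\le k\le K}\big(\bar{\coverage}(k)-\coverage(k)\big)_+,
\]
and it suffices to establish the dimension-free maximal bound $\EE\big[\max_{0\le k\le K}(\bar{\coverage}(k)-\coverage(k))_+\big]\le\tfrac1{\sqrt{2m}}$. This is the crux, and the step I expect to be the main obstacle. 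For each fixed $k$, $\coverage(k)$ is an average of $m$ i.i.d.\ indicators with mean $\bar{\coverage}(k)$, so a single-$k$ deviation is $O(1/\sqrt{m})$; the difficulty is making this uniform over all $K+1$ candidate sample sizes without paying a $\sqrt{\log K}$ factor that a crude union bound would incur. I would instead control the one-sided supremum by an empirical-process (DKW-type) argument, arguing that the indicators $\ind\{\responsebar_j\notin\simCI_j(\cdot)\}$ indexed by $k$ form a class of low enough complexity that the deviation depends only on $m$. Once this uniform-in-$k$ concentration is in hand, the remaining combination is elementary.
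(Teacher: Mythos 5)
Your Steps 1 and 2 are sound and coincide with the paper's argument (the one-sided conditioning that yields $\PP(\responsebar_j\notin\simCI_j(k))\ge(\tfrac12-\eta)\,\PP(\mean\notin\simCI(k))$ is exactly the paper's inequality, and the independence of $\widehat{k}$ from $(\testfunction,\simdataset)$ is used in the same way). The genuine gap is in Step 3. The maximal inequality $\EE\big[\max_{0\le k\le K}(\bar{\coverage}(k)-\coverage(k))_+\big]\le\tfrac{1}{\sqrt{2m}}$ that you defer to an ``empirical-process (DKW-type) argument'' is not available here: the events $\{\responsebar_j\notin\simCI_j(k)\}$ are \emph{not} nested or monotone in $k$, because the interval $\simCI_j(k)$ is centered at $\simresponsebar_{j,k}$, which moves with $k$ even as the width shrinks. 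DKW-type dimension-free uniform bounds rely precisely on a nested (half-line) structure; for $K+1$ unstructured events the expected one-sided supremum of the $K+1$ centered empirical frequencies genuinely scales like $\sqrt{\log K/m}$, so your route cannot produce a $K$-independent $\sqrt{2/m}$ term without an additional structural argument that you have not supplied and that does not appear to exist.

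The paper avoids uniform-in-$k$ concentration entirely by exploiting the ``for all $i\le k$'' structure of the selection rule. It defines the deterministic first-failure index $\bar{k}=\inf\{k\in[K]:\PP(\mean\notin\simCI(k))>(1-2\eta)^{-1}(\alpha+\varepsilon)\}$ and observes that $\widehat{k}\ge\bar{k}$ would force $\coverage(\bar{k})\le\alpha/2$; hence it suffices to show $\coverage(\bar{k})>\alpha/2$ with probability $1-\delta$, which requires Hoeffding's inequality at the \emph{single} deterministic index $\bar{k}$ combined with your Step 1 bound. On the complement event $\widehat{k}<\bar{k}$, every $k<\bar{k}$ satisfies the population miscoverage bound by the definition of $\bar{k}$ as an infimum, giving a conditional coverage statement given $\widehat{k}$; the unconditional bound with the $\sqrt{2/m}$ constant then follows by integrating the resulting exponential tail over $\delta$. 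If you want to salvage your decomposition, you should replace the uniform maximal inequality by this single-index argument.
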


\begin{proof}[Proof of \myCref{thm-coverage-1D}]
See \myCref{sec-thm-coverage-1D-proof}.
\end{proof}

\myCref{thm-coverage-1D} provides a coverage guarantee of our selected confidence interval $\simCI(\widehat{k})$. If the CLT approximation is accurate ($\eta\approx0$), then $\simCI(\widehat{k})$ that covers the true mean $\mean$ with probability at least $1-\alpha-O(\sqrt{1/m})$, which converges to the desired $1-\alpha$ coverage as the number of calibration questions $m$ grows. This result establishes the statistical validity of our approach. We later complement this in \myCref{cor-sharpness} by showing that the selected confidence interval is not overly conservative and has near-optimal width.

\section{General Framework and Methodology}\label{sec-general}

In this section, we extend our methodology in \myCref{sec-warmup} to a general framework that applies to multi-dimensional survey responses and any method for constructing confidence sets.

\subsection{Problem Formulation}

Let $\profilespace$ be a space of individual profiles. The true human population is described by a probability distribution $\distribution$ over $\profilespace$. When a person with profile $\profile\in\profilespace$ is given a survey question $\testfunction$, they provide a response $\response$ from a response distribution $\responsedist( ~ \cdot \mid \profile, \testfunction)$ over a response space $\responsespace$. 

Our goal is to construct a confidence set for a parameter $\statistic(\testfunction)$ of the population's response distribution $\responsedistalt(~ \cdot \mid \testfunction )$, which is the average of the individual response distribution over the entire population: $
\responsedistalt(~ \cdot \mid \testfunction ) = \int_{\profilespace} \responsedist( ~ \cdot \mid \profile, \testfunction) \, \distribution(d\profile)$.
For instance, in the educational test example in \myCref{sec-warmup}, the response space is $\responsespace=\{0,1\}$, and $\statistic(\testfunction)$ is the proportion of students that can correctly answer test question $\testfunction$. More generally, the parameter $\statistic(\testfunction)$ can be multi-dimensional and/or continuous-valued. We give several examples below.

\begin{example}[Public opinion survey]\label{example-public-survey}
Consider a public opinion survey where a survey question $\testfunction$ has $5$ options, such as ``How often do you talk to your neighbors?''. An individual profile $\profile \in \profilespace$ may consist of age, gender, occupation, etc. The response space $\responsespace$ can be represented by a $5$-element set $\{e_i\}_{i=1}^5$, where a response $\response = e_i$ indicates that a person chooses the $i$-th option. A parameter of interest $\statistic ( \testfunction )$ is a $5$-dimensional vector that summarizes the proportion of the population choosing each of the $5$ options in the question $\testfunction$.
\end{example}

\begin{example}[Sentiment in opinion survey]\label{example-public-survey-1D}
Consider the setup in \myCref{example-public-survey}. When the $5$ choices in a survey question correspond to ordered sentiments (e.g., on a Likert scale), we can map them to numeric scores, say, $v = (-1,-\frac{1}{3},0,\frac{1}{3},1)^\top$. Then the parameter $\widetilde{\statistic} (\testfunction) = \langle v,\statistic (\testfunction)  \rangle$ measures the population's average sentiment for the question $\testfunction$.
\end{example}

\begin{example}[Market research]
Suppose a company wants to estimate its customers' willingness-to-pay (WTP) for a new product. Here, each survey question $\testfunction$ is associated with a product, and a response $\response$ is a customer's stated WTP. An important parameter $\statistic(\testfunction)$ can be the median WTP or another quantile of the WTP distribution across the customers. For a given quantile level $\tau\in(0,1)$, the $\tau$-quantile is defined by $\statistic (\testfunction) = \inf \left\{ q\in[0,\infty) : \PP_{\response \sim \responsedistalt(\cdot\mid \testfunction )} (\response\le q) \ge \tau \right\}$.
\end{example}

We consider constructing a confidence set for the population parameter $\statistic (\testfunction)$ by using simulated responses from an LLM. This confidence set is intended to provide valid inference for the true human population parameter, while quantifying the uncertainty induced by human-LLM misalignment. Given a profile $\profile$, a survey question $\testfunction$ and a prompt $\prompt$, the LLM simulates a response $\simresponse$ from a distribution $\simresponsedist(~\cdot\mid \profile , \testfunction, \prompt)$ which aims to mimic $\responsedist(~\cdot\mid \profile , \testfunction)$. We generate i.i.d.~synthetic profiles $\{ \simprofile_i \}_{i=1}^K$ from a distribution $\simdistribution$, then feed them into the LLM along with $\testfunction$ and $\prompt$. The LLM then generates synthetic responses $\{ \simresponse_i \}_{i=1}^K$, where $\simresponse_i \sim \simresponsedist(~\cdot\mid \simprofile_i , \testfunction, \prompt)$. Here $K\in\ZZ_+$ is the simulation budget.

Using the simulated samples $\simdataset = \{ \simresponse_i \}_{i=1}^K$, we can construct candidate confidence sets. While our motivating example in \myCref{sec-warmup} uses the one-dimensional CLT-based confidence interval \eqref{eqn-CI-intro-sim}, our framework is general and accommodates standard confidence set construction techniques, such as inverting hypothesis tests \citep{CBe02}, the bootstrap \citep{Efr79}, and the empirical likelihood ratio function \citep{OWe90}. We will assume access to a black-box procedure $\setmap$ that transforms a dataset $\dataset$ into a confidence set $\setmap(\dataset)\subseteq\RR^d$, where $d$ is the dimension of $\statistic(\testfunction)$. Applying this procedure $\setmap$ to the synthetic data $\simdataset$ yields a family of candidate confidence sets $\{\simCIalt(k)\}_{k=1}^{K}$ by
\begin{equation}\label{eqn-CI-sim}
\simCIalt(k) = \setmap \big( \{ \simresponse_i \}_{i=1}^k \big).
\end{equation}
We also set $\simCIalt(0) = \RR^d$, so $\statistic ( \testfunction ) \in \simCIalt(0)$ always.

As the LLM may not be a faithful reflection of the true human population, the core challenge is to select a sample size $k$ such that $\simCIalt(k)$ covers $\statistic(\testfunction)$ with high probability. As in \myCref{sec-method-1D}, we calibrate the choice of $k$ using a calibration set of $m$ similar survey questions $\{\testfunction_j\}_{j=1}^m$, where question $j$ contains $n_j$ real human responses $\dataset_j = \{ \response_{j,i} \}_{i=1}^{n_j}$ from surveyees $\{ \profile_{j,i} \}_{i=1}^{n_j} \sim \distribution$ . We also feed synthetic profiles $\{ \simprofile_{j,i} \}_{i=1}^K\sim\simdistribution$, the question $\testfunction_j$ and the prompt $\prompt$ into the LLM, which then simulates responses $\simdataset_j = \{ \simresponse_{j,i} \}_{i=1}^{K}$ with $\simresponse_{j,i} \sim \simresponsedist (~\cdot \mid \simprofile_{j,i}, \testfunction_j, \prompt )$. We now formally state our central problem of simulation sample size selection.

\begin{problem}[Uncertainty quantification]\label{problem-general}
Given $\alpha \in (0,1)$, how to use $\{ \dataset_j \}_{j=1}^m$ and $\{ \simdataset_j \}_{j=1}^m$ to choose $\widehat{k}\in\{0,1,...,K\}$ such that
\[
\PP \Big( \statistic ( \testfunction ) \in \simCIalt(\widehat{k}) \Big) \approx 1-\alpha?
\]
\end{problem}

\subsection{General Methodology for Sample Size Selection}

We will now extend the method in \myCref{sec-method-1D} to tackle Problem \myref{problem-general}, where the population parameter $\statistic(\testfunction)$ may be multi-dimensional and confidence sets are produced by an arbitrary procedure $\setmap$. 

For each calibration question $j\in[m]$, applying $\setmap$ to the first $k$ samples of the synthetic responses $\simdataset_j$ yields a confidence set
\begin{equation}\label{eqn-CI-sim-calibrate}
\simCIalt_j(k) = \setmap \left( \{ \simresponse_{j,i} \}_{i=1}^k \right),\quad \forall k\in[K].
\end{equation}
We also set $\simCIalt_j(0) = \RR^d$. Our goal is to choose a sample size $\widehat{k}\in\{0,1,...,K\}$ such that $\simCIalt_j(\widehat{k})$ covers $\statistic (\testfunction_j)$ for most calibration questions $j\in[m]$.

Analogously to \eqref{eqn-oracle-criterion-1D}, an ideal choice of $k$ would give a miscoverage rate no more than the target level $\alpha$ over the $m$ calibration questions:
\begin{equation}\label{eqn-oracle-criterion}
\frac{1}{m} \sum_{j=1}^m \ind \{  \statistic( \testfunction_j )  \not\in \simCIalt_j(k) \} \le \alpha.
\end{equation}
Of course, the true parameters $ \statistic( \testfunction_j )$ are unknown, so \eqref{eqn-oracle-criterion} cannot be evaluated directly. In \myCref{sec-method-1D}, we approximated the unknown means by unbiased point estimates, but it does not easily extend to more general, multi-dimensional parameters $\statistic(\testfunction_j)$.

Instead of using a point estimate, we will construct a \emph{confidence set estimate} for $\statistic( \testfunction_j )$. Fix a confidence level $\gamma\in(0,1)$. We use the real data $\dataset_j$ to construct a confidence set $\CIalt_j$ that satisfies
\begin{equation}\label{eqn-CI-calibrate}
\PP\Big( \statistic ( \testfunction_j)  \in \CIalt_j \Bigm| \testfunction_j \Big) \ge \gamma.
\end{equation}
These confidence sets are easy to construct as the samples in $\dataset_j$ follow the true response distribution. 

We then approximate the coverage criterion $\statistic( \testfunction_j )  \in \simCIalt_j(k)$ by a set-containment test $\CIalt_j \subseteq \simCIalt_j(k)$. When $\statistic( \testfunction_j ) \in \CIalt_j$, the condition $\CIalt_j \subseteq \simCIalt_j(k)$ is sufficient for $\statistic( \testfunction_j )  \in \simCIalt_j(k)$. This motivates the following empirical miscoverage metric, which is defined as the frequency that the real-data confidence set $\CIalt_j$ fails to be contained by the synthetic confidence set $\simCIalt_j(k)$:
\begin{equation}\label{eqn-proxy}
\coveragealt(k) = \frac{1}{m} \sum_{j=1}^m \ind \{ \CIalt_j\not\subseteq \simCIalt_j(k) \}.
\end{equation}
Since $\statistic( \testfunction_j ) \in \CIalt_j$ with probability at least $\gamma$, then the frequency of having $\CIalt_j \not\subseteq \simCIalt_j(k)$ is at least $\gamma$ times the frequency of $\statistic( \testfunction_j ) \not\in \simCIalt_j(k)$. Thus, roughly speaking, 
\begin{equation}\label{eqn-proxy-to-oracle}
\coveragealt(k) \ge \gamma \cdot\left( \frac{1}{m} \sum_{j=1}^m \ind \{  \statistic( \testfunction_j )  \not\in \simCIalt_j(k) \} \right).
\end{equation}

Combining \eqref{eqn-oracle-criterion} and \eqref{eqn-proxy-to-oracle} leads to the sample size selection criterion
\begin{equation}\label{eqn-empirical-criterion}
\widehat{k} = \max \left\{ 0\le k \le K : ~ \coveragealt(i) \le \gamma\alpha, ~\forall i\le k \right\}.
\end{equation}
Note that $\widehat{k}$ is well-defined because $\coveragealt(0) = 0$. The full procedure is summarized in \myCref{alg-general}. 

\begin{algorithm}[t]
	{\bf Input:} Survey questions with real and simulated responses $\big\{(\testfunction_j, \dataset_j, \simdataset_j)\big\}_{j=1}^m$, target miscoverage probability $\alpha$, confidence set construction procedure $\setmap$, confidence level $\gamma$, simulation budget $K$. \\
	{\bf For $j=1,...,m$:} \\
		\hspace*{.6cm} Use $\setmap$ and $\simdataset_j$ to construct synthetic confidence sets $\{\simCIalt_j(k)\}_{k=0}^K$ according to \eqref{eqn-CI-sim-calibrate}. \\
		\hspace*{.6cm} Use $\dataset_j$ to construct a confidence set $\CIalt_j$ satisfying \eqref{eqn-CI-calibrate}.\\
	Define the empirical miscoverage metric
	\[
	\coveragealt(k) = \frac{1}{m} \sum_{j=1}^m \ind \{ \CIalt_j\not\subseteq \simCIalt_j(k) \}.
	\]
	Choose $\widehat{k} = \max \left\{ 0\le k \le K :~ \coveragealt(i) \le \gamma\alpha ,~\forall i\le k \right\}$. \\
	{\bf Output:} Sample size $\widehat{k}$.\caption{Simulation Sample Size Selection}
	\label{alg-general}
\end{algorithm}

We now present the coverage guarantee for our method, which shows that the chosen confidence set $\simCIalt(\widehat{k})$ has coverage probability at least $1-\alpha-O(1/\sqrt{m})$.

\begin{theorem}[Coverage guarantee]\label{thm-coverage}
Let Assumptions \myref{assumption-iid-test-1D} and \myref{assumption-indep-data-1D} hold. Fix $\alpha\in(0,1)$. The sample size $\widehat{k}$ selected by \myCref{alg-general} satisfies
\[
\PP\Big( \statistic ( \testfunction ) \in \simCIalt(\widehat{k}) \Big) \ge 1-\alpha - \gamma^{-1} \sqrt{\frac{1}{2m}}.
\]
\end{theorem}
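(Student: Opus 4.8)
The plan is to reduce the coverage statement for the fresh question to a purely population-level inequality, and then to control the data-dependent index $\widehat{k}$ through a \emph{single} Hoeffding bound applied at a deterministic level-crossing index, thereby avoiding any union bound over $k$.

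First I would exploit independence. Note that $\widehat{k}$ is measurable with respect to the calibration data $\{(\testfunction_j,\dataset_j,\simdataset_j)\}_{j=1}^m$, whereas by \myCref{assumption-indep-data-1D} the pair $(\testfunction,\simdataset)$ that determines $\statistic(\testfunction)$ and $\simCIalt(\widehat{k})$ is independent of that data. Writing $g(k):=\PP(\statistic(\testfunction)\notin\simCIalt(k))$ and conditioning on the calibration data, this independence gives $\PP(\statistic(\testfunction)\notin\simCIalt(\widehat{k}))=\EE[g(\widehat{k})]$. Moreover, since $\testfunction,\testfunction_1,\dots,\testfunction_m$ are i.i.d.\ by \myCref{assumption-iid-test-1D} and the synthetic responses are generated identically, $(\testfunction,\simdataset)$ and $(\testfunction_j,\simdataset_j)$ share the same law, so $g(k)=\PP(\statistic(\testfunctionalt_j)\notin\simCIalt_j(k))$ for every $j$.

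Next I would make \eqref{eqn-proxy-to-oracle} rigorous at the population level. Set $\ell(k):=\EE[\coveragealt(k)]=\PP(\CIalt_j\not\subseteq\simCIalt_j(k))$. The deterministic implication ``$\statistic(\testfunctionalt_j)\in\CIalt_j$ and $\CIalt_j\subseteq\simCIalt_j(k)$ together force $\statistic(\testfunctionalt_j)\in\simCIalt_j(k)$'' yields the pointwise bound $\ind\{\CIalt_j\not\subseteq\simCIalt_j(k)\}\ge\ind\{\statistic(\testfunctionalt_j)\notin\simCIalt_j(k)\}\,\ind\{\statistic(\testfunctionalt_j)\in\CIalt_j\}$. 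Taking the conditional expectation given $(\testfunctionalt_j,\simdataset_j)$ and using that $\dataset_j$ is conditionally independent of $\simdataset_j$ given $\testfunctionalt_j$ together with the $\gamma$-coverage \eqref{eqn-CI-calibrate}, I obtain $\ell(k)\ge\gamma\,g(k)$, i.e.\ $g(k)\le\gamma^{-1}\ell(k)$ for every $k$. Evaluating at the random index $\widehat{k}$ and taking expectations gives $\EE[g(\widehat{k})]\le\gamma^{-1}\EE[\ell(\widehat{k})]$.

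It then remains to show $\EE[\ell(\widehat{k})]\le\gamma\alpha+\sqrt{1/(2m)}$, which I expect to be the \emph{main obstacle}: because $\widehat{k}$ is selected through the empirical $\coveragealt$ in \eqref{eqn-empirical-criterion}, a naive uniform comparison of $\coveragealt(k)$ with $\ell(k)$ over all $k\in\{0,\dots,K\}$ would incur a spurious $\sqrt{\log K}$ factor. The key idea I would use is to integrate over levels and linearize the selection. Since $x\le x_+$, it suffices to bound $\EE[(\ell(\widehat{k})-\gamma\alpha)_+]=\int_0^\infty\PP(\ell(\widehat{k})>\gamma\alpha+t)\,dt$. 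For each fixed $t>0$ introduce the \emph{deterministic} first-crossing index $\kappa(t):=\min\{0\le k\le K:\ell(k)>\gamma\alpha+t\}$ (when no such index exists the event is empty). On $\{\ell(\widehat{k})>\gamma\alpha+t\}$ one has $\widehat{k}\ge\kappa(t)$; since the feasible set in \eqref{eqn-empirical-criterion} is a prefix, $\widehat{k}\ge\kappa(t)$ forces $\coveragealt(\kappa(t))\le\gamma\alpha<\ell(\kappa(t))-t$. Because $\kappa(t)$ is deterministic, $\coveragealt(\kappa(t))$ is an average of $m$ i.i.d.\ $[0,1]$-valued indicators with mean $\ell(\kappa(t))$, so a single application of Hoeffding's inequality gives $\PP(\ell(\widehat{k})>\gamma\alpha+t)\le\PP(\coveragealt(\kappa(t))-\ell(\kappa(t))<-t)\le e^{-2mt^2}$. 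Integrating over $t$ yields $\EE[(\ell(\widehat{k})-\gamma\alpha)_+]\le\int_0^\infty e^{-2mt^2}\,dt=\sqrt{\pi/(8m)}\le\sqrt{1/(2m)}$. Assembling the three steps, $\PP(\statistic(\testfunction)\in\simCIalt(\widehat{k}))=1-\EE[g(\widehat{k})]\ge1-\gamma^{-1}\EE[\ell(\widehat{k})]\ge1-\alpha-\gamma^{-1}\sqrt{1/(2m)}$. The one point requiring care is the containment chain $\{\ell(\widehat{k})>\gamma\alpha+t\}\subseteq\{\widehat{k}\ge\kappa(t)\}\subseteq\{\coveragealt(\kappa(t))\le\gamma\alpha\}$, together with the conventions $\simCIalt_j(0)=\RR^d$ and $\coveragealt(0)=0$ that make both $\widehat{k}$ and $\kappa(t)$ well defined.
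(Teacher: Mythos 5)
Your proposal is correct and follows essentially the same route as the paper's proof: both reduce to a single Hoeffding bound at a deterministic first-crossing index (your $\kappa(t)$ is the paper's oracle index $\bar{k}$ with $t=\sqrt{\log(1/\delta)/(2m)}$), both exploit the prefix structure of the selection rule and the factor-$\gamma$ inequality \eqref{eqn-proxy-to-oracle}, and both integrate over the level at the end. The only difference is bookkeeping: the paper packages the intermediate step as a conditional coverage lemma indexed by $\delta$, whereas you integrate $\EE[(\ell(\widehat{k})-\gamma\alpha)_+]$ over $t$ directly.
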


\begin{proof}[Proof of \myCref{thm-coverage}]
See \myCref{sec-thm-coverage-proof}.
\end{proof}

Remarkably, this coverage guarantee holds for any LLM and any confidence set construction procedure $\setmap$, regardless of their qualities. Nevertheless, the size of the chosen confidence set $\simCIalt(\widehat{k})$ depends on these factors. For example, if the LLM is poorly aligned, then the resulting confidence set $\simCIalt(\widehat{k})$ will inevitably be wide in order to account for the large misalignment gap.

\section{How Many Human Samples is an LLM Worth?}\label{sec-interpretations}

In the previous sections, we proposed a data-driven method for selecting a simulation sample size $\widehat{k}$ that ensures a statistically valid confidence interval. In this section, we advance the interpretation of $\widehat{k}$ further, showing that beyond its role for uncertainty quantification, it quantifies the \emph{effective number of the human samples} that the LLM can represent. For clarity and to distill the core insights, our analysis will focus on the one-dimensional binary response setting in \myCref{sec-warmup}.

We develop our argument in three steps. In \myCref{sec-MTurk}, we introduce a conceptual analogy that compares an LLM to a ``Mechanical Turk'' composed of $\kappa$ hidden human agents, where a larger $\kappa$ indicates higher simulation fidelity. In \myCref{sec-MTurk-IT}, we formalize this intuition, by showing that the ``hidden population size'' $\kappa$ is inversely proportional to an information-theoretic measure (the $\chi^2$-divergence) between the real and synthetic response distributions. Consequently, a larger $\kappa$ implies a smaller distribution discrepancy. Finally, in \myCref{sec-effective-sample-size}, we prove that the sample size $\widehat{k}$ selected by our method is also inversely proportional to the $\chi^2$-divergence, and thus $\widehat{k}\approx \kappa$. This establishes our central claim that $\widehat{k}$ is an effective human sample size and a fidelity measure of the LLM.

\subsection{LLM as a Mechanical Turk}\label{sec-MTurk}

\begin{figure}[h]
\FIGURE
{\includegraphics[scale=0.35]{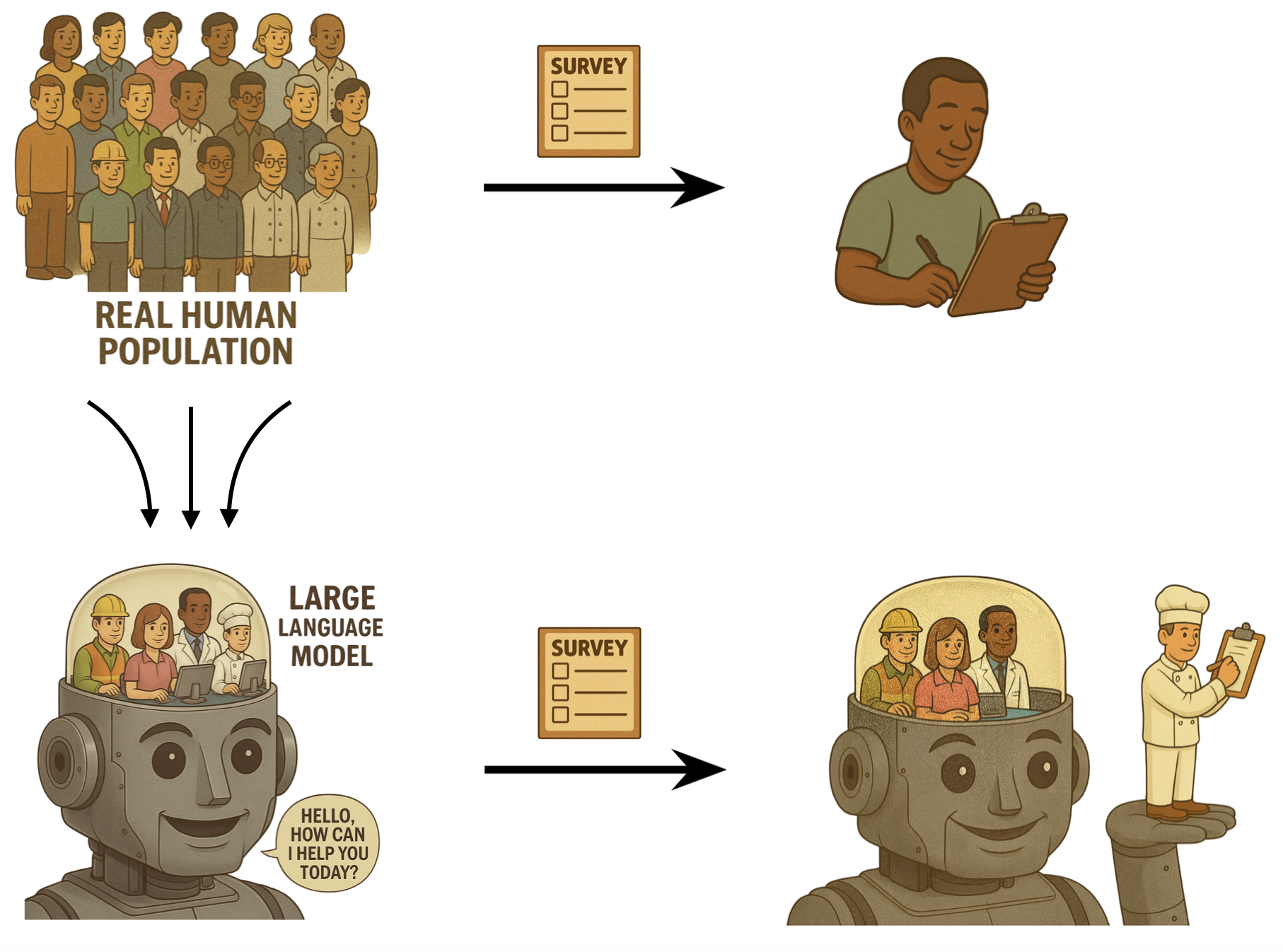}}
{The LLM as a Mechanical Turk. \label{fig-human-LLMTurk}}
{The top panel illustrates the traditional survey process, where an individual is sampled from the real human population to provide a response. The bottom panel depicts our conceptual model, in which the LLM is modeled as containing a hidden pool of agents drawn from that same population. When the LLM is queried, its response is generated by one of these internal agents. This figure is created based on images generated by ChatGPT 5 \citep{GPT5}.}
\end{figure}

We begin by introducing a conceptual analogy: LLM as a ``Mechanical Turk.'' Imagine that the LLM is a call center containing a hidden pool of $\kappa$ real human agents randomly sampled from the true population. When the LLM is queried, it randomly selects one of the agents and returns their response. Thus, the LLM effectively simulates a subpopulation formed by these $\kappa$ agents. \myCref{fig-human-LLMTurk} provides a visualization for this analogy.

Intuitively, the size of the hidden pool $\kappa$ reflects the LLM's simulation fidelity. A larger $\kappa$ implies that the LLM captures richer information about the true population, whereas a small $\kappa$ suggests that the LLM only reflects a non-representative subgroup.

We can formalize this analogy by defining the LLM's response distribution under this model. Recall that for a given survey question $\testfunction$, the true human response distribution $\responsedistalt(~ \cdot \mid \testfunction)$ is a Bernoulli distribution with mean $\mean(\testfunction) = \EE_{\profile\sim\distribution} \performancefunction ( \profile , \testfunction )$:
\[
\responsedistalt(~ \cdot \mid \testfunction) = \Bernoulli\left( \mean(\testfunction) \right).
\] 
The LLM's synthetic response distribution, which we denote by $\simresponsedistalt(~ \cdot \mid \testfunction)$, is a Bernoulli distribution with mean $\simmean (\testfunction) = \EE_{\simprofile\sim\simdistribution} \simperformancefunction ( \simprofile, \testfunction)$:
\[
\simresponsedistalt(~ \cdot \mid \testfunction)
= 
\Bernoulli \left( \simmean (\testfunction) \right).
\]
The Mechanical Turk model posits that the LLM is made up of $\kappa$ hidden human agents, with i.i.d.~profiles $\profile_1,...,\profile_{\kappa} \sim \distribution$. In this case, the synthetic response distribution is the average response distribution of these $\kappa$ agents, which has mean $\widehat{\mean}_{\kappa}(\testfunction) = \frac{1}{\kappa} \sum_{i=1}^{\kappa} \performancefunction(\profile_i, \testfunction)$. We formally state the model as follows.

\begin{assumption}[Mechanical Turk model]\label{assumption-MTurk}
It holds that $\simmean (\testfunction) = \widehat{\mean}_{\kappa}(\testfunction)$.
\end{assumption}

We refer to $\kappa$ as the \emph{hidden population size}. It reflects the amount of information about the human population captured by the LLM. If we were to query this model an infinite number of times, then we would keep resampling each of the $\kappa$ human agents, and precisely learn the mean response $\widehat{\mean}_{\kappa}(\testfunction)$ of this small, unobserved sample of $\kappa$ humans. This is similar to the bootstrap in statistics; we will discuss their relation in \myCref{sec-bootstrap}.

Our discussion so far has remained a thought experiment. In what follows, we will mathematically formalize the relation between $\kappa$ and the LLM's simulation fidelity, and show that the simulation sample size $\widehat{k}$ can recover $\kappa$.

\subsection{Information-Theoretic Characterization of Hidden Population Size}\label{sec-MTurk-IT}

In this section, we formalize the intuition that a larger hidden population size $\kappa$ indicates higher simulation fidelity, by showing that $\kappa$ is inversely proportional to an information-theoretic discrepancy measure between the real and synthetic response distributions. The discrepancy measure we use is the \emph{$\chi^2$-divergence} between the real response distribution $\Bernoulli\left( \mean(\testfunction) \right)$ and the synthetic response distribution $\Bernoulli\left( \simmean(\testfunction) \right)$:
\[
\discchi(\testfunction)
=
\frac{\left| \mean(\testfunction) - \simmean(\testfunction) \right|^2}{\simmean(\testfunction)\big(1-\simmean(\testfunction) \big)}
\]
over a survey question $\testfunction$. This metric measures the squared difference between the true and synthetic mean responses, normalized by the variance of the synthetic response distribution. It penalizes misalignment more heavily when the synthetic distribution is nearly deterministic (that is, when $\simmean(\testfunction)$ is near $0$ or $1$); in that case, the LLM almost always generates the dominant response and fails to simulate the heterogeneity of the human population.

The metric $\discchi(\testfunction)$ measures the LLM's misalignment on the survey question $\testfunction$. As $\testfunction\sim\Pi$ is drawn randomly, we are interested in the typical level of $\discchi(\testfunction)$ across questions, or mathematically, its \emph{quantile}.

\begin{definition}[Quantile]
Let $\alpha\in(0,1)$. The \textbf{$(1-\alpha)$-quantile} of a random variable $X$ is defined by
\[
\quantile_{1-\alpha}(X) = \inf\left\{ x\in\RR : \PP(X\le x) \ge 1-\alpha \right\}.
\]
\end{definition}

The quantile $\quantile_{1-\alpha}(\discchi(\testfunction))$ measures the discrepancy of the real and synthetic response distributions over \emph{the most common $(1-\alpha)$ proportion} of the questions. Under the Mechanical Turk model (Assumption \myref{assumption-MTurk}) where the LLM consists of $\kappa$ real humans $\profiles_{\kappa} = \{\profile_i\}_{i=1}^{\kappa}$, we use $\quantile_{1-\alpha}\left( \discchi(\testfunction) \mid \profiles_{\kappa} \right)$ to denote the conditional $(1-\alpha)$-quantile of $\discchi(\testfunction)$ given $\profiles_{\kappa}$.

\myCref{thm-kappa-IT-chi} below shows that with high probability, $\quantile_{1-\alpha}\left( \discchi(\testfunction) \mid \profiles_{\kappa} \right) \lesssim 1/\kappa$. Therefore, a larger hidden population size $\kappa$ implies a smaller distribution discrepancy, which is consistent with the intuition in \myCref{sec-MTurk}.

\begin{theorem}[Information-theoretic characterization of $\kappa$]\label{thm-kappa-IT-chi}
Let Assumption \myref{assumption-MTurk} hold. Suppose there exists $r\in(0,1/2]$ such that $\mean(\testfunction)\in[r,1-r]$ for all $\testfunction\in\testfunctionfamily$. Let $\delta\in(0,1)$. There exist $\kappa_0(\alpha,\delta,r)$ and $c(\alpha,\delta,r)$ depending on $\alpha, \delta, r$ such that when $\kappa \ge \kappa_0(\alpha,\delta,r)$, the following holds with probability at least $1-\delta$:
\[
\quantile_{1-\alpha} \big( \discchi(\testfunction) \mid \profiles_{\kappa} \big) \le \frac{c(\alpha,\delta,r)}{\kappa}.
\]
Here the probability is taken over the randomness of $\profiles_{\kappa} = \{\profile_i\}_{i=1}^{\kappa}$.
\end{theorem}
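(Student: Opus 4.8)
The plan is to reduce the high-probability statement about a conditional quantile to a single \emph{joint} tail estimate. Observe first that, for any threshold $t$, the event $\quantile_{1-\alpha}(\discchi(\testfunction)\mid\profiles_{\kappa})\le t$ is equivalent to $\PP_{\testfunction}(\discchi(\testfunction)>t\mid\profiles_{\kappa})\le\alpha$. Writing $p(\profiles_{\kappa})=\PP_{\testfunction}(\discchi(\testfunction)>t\mid\profiles_{\kappa})$ and using that the hidden profiles $\profiles_{\kappa}$ are drawn independently of the question $\testfunction\sim\Pi$, the tower property gives $\EE_{\profiles_{\kappa}}[p(\profiles_{\kappa})]=\PP_{\profiles_{\kappa},\testfunction}(\discchi(\testfunction)>t)$. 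Markov's inequality then yields $\PP_{\profiles_{\kappa}}(p(\profiles_{\kappa})>\alpha)\le\EE[p]/\alpha$, so it suffices to prove that the joint tail satisfies $\PP_{\profiles_{\kappa},\testfunction}(\discchi(\testfunction)>t)\le\alpha\delta$ with $t=c/\kappa$; this forces $p(\profiles_{\kappa})\le\alpha$, i.e.\ the claimed quantile bound, on an event of probability at least $1-\delta$.

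It remains to bound the joint tail, which I would do by conditioning on $\testfunction$ and exploiting that $\simmean(\testfunction)=\widehat{\mean}_{\kappa}(\testfunction)$ is an average of $\kappa$ i.i.d.\ $[0,1]$-valued terms with mean $\mean(\testfunction)$. The main obstacle is the coupling between numerator and denominator in $\discchi(\testfunction)=|\mean(\testfunction)-\simmean(\testfunction)|^2/[\simmean(\testfunction)(1-\simmean(\testfunction))]$: a direct bound on the numerator is useless if $\simmean(\testfunction)$ drifts toward $0$ or $1$ and the denominator collapses. I would resolve this through a good-event decomposition. Let $E=\{|\simmean(\testfunction)-\mean(\testfunction)|\le r/2\}$. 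Since $\mean(\testfunction)\in[r,1-r]$, on $E$ we have $\simmean(\testfunction)\in[r/2,1-r/2]$, so by concavity of $x\mapsto x(1-x)$ the denominator obeys $\simmean(\testfunction)(1-\simmean(\testfunction))\ge c_r:=(r/2)(1-r/2)$. Hence on $E$ the event $\{\discchi(\testfunction)>t\}$ forces $|\mean(\testfunction)-\simmean(\testfunction)|>\sqrt{c_r t}$, giving
\[
\PP(\discchi(\testfunction)>t\mid\testfunction)\le\PP\big(|\simmean(\testfunction)-\mean(\testfunction)|>r/2\mid\testfunction\big)+\PP\big(|\simmean(\testfunction)-\mean(\testfunction)|>\sqrt{c_r t}\,\big|\,\testfunction\big).
\]

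The two terms play distinct roles and are both handled by Hoeffding's inequality applied to the average $\simmean(\testfunction)$ of $\kappa$ bounded variables; crucially, the resulting bounds depend on $\testfunction$ only through $c_r$, hence are uniform over $\testfunctionfamily$ and survive integration against $\Pi$. The first term is at most $2\exp(-\kappa r^2/2)$, an exponentially small ``denominator-safety'' contribution that keeps $\simmean(\testfunction)$ away from the boundary. For the second term, substituting $t=c/\kappa$ cancels the $\kappa$ in the exponent: $\PP(|\simmean(\testfunction)-\mean(\testfunction)|>\sqrt{c_r c/\kappa}\mid\testfunction)\le 2\exp(-2c_r c)$. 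Integrating over $\testfunction$ yields $\PP_{\profiles_{\kappa},\testfunction}(\discchi(\testfunction)>c/\kappa)\le 2\exp(-\kappa r^2/2)+2\exp(-2c_r c)$. To conclude, I would first choose $c=c(\alpha,\delta,r)\ge(2c_r)^{-1}\log(4/(\alpha\delta))$ so the second term is at most $\alpha\delta/2$, and then choose $\kappa_0=\kappa_0(\alpha,\delta,r)\ge 2r^{-2}\log(4/(\alpha\delta))$ so that for all $\kappa\ge\kappa_0$ the first term is at most $\alpha\delta/2$. The joint tail is then bounded by $\alpha\delta$, and the Markov step of the first paragraph delivers the theorem.
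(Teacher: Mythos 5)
Your proof is correct, but it takes a genuinely different route from the paper's. The paper reduces the conditional quantile to a maximum over $m=\lceil \log(\delta/2)/\log(1-\alpha)\rceil$ auxiliary questions drawn i.i.d.\ from $\testdistribution$ independently of $\profiles_{\kappa}$: by the definition of the quantile, $\quantile_{1-\alpha}(\discchi(\testfunction)\mid\profiles_{\kappa})$ exceeds $\max_{j\in[m]}\discchi(\testfunction_j)+\varepsilon$ with conditional probability at most $(1-\alpha)^m\le\delta/2$, and the maximum is then controlled by Hoeffding's inequality together with a union bound over the $m$ ghost questions, followed by a limiting argument in $\varepsilon$. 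You instead observe that the quantile bound is equivalent to the conditional tail probability $p(\profiles_{\kappa})=\PP_{\testfunction}(\discchi(\testfunction)>t\mid\profiles_{\kappa})$ being at most $\alpha$, apply Markov's inequality to $p(\profiles_{\kappa})$, and thereby reduce everything to a single unconditional tail estimate $\PP(\discchi(\testfunction)>c/\kappa)\le\alpha\delta$, which you establish with one Hoeffding application per term of a good-event decomposition. The two treatments of the denominator also differ slightly: the paper keeps $\widehat{\mean}_{\kappa}(\testfunction_j)(1-\widehat{\mean}_{\kappa}(\testfunction_j))\ge r(1-r)/2$ on the \emph{same} Hoeffding event used for the numerator (which is where its requirement $\kappa\ge\kappa_0$ comes from), whereas you introduce a separate event $\{|\simmean(\testfunction)-\mean(\testfunction)|\le r/2\}$ whose failure probability is exponentially small in $\kappa$. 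Your argument is somewhat more elementary --- no ghost sample, no union bound over $m$, no $\varepsilon\to 0^+$ continuity step --- and yields a constant $c(\alpha,\delta,r)=\Theta\big(r^{-1}\log(1/(\alpha\delta))\big)$ of the same order as the paper's $\Theta\big(r^{-1}(1-r)^{-1}\log[\delta^{-1}(1+\alpha^{-1}\log(2/\delta))]\big)$; the only mild cost is that the Markov step is specific to converting an average-in-$\profiles_{\kappa}$ bound into a high-probability one, while the paper's ghost-sample device bounds the realized conditional quantile directly and is the template reused verbatim in the proof of the KL analogue (\myCref{thm-kappa-IT-KL}).
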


\begin{proof}[Proof of \myCref{thm-kappa-IT-chi}]
See \myCref{sec-thm-kappa-IT-chi-proof}.
\end{proof}

\begin{remark}[Discrepancy measure]
As an alternative to our quantile-based discrepancy measure $\quantile_{1-\alpha}(\discchi(\testfunction))$, another popular and intuitive discrepancy measure is the \emph{integral probability metric}
\[
\sup_{\testfunction}\left| \mean(\testfunction) - \simmean(\testfunction) \right|.
\]
The crucial difference is that the latter is the \emph{worst-case} total variation distance between the real and synthetic distributions over \emph{all} possible questions $\testfunction$. In contrast, our discrepancy measure is a quantile of the $\chi^2$-divergence, and captures the distribution discrepancy only for the \emph{most common} questions. After all, we are interested in a high-probability coverage guarantee. In this regard, our quantile-based measure is more refined and optimistic than the integral probability metric.
\end{remark}

\subsection{Simulation Sample Size as Effective Human Sample Size}\label{sec-effective-sample-size}

In this section, we prove our central result that the simulation sample size $\widehat{k}$ selected by our method can approximately recover the hidden population size $\kappa$ in the Mechanical Turk model. To formalize this connection, we need to explicitly account for the dilation factor $C>1$ in the confidence interval \eqref{eqn-CI-intro-sim}; recall that it was originally introduced to guarantee that valid coverage is achievable. We will write the confidence interval \eqref{eqn-CI-intro-sim} as $\simCI(k;C)$ to emphasize its dependence on $C$:
\[
\simCI(k;C) =  \left[ \, \simresponsebar_k - z_{\alpha/2}\cdot\simsamplesd_k\sqrt{\frac{C}{k}}, ~~
\simresponsebar_k + z_{\alpha/2}\cdot\simsamplesd_k\sqrt{\frac{C}{k}} \, \right].
\]
Consequently, the sample size $\widehat{k}$ selected by our method \eqref{eqn-empirical-criterion-1D} is also dependent on $C$:
\begin{equation}\label{eqn-selected-sample-size}
\widehat{k}(C) = \max\bigg\{ 0\le k\le K : ~ \frac{1}{m} \sum_{j=1}^m \ind \{ \responsebar_j \not\in \simCI_j(i;C) \} \le \frac{\alpha}{2},~~\forall i\in[k] \bigg\}.
\end{equation}
Our analysis will show that for large $C$, the normalized sample size $\widehat{k}(C)/C$ is approximately inversely proportional to the distribution discrepancy $\quantile_{1-\alpha}(\discchi(\testfunction))$. Combined with the information-theoretic characterization of $\kappa$ in \myCref{thm-kappa-IT-chi}, this implies that $\widehat{k}(C) / C \gtrsim \kappa$.

To facilitate our theoretical analysis, we make the following regularity assumption, which states that for most test questions, the synthetic response distributions are non-degenerate. This assumption ensures that on most questions, the LLM can generate both ``yes'' and ``no'' responses, so that the $\chi^2$-divergence is well-defined.
 
\begin{assumption}[Non-degenerate synthetic response distribution]\label{assumption-nondegeneracy-synthetic}
There exists a constant $\eta\in(0,1/2]$ such that
\[
\PP\Big( \simmean(\testfunction) \in [\eta, \, 1-\eta] \Big) \ge 1-\alpha/8.
\]
Here the probability is taken over the randomness of the question $\testfunction$.
\end{assumption}

To keep the core insights clear, we will study an \emph{oracle} version of $\widehat{k}(C)$, which is the largest sample size that would achieve the target $(1-\alpha)$ coverage if we had perfect knowledge of the response distributions: 
\begin{equation}\label{eqn-oracle-sample-size-chi}
k^*(C)= \sup \big\{ k \in \ZZ_+ : \PP\big( \mean(\testfunction) \in \simCI(k;C) \big) \ge 1-\alpha \big\}.
\end{equation}
We are now ready to state our result that characterizes the normalized sample size $k^*(C)/C$.

\begin{theorem}[Oracle simulation sample size]\label{thm-oracle-effective-sample-size-chi}
Let Assumption \myref{assumption-nondegeneracy-synthetic} hold. Let the sample size $k^*(C)$ be defined by \eqref{eqn-oracle-sample-size-chi}. There exists $\varepsilon(\alpha,C)$ such that for all $C>1$,
\[
\frac{z_{\alpha/2}^2}{\quantile_{1-\alpha/2}\left( \discchireal(\testfunction) \right)} - \varepsilon(\alpha,C)
\le 
\frac{k^*(C)}{C}
\le 
\frac{z_{\alpha/2}^2}{\quantile_{1-2\alpha}\left( \discchireal(\testfunction) \right)} + \varepsilon(\alpha,C),
\]
and $\lim\limits_{C\to\infty}\varepsilon(\alpha,C) = 0$.
\end{theorem}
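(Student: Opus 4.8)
The plan is to fix the question $\testfunction$, compute the conditional coverage $\PP(\mean(\testfunction)\in\simCI(k;C)\mid\testfunction)$ exactly, and then integrate it against the law of the discrepancy $\discchi(\testfunction)$. Writing $p=\simmean(\testfunction)$ and $q=\mean(\testfunction)$, the containment $q\in\simCI(k;C)$ is, after squaring, the inequality $(\simresponsebar_k-q)^2\le z_{\alpha/2}^2(C/k)\,\simresponsebar_k(1-\simresponsebar_k)$; since the left side is convex quadratic in $\simresponsebar_k$, its solution set is a \emph{deterministic} interval $[x_-,x_+]$, so the conditional coverage equals the binomial probability $\PP(x_-\le\simresponsebar_k\le x_+\mid\testfunction)$ with $k\simresponsebar_k\sim\mathrm{Binomial}(k,p)$. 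This already dissolves the apparent difficulty of the random interval width, which is absorbed into the endpoints. A direct check shows $p\in[x_-,x_+]$ iff $\discchi(\testfunction)=(p-q)^2/(p(1-p))\le z_{\alpha/2}^2C/k$, which identifies $t:=z_{\alpha/2}^2C/k$ as the natural scale and explains why the $\chi^2$-divergence governs the analysis.

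Next I would show that, for questions with $p\in[\eta,1-\eta]$, the conditional coverage is well approximated by a clean function of $\discchi(\testfunction)$. Standardizing $\simresponsebar_k$ and invoking a Berry--Esseen bound (whose constant is uniformly controlled because $p(1-p)\ge\eta(1-\eta)$ on this event), together with the leading-order replacement $\simresponsebar_k(1-\simresponsebar_k)\approx p(1-p)$ in the endpoints, gives
\[
\PP\big(\mean(\testfunction)\in\simCI(k;C)\mid\testfunction\big)=\Phi\big(z_{\alpha/2}\sqrt{C}-\sqrt{k\,\discchi(\testfunction)}\big)-\Phi\big(-z_{\alpha/2}\sqrt{C}-\sqrt{k\,\discchi(\testfunction)}\big)+\rho,
\]
where $|\rho|=O(1/\sqrt{k})$. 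Since the relevant sample sizes satisfy $k=z_{\alpha/2}^2C/t=\Theta(C)$, this error is $O(1/\sqrt{C})\to0$. The leading term $f(\discchi)$ is strictly decreasing in $\discchi$, equals $\tfrac12$ at $\discchi=t$, and sharpens into the indicator of $\{\discchi\le t\}$ as $C\to\infty$. All the $O(1/\sqrt{C})$ errors, the finite-$C$ threshold corrections, and the $\le\alpha/8$ mass of degenerate questions (on which I merely bound the conditional coverage in $[0,1]$) will be collected into $\varepsilon(\alpha,C)$.

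Both inequalities then follow by choosing a threshold in $\PP(\mean(\testfunction)\in\simCI(k;C))=\EE_{\testfunction}[f(\discchi(\testfunction))]+O(1/\sqrt C)$ and allocating the miscoverage budget one-sidedly. For the lower bound I would pick $k$ so that every question with $\discchi(\testfunction)\le\quantile_{1-\alpha/2}(\discchi)$ has $\sqrt{k\,\discchi}\le z_{\alpha/2}\sqrt{C}-O(1)$, forcing conditional coverage at least $1-3\alpha/8$ there; the total miscoverage is then at most the mass $\alpha/2$ beyond the $(1-\alpha/2)$-quantile, plus $3\alpha/8$ from conditional miscoverage on good questions, plus $\alpha/8$ from degenerate questions, i.e. at most $\alpha$, certifying $k^*(C)\ge z_{\alpha/2}^2C(1-O(1/\sqrt C))^2/\quantile_{1-\alpha/2}(\discchi)$. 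For the upper bound I would use $f\le\Phi(z_{\alpha/2}\sqrt C-\sqrt{k\,\discchi})$ and the monotone split $\EE_\testfunction[f]\le\PP(\discchi\le\tau)+\Phi(z_{\alpha/2}\sqrt C-\sqrt{k\tau})+\alpha/8$; choosing $\tau$ so that $\Phi(z_{\alpha/2}\sqrt C-\sqrt{k\tau})=7\alpha/8$, the requirement that coverage be $\ge1-\alpha$ forces $\PP(\discchi\le\tau)\ge1-2\alpha$, hence $\tau\ge\quantile_{1-2\alpha}(\discchi)$ and $k^*(C)\le z_{\alpha/2}^2C(1+O(1/\sqrt C))^2/\quantile_{1-2\alpha}(\discchi)$. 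This is exactly where the asymmetric levels arise: the lower bound spends $\alpha/2=3\alpha/8+\alpha/8$ on correction terms, while the upper bound spends $\alpha=7\alpha/8+\alpha/8$; in both cases the multiplicative factors $(1\pm O(1/\sqrt C))^2\to1$ feed into $\varepsilon(\alpha,C)$.

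The step I expect to be the main obstacle is making the conditional central limit approximation rigorous and \emph{uniform}. The delicate points are estimating the quadratic roots $x_\pm$ to the precision needed to reach the clean $\Phi(z_{\alpha/2}\sqrt C-\sqrt{k\,\discchi})$ form, and controlling the Berry--Esseen error uniformly over $\testfunction$ --- the latter degrades as $p(1-p)\to0$, which is precisely why the non-degeneracy assumption restricts to $p\in[\eta,1-\eta]$ for all but an $\alpha/8$ fraction of questions, and why those exceptional questions must be quarantined into $\varepsilon$ rather than analyzed. Once this uniform approximation and the monotonicity of $f$ are in place, the two threshold choices and the verification that $\varepsilon(\alpha,C)\to0$ are routine.
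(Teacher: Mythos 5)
Your proposal is correct in substance and reaches the theorem by a genuinely different route from the paper. The paper's proof (\myCref{sec-lem-oracle-effective-sample-size-chi-proof}) never computes the conditional coverage: it assembles a probability-$(1-\alpha)$ event from Assumption \myref{assumption-nondegeneracy-synthetic}, the quantile event $\{\discchireal(\testfunction)\le \quantile_{1-\alpha/2}(\discchireal(\testfunction))\}$, a Bernstein bound on $|\simresponsebar_k-\simmean(\testfunction)|$, and a concentration bound on $|\simsamplesd_k-\simsd(\testfunction)\sqrt{(k-1)/k}|$; on that event coverage (resp.\ non-coverage, after intersecting with $\{\discchireal(\testfunction)\ge \quantile_{1-2\alpha}(\discchireal(\testfunction))-\epsilon\}$) is implied by a deterministic inequality of the form $\sqrt{Q}+c_1/\sqrt k\le z_{\alpha/2}(1-c_2/\sqrt k)\sqrt{C/k}$, which it then solves for $k$. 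You instead condition on $\testfunction$, observe that the coverage event is a quadratic inequality in $\simresponsebar_k$ with deterministic coefficients---so the random interval width is absorbed into deterministic endpoints $x_\pm$ and the conditional coverage is an exact binomial probability---apply Berry--Esseen, and integrate over $\testfunction$ with a one-sided miscoverage budget. The budget allocations essentially coincide ($\alpha/8+\alpha/2+3\alpha/8$ versus the paper's $\alpha/8+\alpha/2+\alpha/4+\alpha/8$ for the lower bound; both force $\PP(\discchi(\testfunction)\le\tau)\ge1-2\alpha$ for the upper bound), which is why you recover the same asymmetric quantile levels. Your route explains \emph{why} the limit is what it is (the conditional coverage sharpens to a step in $\discchi(\testfunction)$ at $z_{\alpha/2}^2C/k$); the paper's route avoids Berry--Esseen and all root asymptotics, at the cost of non-sharp constants in the lower-order terms, which is harmless in the limit.

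One caveat: your claimed remainder $|\rho|=O(1/\sqrt k)$ for the clean form is too optimistic. The interval $[x_-,x_+]$ is centered near $\mean(\testfunction)$ with half-length about $\sqrt{t\,\mean(\testfunction)(1-\mean(\testfunction))}$ where $t=z_{\alpha/2}^2C/k$, whereas the standardization of $\simresponsebar_k$ uses $\simsd(\testfunction)=\sqrt{\simmean(\testfunction)(1-\simmean(\testfunction))}$; the resulting standardized endpoints equal the clean ones only up to a multiplicative factor $1+O(\sqrt t\,)$ that does not vanish as $C\to\infty$. In the transition window this yields an $O(1)$, not $O(1/\sqrt k)$, error in the coverage probability. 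This does not sink the proof, because both of your bounds only use one-sided thresholds with an $O(1)$ additive margin in the argument of $\Phi$, and a bounded multiplicative distortion is absorbed by enlarging that margin by an $(\alpha,\eta)$-dependent constant, which still enters $k^*(C)/C$ only through a $(1\pm O(1/\sqrt C))^2$ factor. But the clean form must be used as a one-sided comparison with explicit slack, not as an equality up to $O(1/\sqrt k)$; this is exactly the ``main obstacle'' you flagged, and it is real.
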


\begin{proof}[Proof of \myCref{thm-oracle-effective-sample-size-chi}]
See \myCref{sec-thm-oracle-effective-sample-size-chi-proof}.
\end{proof}

\myCref{thm-oracle-effective-sample-size-chi} reveals the important relation between the simulation sample size and the LLM's simulation fidelity: the normalized simulation sample size $k^*(C)/C$ is inversely proportional to the human-LLM discrepancy $\quantile_{1-\alpha}(\discchi(\testfunction))$. This relation is intuitive: if the LLM has high fidelity (small distribution discrepancy), then we are justified in drawing a large number of samples to achieve a precise estimate. A similar result holds for $\widehat{k}(C)$, which we defer to \myCref{thm-effective-sample-size-chi} in \myCref{sec-effective-sample-size-full}. 

This insight, combined with the result that the hidden population size $\kappa$ is also inversely proportional to the human-LLM discrepancy (\myCref{thm-kappa-IT-chi}), shows that under the Mechanical Turk model, $\widehat{k}(C) / C \gtrsim \kappa$, serving as an empirical estimate of the hidden population size $\kappa$. This leads to our main insight:
\begin{center}
\emph{$\widehat{k}$ is an effective human sample size and a fidelity measure.}
\end{center}
In particular, a large value of $\widehat{k}(C)/C$ indicates a small discrepancy between the real and synthetic response distributions, and that the LLM captures rich information about the human population. We remark that \myCref{thm-oracle-effective-sample-size-chi} itself does not rely on the Mechanical Turk model.

Based on the counterpart of \myCref{thm-oracle-effective-sample-size-chi} for $\widehat{k}(C)$, we prove the following corollary that the resulting confidence interval $\simCI(\widehat{k}(C);C)$ is not overly conservative.

\begin{corollary}[Near-optimal interval width]\label{cor-sharpness}
Let Assumption \myref{assumption-nondegeneracy-synthetic} hold. Suppose the $m$ calibration questions have the same number of human responses: $n_j=n$ for each $j\in[m]$. Choose $\delta\in(0,1)$. For $C$ sufficiently large, with probability at least $1-\delta$, the selected confidence interval $\simCI(\widehat{k}(C);C)$ has width 
\[
O\left( \max\Big\{ C^{1/2}K^{-1/2}, \, n^{-1/2} , \, \quantile_{1-\alpha/8}^{1/2}\big(\discchireal(\testfunction)\big) \Big\} \right),
\]
where $O(\cdot)$ hides a constant depending on $\alpha$ and $\eta$.
\end{corollary}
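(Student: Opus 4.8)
The plan is to reduce the width bound to a high-probability lower bound on the selected sample size $\widehat{k}(C)$, and then invoke the empirical counterpart of \myCref{thm-oracle-effective-sample-size-chi}. First I would observe that the interval $\simCI(\widehat{k}(C);C)$ has width exactly $2 z_{\alpha/2}\, \simsamplesd_{\widehat{k}(C)} \sqrt{C/\widehat{k}(C)}$, and since the synthetic sample standard deviation always satisfies $\simsamplesd_{k} = \sqrt{\simresponsebar_k(1-\simresponsebar_k)} \le 1/2$, the width is at most $z_{\alpha/2}\sqrt{C/\widehat{k}(C)}$. This crude bound is lossless up to constants, because the genuine width scales like the mean gap $|\mean-\simmean|=\simsamplesd\cdot\discchireal^{1/2}$, in which the $\simsamplesd$ factor cancels the one hidden in $\discchireal$. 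Thus it suffices to prove a lower bound on $\widehat{k}(C)$, and the whole corollary follows by substituting it into $\sqrt{C/\widehat{k}(C)}$.

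The core step is the lower bound on $\widehat{k}(C)$, which is the empirical analogue of the lower bound in \myCref{thm-oracle-effective-sample-size-chi} (deferred to \myCref{thm-effective-sample-size-chi}). Recall that $\widehat{k}(C)$ is the largest $k$ for which $\frac1m\sum_j \ind\{\responsebar_j\notin\simCI_j(i;C)\}$ stays below $\alpha/2$ for every $i\le k$. I would show that, for a target level $k_\star \asymp C/\big(\quantile_{1-\alpha/8}(\discchireal(\testfunction)) + 1/n\big)$, this empirical miscoverage stays below $\alpha/2$ for all $i\le k_\star$ simultaneously, so that $\widehat{k}(C)\ge\min\{K,k_\star\}$. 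To control the miscoverage at a fixed $i$, I would decompose the event $\responsebar_j\notin\simCI_j(i;C)$ through the triangle inequality $|\responsebar_j-\simresponsebar_{j,i}|\le|\responsebar_j-\mean_j|+|\mean_j-\simmean_j|+|\simmean_j-\simresponsebar_{j,i}|$: the real sampling error is $O(n^{-1/2})$, the synthetic sampling error is $O(i^{-1/2})$ and is dominated by the dilated half-width $z_{\alpha/2}\simsamplesd_{j,i}\sqrt{C/i}\asymp\sqrt{C/i}$ once $C$ is large, since their ratio scales like $\sqrt{C}$, and the remaining gap $|\mean_j-\simmean_j|$ is governed by $\discchireal(\testfunction_j)$. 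Using Assumption \myref{assumption-nondegeneracy-synthetic} to keep $\simsamplesd_{j,i}$ bounded below on at least a $(1-\alpha/8)$-fraction of questions, coverage on question $j$ holds whenever $\discchireal(\testfunction_j)\lesssim z_{\alpha/2}^2 C/i$ up to an additive $O(1/n)$ correction, which is precisely what pins the quantile to level $1-\alpha/8$ and produces the $1/n$ term.

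With the per-question estimate in hand, I would pass from the conditional miscoverage probability to the empirical frequency by a concentration argument over the randomly drawn calibration questions, and then take a union bound over the $O(C)$ relevant values of $i$ to make the bound $\widehat{k}(C)\ge\min\{K,k_\star\}$ hold uniformly with probability at least $1-\delta$ for $C$ large. Finally I would combine the two regimes. If $\widehat{k}(C)=K$, the width is at most $z_{\alpha/2}\sqrt{C/K}=O(C^{1/2}K^{-1/2})$, the first term. Otherwise $\widehat{k}(C)\gtrsim C/\big(\quantile_{1-\alpha/8}(\discchireal(\testfunction))+1/n\big)$ gives $\sqrt{C/\widehat{k}(C)}\lesssim \big(\quantile_{1-\alpha/8}(\discchireal(\testfunction))+1/n\big)^{1/2}\le \quantile_{1-\alpha/8}^{1/2}(\discchireal(\testfunction))+n^{-1/2}$, the other two terms. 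Taking the maximum yields the claimed width $O\big(\max\{C^{1/2}K^{-1/2},n^{-1/2},\quantile_{1-\alpha/8}^{1/2}(\discchireal(\testfunction))\}\big)$, with the constant depending only on $\alpha$ and $\eta$.

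I expect the main obstacle to be the uniform-in-$i$ control of the empirical miscoverage while keeping the residual error vanishing as $C\to\infty$. The delicate points are showing that the synthetic Monte Carlo fluctuation is dominated by the dilated half-width uniformly over $i\le K$ (which is exactly where the dilation factor $C>1$ earns its keep), and the bookkeeping that converts the $O(n^{-1/2})$ proxy error on $\responsebar_j$ into an $O(1/n)$ shift of the effective $\chi^2$-divergence without degrading the quantile level beyond $1-\alpha/8$. Since these are precisely the ingredients of \myCref{thm-effective-sample-size-chi}, once that lower bound on $\widehat{k}(C)$ is granted, the remainder of the corollary is the short case analysis above.
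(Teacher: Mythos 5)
Your proposal is correct and follows essentially the same route as the paper: bound the width by $2z_{\alpha/2}\simsamplesd_{\widehat{k}}\sqrt{C/\widehat{k}(C)}\le z_{\alpha/2}\sqrt{C/\widehat{k}(C)}$, invoke the high-probability lower bound $\widehat{k}(C)\ge\min\{K,\,k_-(\alpha,C)\}$ with $k_-(\alpha,C)\gtrsim Cz_{\alpha/2}^2\big(\quantile_{1-\alpha/8}^{1/2}(\discchireal(\testfunction))+c_0n^{-1/2}\big)^{-2}$ from \myCref{thm-effective-sample-size-chi} (whose proof you sketch with the same triangle-inequality decomposition, concentration, and union bound over $i\le K$ as the paper), and finish with the same three-way case analysis. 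The only cosmetic difference is that you split $(\quantile_{1-\alpha/8}+1/n)^{1/2}\le \quantile_{1-\alpha/8}^{1/2}+n^{-1/2}$ directly where the paper lower-bounds the sum by a minimum before inverting; the result is identical up to constants.
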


\begin{proof}[Proof of \myCref{cor-sharpness}]
See \myCref{sec-cor-sharpness-proof}.
\end{proof}

\myCref{cor-sharpness} shows that the width of our confidence interval is affected by three factors: the simulation budget $K$, the amount of real human data $n$, and the LLM's simulation error $\discchireal(\testfunction)$. In particular, it implies that the selected confidence interval has a near-optimal width. To see this, suppose that the simulation budget $K$ is large, so with high probability, $\simCI(\widehat{k})$ has a width of $O \big( \max \big\{ n^{-1/2}, \quantile_{1-\alpha/8}^{1/2}\big(\discchireal(\testfunction)\big) \big\}  \big)$. First, as $n$ real human responses can identify the true mean only up to an error of $O(n^{-1/2})$, then any valid confidence interval $\simCI(k)$ must have a width of $\Omega(n^{-1/2})$. Second, any valid $\simCI(k)$ needs to cover the discrepancy between the real and synthetic human response distributions, and must have a width of $\Omega\big( \sqrt{\discchireal(\testfunction)} \big)$ for most realizations of the random question $\testfunction\sim\Pi$. This shows the near-optimality of the interval width.

\subsection{Connection to Parametric Bootstrap}\label{sec-bootstrap}

The Mechanical Turk analogy has provided an intuitive model for understanding the simulation sample size $\widehat{k}$. In this section, we offer a more statistical view of the Mechanical Turk model by comparing our framework with the classical parametric bootstrap method (e.g., Section 10.4 of \cite{EHa21}), which is one of the most widely-used methods for uncertainty quantification in stochastic simulation. This comparison lends an additional statistical meaning of the hidden population size $\kappa$, and illustrates how our method addresses the unique challenges posed by pre-trained, black-box generative models such as LLMs.

In the parametric bootstrap (\myCref{fig-diagram-bootstrap}), one begins with a dataset of $\kappa$ i.i.d.~samples $y_1,...,y_{\kappa}$ from a true distribution, say $\mathscr{P}$. These samples are used to fit a parametric model $\mathscr{P}_{\widehat{\theta}}$ that serves as an estimate of the true model. To construct confidence intervals, one repeatedly resamples $y_1^*,...,y_{\kappa}^*$ from the fitted model. Crucially, each time we resample from $\mathscr{P}_{\widehat{\theta}}$, the sample size should also be $\kappa$, which can be viewed as the \emph{effective sample size} captured by the model.

\begin{figure}[h]
\FIGURE
{\includegraphics[scale=0.8]{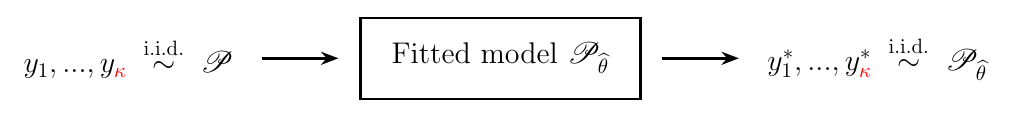}}
{Parametric Bootstrap. \label{fig-diagram-bootstrap}}
{}
\end{figure}

In contrast, our framework (\myCref{fig-diagram-LLM}) begins with a given black-box model (such as an LLM), say $\mathscr{P}^{\mathsf{syn}}$. The training procedure is often very complicated and unknown to us, e.g., it might have been trained on non-i.i.d.~human data, and even synthetic data. Because of this, it is unclear \emph{a priori} how many samples should be drawn from $\mathscr{P}^{\mathsf{syn}}$ to construct confidence intervals. Our framework takes a data-driven approach to select a sample size $\widehat{k}$, whose normalized version $\widehat{k}/C$ can be thought of as an estimate of the model's effective sample size $\kappa$. Then, for future simulation tasks, we know to generate $\widehat{k}$ samples from $\mathscr{P}^{\mathsf{syn}}$.

\begin{figure}[h]
\FIGURE
{\includegraphics[scale=0.8]{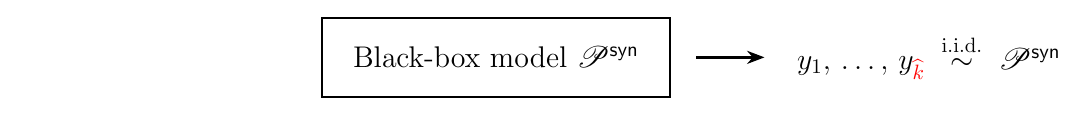}}
{Our framework. \label{fig-diagram-LLM}}
{}
\end{figure}

In \myCref{sec-dilation-bootstrap}, we further establish a more quantitative connection, where our parameters can be directly interpreted in bootstrap terms: the hidden population size $\kappa$ represents the number of samples in each bootstrap resample, the dilation factor $C$ corresponds to the number of bootstrap resamples, and our selected sample size $\widehat{k} = \widehat{k}(C)$ is the total number of samples drawn, so that $\widehat{k}(C) \approx C \kappa$.

\subsection{Second Case Study: Concentration-Based Confidence Interval}\label{sec-KL}

Our analysis of the simulation sample size $\widehat{k}$ so far has been based on the specific CLT-based confidence interval \eqref{eqn-CI-intro-sim}. To demonstrate the generality of our framework, we now show that the same principles hold for a distinct type of confidence intervals constructed based on a Chernoff concentration bound. These intervals provide non-asymptotic, finite-sample coverage guarantees, which are often desirable in practice. The presentation in this section parallels that of Sections \myref{sec-MTurk} to \myref{sec-effective-sample-size}.

The confidence interval we study is based on a Chernoff concentration bound for Bernoulli random variables. It uses the \emph{Kullback-Leibler (KL) divergence}.

\begin{definition}[Kullback-Leibler divergence]
Let $p,q\in(0,1)$. The \textbf{Kullback-Leibler divergence} between two Bernoulli distributions $\Bernoulli(q)$ and $\Bernoulli(p)$ is defined by
\[
\KLdiv{q}{p} = q\log \bigg(\frac{q}{p}\bigg) + (1-q)\log \bigg(\frac{1-q}{1-p}\bigg) .
\]
\end{definition}

We state the Chernoff concentration bound as follows.

\begin{lemma}[Chernoff bound for Bernoulli]\label{lem-Ber-concentration-KL}
Fix $q\in(0,1)$. Let $\{x_i\}_{i=1}^k$ be i.i.d.~samples of $\Bernoulli(q)$. Define $\bar{x}_k = \frac{1}{k}\sum_{i=1}^k x_i$. For every $\alpha\in(0,1)$, 
\[
\PP\left( \KLdiv{\bar{x}_k}{q} \le \frac{\log(2/\alpha)}{k} \right) \ge 1-\alpha.
\]
\end{lemma}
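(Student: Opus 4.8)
The plan is to prove this by the classical Chernoff (large‑deviations) method, exploiting the fact that for Bernoulli random variables the exponential rate function is exactly the KL divergence $\KLdiv{\cdot}{q}$. Since the target event $\{\KLdiv{\bar{x}_k}{q}\le\epsilon\}$ with $\epsilon=\log(2/\alpha)/k$ is two-sided, controlling deviations of $\bar{x}_k$ both above and below $q$, I would bound the upper and lower tails separately and combine them by a union bound; this is precisely where the factor $2$ inside the logarithm originates. (I read the displayed $\delta$ as $\alpha$, i.e.\ the confidence parameter is the same throughout the statement.)

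First I would establish the two one-sided Chernoff bounds. For any $t\in[q,1]$ and $\lambda>0$, Markov's inequality applied to $e^{\lambda\sum_i x_i}$ gives $\PP(\bar{x}_k\ge t)\le e^{-\lambda k t}\big((1-q)+qe^{\lambda}\big)^k$, and optimizing over $\lambda$ collapses the bracket exactly to the rate function, yielding $\PP(\bar{x}_k\ge t)\le\exp\!\big(-k\,\KLdiv{t}{q}\big)$. The symmetric argument with $\lambda<0$ gives $\PP(\bar{x}_k\le t)\le\exp\!\big(-k\,\KLdiv{t}{q}\big)$ for $t\in[0,q]$. Next I would translate the KL threshold into thresholds on $\bar{x}_k$ using monotonicity: the map $p\mapsto\KLdiv{p}{q}$ is strictly decreasing on $[0,q]$ and strictly increasing on $[q,1]$, vanishing at $p=q$, so there are unique $t^-\le q\le t^+$ with $\KLdiv{t^{\pm}}{q}=\epsilon$, and the event $\{\KLdiv{\bar{x}_k}{q}\ge\epsilon\}$ equals $\{\bar{x}_k\ge t^+\}\cup\{\bar{x}_k\le t^-\}$. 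Applying the two one-sided bounds, each tail is at most $\exp(-k\,\KLdiv{t^{\pm}}{q})=\exp(-k\epsilon)$, so the union bound gives $\PP(\KLdiv{\bar{x}_k}{q}\ge\epsilon)\le 2e^{-k\epsilon}$. Substituting $\epsilon=\log(2/\alpha)/k$ makes the right-hand side equal to $\alpha$, which yields the claimed coverage after passing to the complement.

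The result is classical, so no step is genuinely hard; the only points requiring care are technical. The first is the behaviour at the boundary outcomes $\bar{x}_k\in\{0,1\}$, where $\KLdiv{\cdot}{q}$ is defined only by continuity via the convention $0\log 0=0$: both $\KLdiv{0}{q}=\log\frac{1}{1-q}$ and $\KLdiv{1}{q}=\log\frac{1}{q}$ are finite, and the Chernoff bounds remain valid (indeed exact) there, so the argument is unaffected. The second is the inversion of the KL threshold into a tail event, which rests entirely on the monotonicity of $p\mapsto\KLdiv{p}{q}$ noted above. I would present both of these as short remarks rather than as separate lemmas, keeping the derivation of the one-sided bound and the union-bound combination as the two substantive steps.
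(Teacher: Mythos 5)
Your proposal is correct and follows essentially the same route as the paper's proof: two one-sided Chernoff bounds with the KL divergence as the rate function, inversion of the threshold $\KLdiv{\cdot}{q}\le\epsilon$ into an interval around $q$ via monotonicity, a union bound producing the factor $2$, and the substitution $\epsilon=\log(2/\alpha)/k$. The only difference is that you derive the one-sided bounds from Markov's inequality while the paper cites them directly, which is immaterial.
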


\begin{proof}[Proof of \myCref{lem-Ber-concentration-KL}]
See \myCref{sec-lem-Ber-concentration-KL-proof}.
\end{proof}

By inverting \myCref{lem-Ber-concentration-KL}, we obtain a confidence interval of the following form:
\[
\left\{ p\in(0,1): \KLdiv{\bar{x}_k}{p} \le \frac{\log(2/\alpha)}{k} \right\},
\]
which is guaranteed to cover the true mean $q$ with probability at least $1-\alpha$, for every sample size $k\in\ZZ_+$. One may wonder how tight this confidence interval is. In \myCref{sec-lem-KL-Bernstein-proof}, we show that it is at least as tight as an empirical Bernstein concentration bound \citep{MPo09}, given by
\[
\left[ \simresponsebar_k - R_k, \simresponsebar_k + R_k \right],
\quad\text{with}\quad
R_k = \simsamplesd_k \sqrt{\frac{2\log(2/\alpha)}{k}} + \frac{2\log(2/\alpha)}{k},
\]
which shares a similar variance-dependent structure as the CLT-based confidence interval.

In our setting, given synthetic data $\{\simresponse_i\}_{i=1}^k$, we can form a dilated concentration-based confidence interval in the same way as the dilated CLT-based confidence interval \eqref{eqn-CI-intro-sim}:
\begin{equation}\label{eqn-CI-KL}
\simCIKL(k;C) = \left\{ p \in(0,1) : \KLdiv{\simresponsebar_k}{p} \le \frac{C\log(2/\alpha)}{k} \right\},
\end{equation}
with a dilation factor $C > 1$. Applying the same sample size selection method \eqref{eqn-empirical-criterion-1D} on the confidence interval $\simCIKL(k;C)$ yields a simulation sample size $\widehat{k}_{\KL}(C)$.

Paralleling our analysis in \myCref{sec-MTurk-IT} and \myCref{sec-effective-sample-size}, we will connect the hidden population size $\kappa$ and the normalized simulation sample size $\widehat{k}_{\KL}(C)/C$ by showing that they are both inversely proportional to an information-theoretic discrepancy measure between the real and synthetic distributions. The distribution discrepancy measure we use is the KL divergence:
\[
\discKL(\testfunction) = \KLdiv[\big]{\simmean(\testfunction)}{\mean(\testfunction)}.
\] 

We first show that the hidden population size $\kappa$ is inversely proportional to a quantile of $\discKLreal(\testfunction)$.

\begin{theorem}[Information-theoretic characterization of $\kappa$]\label{thm-kappa-IT-KL}
Let Assumption \myref{assumption-MTurk} hold. Choose $\delta\in(0,1)$. With probability at least $1-\delta$, it holds that
\[
\quantile_{1-\alpha} \big( \discKL(\testfunction) \mid \profiles_{\kappa} \big) \le \frac{c'(\alpha,\delta)}{\kappa},
\]
where $c'(\alpha,\delta)$ is a constant depending on $\alpha, \delta$.
\end{theorem}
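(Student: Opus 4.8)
The plan is to parallel the proof of \myCref{thm-kappa-IT-chi}, but to exploit the fact that the KL divergence admits a cleaner variance-normalized bound than the $\chi^2$-divergence. This is what lets the statement dispense with the boundedness assumption $\mean(\testfunction)\in[r,1-r]$ and with any lower bound on $\kappa$. The overall route has two stages: first control the divergence in expectation, then convert that expectation bound into a high-probability bound on the \emph{conditional} quantile via two applications of Markov's inequality.

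The key first step is a deterministic pointwise inequality. The standard bound $\mathrm{KL}(P\|Q)\le\chi^2(P\|Q)$ (which follows from $\log x\le x-1$) specializes, for Bernoulli distributions, to
\[
\KLdiv{\simmean(\testfunction)}{\mean(\testfunction)}\;\le\;\chidiv{\simmean(\testfunction)}{\mean(\testfunction)}\;=\;\frac{\big(\simmean(\testfunction)-\mean(\testfunction)\big)^2}{\mean(\testfunction)\big(1-\mean(\testfunction)\big)}.
\]
Crucially, unlike the paper's $\discchi(\testfunction)$, this normalizes by the \emph{true} variance $\mean(1-\mean)$ rather than the synthetic one, which is what makes the KL analysis stable even when the synthetic distribution is nearly deterministic. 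The inequality is valid whenever $\mean(\testfunction)\in(0,1)$ (including the boundary cases $\simmean(\testfunction)\in\{0,1\}$, where $P$ is a point mass but $Q$ has full support); when $\mean(\testfunction)\in\{0,1\}$ one checks directly that $\discKL(\testfunction)=0$, so no non-degeneracy condition on $\mean$ is required.

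Next I bound the expected divergence. Under \myCref{assumption-MTurk}, $\simmean(\testfunction)=\widehat{\mean}_\kappa(\testfunction)=\tfrac1\kappa\sum_{i=1}^\kappa\performancefunction(\profile_i,\testfunction)$ is an empirical mean of $\kappa$ i.i.d.\ copies of $\performancefunction(\profile,\testfunction)\in[0,1]$ with mean $\mean(\testfunction)$, so $\EE_{\profiles_\kappa}\big[(\simmean(\testfunction)-\mean(\testfunction))^2\big]=\var_{\profile\sim\distribution}[\performancefunction(\profile,\testfunction)]/\kappa$. Combining this with the variance bound $\var_{\profile}[\performancefunction]\le\mean(1-\mean)$ for $[0,1]$-valued variables and the displayed inequality gives, for every fixed $\testfunction$, $\EE_{\profiles_\kappa}[\discKL(\testfunction)]\le 1/\kappa$; taking a further expectation over $\testfunction\sim\Pi$ yields $\EE_{\testfunction,\profiles_\kappa}[\discKL(\testfunction)]\le 1/\kappa$. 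To finish, set $W(\profiles_\kappa)=\EE_\testfunction[\discKL(\testfunction)\mid\profiles_\kappa]$, a nonnegative variable with $\EE_{\profiles_\kappa}[W]\le 1/\kappa$. Markov over $\profiles_\kappa$ gives $W\le 1/(\delta\kappa)$ with probability at least $1-\delta$, and on that event conditional Markov over $\testfunction$ gives $\PP_\testfunction\big(\discKL(\testfunction)>W/\alpha\mid\profiles_\kappa\big)\le\alpha$, i.e.\ $\quantile_{1-\alpha}\big(\discKL(\testfunction)\mid\profiles_\kappa\big)\le W/\alpha\le 1/(\alpha\delta\kappa)$. This proves the claim with $c'(\alpha,\delta)=1/(\alpha\delta)$.

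The proof is short, so the main obstacle is not any single hard estimate but rather bookkeeping: I must order the two Markov inequalities correctly so that the conclusion is a high-probability statement (over $\profiles_\kappa$) about the conditional quantile \emph{given} $\profiles_\kappa$, not merely an unconditional quantile bound. The other point demanding care is making the pointwise inequality hold unconditionally across the degenerate cases, so that the expectation bound $\EE_{\profiles_\kappa}[\discKL(\testfunction)]\le1/\kappa$ is valid for every $\testfunction$; this is precisely where normalizing by the true variance makes the KL argument cleaner than the $\chi^2$ argument of \myCref{thm-kappa-IT-chi}.
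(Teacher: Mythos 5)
Your proof is correct, but it takes a genuinely different route from the paper's. The paper bounds the conditional quantile by the maximum of $\discKL(\testfunction_j)$ over $m\approx \alpha^{-1}\log(2/\delta)$ fresh i.i.d.\ questions (a quantile-to-maximum reduction), then applies a high-probability Chernoff--KL concentration bound to each $\discKL(\testfunction_j)=\KLdiv{\widehat{\mean}_{\kappa}(\testfunction_j)}{\mean(\testfunction_j)}$ and a union bound; this yields a constant $c'(\alpha,\delta)$ that is only logarithmic in $1/\alpha$ and $1/\delta$. You instead use the deterministic comparison $\KLdiv{\simmean(\testfunction)}{\mean(\testfunction)}\le \big(\simmean(\testfunction)-\mean(\testfunction)\big)^2/\big(\mean(\testfunction)(1-\mean(\testfunction))\big)$, the exact variance identity $\EE_{\profiles_{\kappa}}\big[(\widehat{\mean}_{\kappa}(\testfunction)-\mean(\testfunction))^2\big]=\var_{\profile\sim\distribution}\big[\performancefunction(\profile,\testfunction)\big]/\kappa\le \mean(\testfunction)(1-\mean(\testfunction))/\kappa$, and two Markov inequalities (one over $\profiles_{\kappa}$, one conditionally over $\testfunction$), correctly ordered so that the conclusion is a high-probability statement about the conditional quantile. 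Your edge-case handling ($\mean(\testfunction)\in\{0,1\}$ forces $\discKL(\testfunction)=0$; $\simmean(\testfunction)\in\{0,1\}$ is covered since the $\chi^2$ bound normalizes by the true variance) is also right. What each approach buys: yours is more elementary — no concentration inequality, no union bound, no $\epsilon\to 0^+$ limit — and sidesteps the fact that the $\performancefunction(\profile_i,\testfunction)$ are $[0,1]$-valued rather than Bernoulli; the price is the constant $c'(\alpha,\delta)=1/(\alpha\delta)$, which is polynomially worse than the paper's logarithmic one. Since the theorem only asserts existence of a constant depending on $\alpha$ and $\delta$, your argument fully establishes the stated result, though the weaker constant would loosen the quantitative link $\widehat{k}/C\gtrsim\kappa$ drawn downstream.
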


\begin{proof}[Proof of \myCref{thm-kappa-IT-KL}]
See \myCref{sec-thm-kappa-IT-KL-proof}.
\end{proof}

We next turn to the selected sample size $\widehat{k}_{\KL}(C)$. Similar to \myCref{sec-effective-sample-size}, we consider an oracle version that assumes knowledge of the population coverage probability:
\begin{equation}\label{eqn-oracle-sample-size-KL}
k_{\KL}^*(C) = \sup \big\{ k \in \ZZ_+ : \PP\big( \mean(\testfunction) \in \simCIKL(k;C) \big) \ge 1-\alpha \big\}.
\end{equation}
\myCref{thm-oracle-effective-sample-size-KL} below characterizes the behavior of $k^*_{\KL}(C)$. We assume that both the real and synthetic response distributions are non-degenerate.

\begin{assumption}[Non-degenerate response distributions]\label{assumption-nondegeneracy-both}
There exists a constant $\eta\in(0,1/2]$ such that
\[
\PP\Big( \mean(\testfunction), \, \simmean(\testfunction) \in [\eta, \, 1-\eta] \Big) \ge 1-\alpha/8.
\]
Here the probability is taken over the randomness of the question $\testfunction$.
\end{assumption}

\begin{theorem}[Oracle simulation sample size]\label{thm-oracle-effective-sample-size-KL}
Let Assumption \myref{assumption-nondegeneracy-both} hold. Let the sample size $k_{\KL}^*(C)$ be defined by \eqref{eqn-oracle-sample-size-KL}. Then there exists $\varepsilon'(\alpha,C)$ such that for all $C > 1$,
\[
\frac{\log(2/\alpha)}{\quantile_{1-\alpha/2}\left( \discKLreal(\testfunction) \right)} - \varepsilon'(\alpha,C)
\le 
\frac{k_{\KL}^*(C)}{C}
\le 
\frac{\log(2/\alpha)}{\quantile_{1-2\alpha}\left( \discKLreal(\testfunction) \right)} + \varepsilon'(\alpha,C),
\]
and $\lim\limits_{C\to\infty}\varepsilon'(\alpha,C) = 0$.
\end{theorem}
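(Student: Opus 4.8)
The plan is to reduce the coverage event $\{\mean(\testfunction)\in\simCIKL(k;C)\}$ to a simple comparison between the discrepancy $\discKLreal(\testfunction)$ and the deterministic threshold $\tau_k := C\log(2/\alpha)/k$. By the definition \eqref{eqn-CI-KL}, membership is equivalent to $\KLdiv{\simresponsebar_k}{\mean(\testfunction)}\le\tau_k$. Conditioning on $\testfunction$, the samples producing $\simresponsebar_k$ are i.i.d.\ $\Bernoulli(\simmean(\testfunction))$, so Hoeffding gives $\PP(|\simresponsebar_k-\simmean(\testfunction)|\ge t\mid\testfunction)\le 2e^{-2kt^2}$. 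On the non-degenerate event $\mean(\testfunction),\simmean(\testfunction)\in[\eta,1-\eta]$, which has probability at least $1-\alpha/8$ by Assumption \myref{assumption-nondegeneracy-both}, the map $p\mapsto\KLdiv{p}{\mean(\testfunction)}$ is Lipschitz with a constant $L$ depending on $\eta$ near $p=\simmean(\testfunction)$, so $\KLdiv{\simresponsebar_k}{\mean(\testfunction)}$ concentrates around $\KLdiv{\simmean(\testfunction)}{\mean(\testfunction)}=\discKLreal(\testfunction)$. Hence, up to a slack that shrinks as $k$ grows, the conditional coverage probability behaves like $\ind\{\discKLreal(\testfunction)\le\tau_k\}$, and the unconditional coverage is approximately $\PP(\discKLreal(\testfunction)\le\tau_k)$. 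This already indicates $k_{\KL}^*(C)/C\approx\log(2/\alpha)/\quantile_{1-\alpha}(\discKLreal(\testfunction))$; the factor-of-two gaps in the two quantile levels arise from absorbing the finite-$k$ slack together with the $\alpha/8$ non-degeneracy budget.

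For the lower bound I would exhibit a feasible sample size. Fix a slack $s=s(C)$ and set $k_0=\lfloor C\log(2/\alpha)/(\quantile_{1-\alpha/2}(\discKLreal(\testfunction))+s)\rfloor$, so that $\tau_{k_0}\ge\quantile_{1-\alpha/2}(\discKLreal(\testfunction))+s$. I would bound the miscoverage $\PP(\KLdiv{\simresponsebar_{k_0}}{\mean(\testfunction)}>\tau_{k_0})$ by splitting over three regions: questions where non-degeneracy fails (probability $\le\alpha/8$, conditional miscoverage bounded by $1$); questions where $\discKLreal(\testfunction)>\quantile_{1-\alpha/2}(\discKLreal(\testfunction))$ (probability $\le\alpha/2$ by definition of the quantile); and the remaining non-degenerate questions with $\discKLreal(\testfunction)\le\tau_{k_0}-s$, on which the Lipschitz-plus-Hoeffding estimate forces $\KLdiv{\simresponsebar_{k_0}}{\mean(\testfunction)}\le\tau_{k_0}$ except on a conditional event of probability at most $2e^{-2k_0(s/L)^2}$, which is $o(1)$ for $C$ large. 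Summing yields miscoverage $\le\alpha/2+\alpha/8+o(1)<\alpha$, so coverage at $k_0$ is at least $1-\alpha$ and $k_{\KL}^*(C)\ge k_0$, i.e.\ $k_{\KL}^*(C)/C\ge\log(2/\alpha)/\quantile_{1-\alpha/2}(\discKLreal(\testfunction))-\varepsilon'(\alpha,C)$.

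For the upper bound I would show coverage necessarily fails once $k$ is large. For any $k$ with $\tau_k\le\quantile_{1-2\alpha}(\discKLreal(\testfunction))-s$, the ``bad'' questions $\{\discKLreal(\testfunction)\ge\tau_k+s\}$ have probability at least $\PP(\discKLreal(\testfunction)\ge\quantile_{1-2\alpha}(\discKLreal(\testfunction)))\ge 2\alpha$. Restricting to non-degenerate bad questions (losing at most $\alpha/8$) and to the concentration event, the same estimate gives $\KLdiv{\simresponsebar_k}{\mean(\testfunction)}\ge\discKLreal(\testfunction)-(\text{small})>\tau_k$, i.e.\ miscoverage; hence miscoverage $\ge(2\alpha-\alpha/8)(1-o(1))>\alpha$ for $C$ large. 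Thus coverage is below $1-\alpha$ for every such $k$, giving $k_{\KL}^*(C)/C\le\log(2/\alpha)/\quantile_{1-2\alpha}(\discKLreal(\testfunction))+\varepsilon'(\alpha,C)$. Taking $\varepsilon'(\alpha,C)$ to be the larger of the two induced gaps completes both inequalities.

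The hard part will be the quantitative concentration step and its coupling to the choice of $s(C)$: I must take $s(C)\to 0$ (so the thresholds $\quantile_{1-\alpha/2}(\discKLreal(\testfunction))\pm s$ converge and $\varepsilon'\to0$) while keeping $k\,s^2\to\infty$ (so the conditional error $2e^{-2ks^2/L^2}$ is $o(1)$ uniformly over the relevant $k\asymp C$); a choice such as $s(C)\asymp C^{-1/4}$ achieves both. Two further details require care: the Lipschitz control of $p\mapsto\KLdiv{p}{\mean(\testfunction)}$ must be confined to the non-degenerate event so that $\simresponsebar_k$ stays bounded away from $0$ and $1$ (handled by the $\alpha/8$ budget of Assumption \myref{assumption-nondegeneracy-both}), and the quantile manipulations must use the left-limit inequality $\PP(\discKLreal(\testfunction)\ge\quantile_{1-2\alpha}(\discKLreal(\testfunction)))\ge 2\alpha$ to accommodate atoms. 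Structurally the argument mirrors that of \myCref{thm-oracle-effective-sample-size-chi}, with the CLT interval replaced by \eqref{eqn-CI-KL}, the constant $z_{\alpha/2}^2$ replaced by $\log(2/\alpha)$, and the $\chi^2$-divergence $\discchireal$ replaced by the KL-divergence $\discKLreal$.
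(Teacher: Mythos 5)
Your proposal is correct and follows the same overall architecture as the paper's proof: reduce coverage to the comparison $\KLdiv{\simresponsebar_k}{\mean(\testfunction)}\lessgtr C\log(2/\alpha)/k$, take a union bound over the non-degeneracy event (budget $\alpha/8$), the quantile event ($\alpha/2$ for the lower bound, the left-limit inequality at level $1-2\alpha$ for the upper bound), and a concentration event whose failure probability vanishes as $C\to\infty$, then convert the resulting $\liminf$/$\limsup$ statement into a single $\varepsilon'(\alpha,C)$. The one substantive difference is the concentration step: the paper establishes $\abr{\KLdiv{\simresponsebar_k}{\mean(\testfunction)}-\discKLreal(\testfunction)}=o(1)$ via a tailored perturbed-KL lemma (\myCref{lem-Ber-concentration-KL-perturbed}), built on the three-point identity $\KLdiv{a}{b}=\KLdiv{a}{A}+\KLdiv{A}{b}+(a-A)(\log\frac{A}{1-A}-\log\frac{b}{1-b})$ together with the Chernoff/Bernstein bounds and a local Lipschitz estimate for the logit, which yields an error scaling like $\sqrt{\discKLreal(\testfunction)/k}$; you instead use plain Hoeffding on $\simresponsebar_k$ plus Lipschitz continuity of $p\mapsto\KLdiv{p}{\mean(\testfunction)}$ on the non-degenerate region, giving a cruder $O(\sqrt{1/k})$ error. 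For this theorem the cruder bound suffices, since only $k\asymp C\to\infty$ matters and your coupling $s(C)\asymp C^{-1/4}$ with $ks^2\to\infty$ closes the argument; the paper's sharper lemma pays off mainly in \myCref{thm-effective-sample-size-KL}, where the dependence on the human sample size $n$ must also be tracked. You also correctly flag the two places where care is needed (keeping $\simresponsebar_k$ away from $\{0,1\}$ so the Lipschitz constant is controlled, and handling atoms in the quantile), both of which the paper handles in essentially the way you describe.
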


\begin{proof}
See \myCref{sec-thm-oracle-effective-sample-size-KL-proof}.
\end{proof}

\myCref{thm-oracle-effective-sample-size-KL} has a similar form as \myCref{thm-oracle-effective-sample-size-chi}. It shows that $k_{\KL}^*(C)/C$ is approximately reciprocal to the LLM-human discrepancy $\quantile_{1-\alpha}\left( \discKLreal(\testfunction) \right)$. Combined with \myCref{thm-kappa-IT-KL}, we see that $k_{\KL}^*(C)/C\gtrsim \kappa$ under the Mechanical Turk model. We remark that \myCref{thm-oracle-effective-sample-size-KL} itself does not rely on the Mechanical Turk interpretation.

\section{Numerical Experiments}\label{sec-experiments}

In this section, we apply both our methods in \myCref{sec-warmup} and \myCref{sec-general} to LLMs over real datasets, to verify the coverage validity of our uncertainty quantification framework and to quantify the human-LLM discrepancy across LLMs and domains. The Python code for reproducing the results is available at \url{https://github.com/yw3453/uq-llm-survey-simulation}.

\subsection{Experiment Setup}

Below we describe the datasets, LLMs, confidence intervals and hyperparameters used in the experiments.

\paragraph{Datasets.} We use two datasets for survey questions, each corresponding to one uncertainty quantification task. The first dataset is the OpinionQA dataset \citep{SDL23}. It was built from Pew Research's American Trends Panel \citep{Pew25}, and contains the general US population's responses to survey questions spanning topics such as science, politics, and health. After pre-processing we have 385 unique questions and 1,476,868 responses to these questions from at least 32,864 people. These questions have $5$ choices corresponding to ordered sentiments, which we map to sentiment scores $-1,-\frac{1}{3},0,\frac{1}{3},1$, respectively. Each question has at least $400$ responses. For each response, we have information on the surveyee's political profile, religious affiliation, educational background, socio-economic status, etc., which we use to generate synthetic profiles. See \myCref{sec-opinion} for more details. We consider the task of constructing a confidence interval on the U.S.~population's average sentiment score for a survey question. This is the setup in \myCref{example-public-survey-1D}.

The second dataset is the EEDI dataset created by \cite{HMG24}, which was built upon the NeurIPS 2020 Education Challenge dataset \citep{WLS21}. It consists of students' responses to mathematics multiple-choice questions on the Eedi online educational platform \citep{Eedi25}. The dataset contains 573 unique questions and 443,433 responses to these questions from 2,287 students. All questions have four choices (A, B, C, D). Out of these questions, we use questions that have at least $100$ student responses. Excluding questions with graphs or diagrams, we are left with a total of $412$ questions. For each student, we have information on their gender, age, and socioeconomic status, which we use to generate synthetic profiles. See \myCref{sec-eedi} for more details. We consider the task of constructing a confidence interval on the probability of a student answering a question correctly. This is similar to the setup in \myCref{sec-warmup}.

\paragraph{LLMs.} We consider $8$ LLMs: Claude 3.5 Haiku \citep{Ant24}, DeepSeek-V3 \citep{LFX24}, GPT-3.5 Turbo \citep{GPT3.5}, GPT-4o \citep{GPT4o}, GPT-4o mini \citep{GPT4omini}, GPT-5 mini \citep{GPT5}, Llama 3.3 70B \citep{DJP24}, and Mistral 7B \citep{JSM23}.

\paragraph{Synthetic confidence set construction.} Our method can be built upon any procedure $\setmap$ for constructing the synthetic confidence set. We will use confidence intervals derived from the empirical Bernstein concentration bound (see Theorem 11 in \cite{MPo09}):
\begin{equation}\label{eqn-CI-Bern-sim}
\simCIBern(k) =  \left[ \simresponsebar_k - r_k, \simresponsebar_k + r_k \right],
\quad\text{with}\quad
r_k = \simsamplesd_k \sqrt{\frac{2C\log(4/\alpha)}{k}} + \frac{7CM\log(4/\alpha)}{3(k-1)},
\end{equation}
where $C>1$ is the dilation factor and $M$ is the numerical range of the responses. It is a finite-sample analogue to the CLT-based interval \eqref{eqn-CI-intro-sim} analyzed in our theory in \myCref{sec-interpretations}. The primary advantage is that the empirical Bernstein bound offers a valid coverage guarantee for any sample size $k$, while the CLT-based confidence interval only has an asymptotic coverage guarantee. It achieves this finite-sample coverage by replacing the normal quantile $z_{\alpha/2}$ with the term $\sqrt{\log(2/\alpha)}$, and including an additional $O(1/k)$ term, which safeguards coverage even when the sample standard deviation $\simsamplesd_k$ is not an accurate approximation of the true standard deviation for small $k$. For completeness, we formally state the finite-sample coverage guarantee of $\simCIBern(k)$ in \myCref{lem-emp-Bernstein}. 

\paragraph{Hyperparameters.} We consider target miscoverage levels $\alpha\in\{0.05,0.1,0.15,0.2\}$, dilation factor $C=2$, and confidence level $\gamma=0.5$ for confidence intervals constructed from human responses in \eqref{eqn-CI-calibrate}. We take the range $M=2$ for the OpinionQA dataset since the responses range within $[-1,1]$, and $M=1$ for the EEDI dataset since the responses are binary. The simulation budget $K$ is set to be sufficiently large, so that the selected simulation sample size $\widehat{k}$ is not limited by the budget.

\subsection{Experiment Procedure}

We now describe our experiment procedure for applying the methods in \myCref{sec-warmup} and \myCref{sec-general} to each dataset. We refer the former as the \emph{simple method}, and the latter as the \emph{general method}. Denote the dataset by $\{ ( \testfunction_j , \dataset_j ) \}_{j=1}^{J}$, where $\testfunction_j$ is a survey question and $\dataset_j = \{ \response_{j,i} \}_{i=1}^{n_j}$ is a collection of human responses to the question $\testfunction_j$. For each $j\in[J]$, we simulate $K$ responses $\simdataset_j$ from an LLM. We then randomly split $ \datasetmeta = \{ ( \dataset_j, \simdataset_j ) \}_{j=1}^{J}$ into a training set $\datasetmeta^{\train} = \{ ( \dataset_j, \simdataset_j ) \}_{j\in\cJ_{\train}}$ and a testing set $\datasetmeta^{\test} = \{ ( \dataset_j, \simdataset_j ) \}_{j\in\cJ_{\test}}$, with $|\datasetmeta^{\train}| : |\datasetmeta^{\test}| = 3 : 2$. We run experiments over $100$ random train-test splits of the survey questions.

\paragraph{Selection of simulation sample size.} We apply both the simple method in \myCref{sec-warmup} and general method in \myCref{sec-general} with the training set $\datasetmeta^{\train}$ to select a simulation sample size $\widehat{k}$. For the general method, to construct the confidence set $\CIalt_j$ in \eqref{eqn-CI-calibrate}, we use the standard CLT-based confidence interval:
\begin{equation}\label{eqn-CI-calibrate-CLT}
\CIalt_j = \bigg[ \responsebar_j - \frac{z_{(1-\gamma)/2} \cdot \samplesd_j}{\sqrt{n_j}} , ~
\responsebar_j + \frac{z_{(1-\gamma)/2} \cdot \samplesd_j }{\sqrt{n_j}} \bigg],
\end{equation}
where $\responsebar_j = \frac{1}{n_j} \sum_{i=1}^{n_j} \response_{j,i}$ and $\samplesd_j = \sqrt{\responsebar_j (1-\responsebar_j) }$. Since $n_j$ is at least $100$, $\CIalt_j$ has approximately $\gamma$ coverage probability.

\paragraph{Evaluation of selected confidence interval.} We use $\datasetmeta^{\test}$ to evaluate the coverage validity of the chosen confidence interval $\simCIBern(\widehat{k})$. As the true population mean $\mean$ is unavailable, the true coverage probability $\PP\big( \mean \in \simCIBern(\widehat{k}) \big)$ cannot be computed. However, we can apply the same ideas as \eqref{eqn-proxy-1D} in \myCref{sec-method-1D} and \eqref{eqn-proxy} in \myCref{sec-general} to compute a proxy for the miscoverage level. 

To evaluate the simple method, for each survey question $j\in\cJ_{\test}$, the selected sample size $\widehat{k}$ leads to a synthetic confidence interval $\simCIBern_j(\widehat{k})$. We form the sample mean $\responsebar_j$ from real data $\dataset_j$ and define
\begin{equation}\label{eqn-emp-proxy}
\coverage_{\test}(k) =  \frac{2}{|\cJ_{\test}|} \sum_{j\in\cJ_{\test}} \ind \big\{ \responsebar_j\not\in \simCIBern_j(k) \big\}.
\end{equation}
The proof of \myCref{thm-coverage-1D} shows that under the CLT approximation $\PP(\responsebar_j\le \mean_j\mid\testfunction_j)=1/2$, for every $k\in[K]$ and survey question $j$,
\[
\frac{1}{2} \PP\big( \mean \not\in \simCIBern(k) \big) \le \PP\big( \responsebar_j\not\in \simCIBern_j(k) \big) = \frac{1}{2} \EE \big[ \coverage_{\test}(k) \big] .
\]
Thus, if $\EE \big[ \coverage_{\test}(\widehat{k}) \big] \le \alpha$, then $\PP\big( \mean \not\in \simCIBern(\widehat{k})\big) \le \alpha$ must hold.

Similarly, to evaluate the general method, for each survey question $j\in\cJ_{\test}$, we form the confidence set $\CIalt_j$ from real data $\dataset_j$ as in \eqref{eqn-CI-calibrate-CLT}. We then define
\begin{equation}\label{eqn-emp-proxy-general}
\coveragealt_{\test}(k) = \frac{1}{\gamma} \cdot \frac{1}{|\cJ_{\test}|} \sum_{j\in\cJ_{\test}} \ind \big\{ \CIalt_j\not\subseteq \simCIBern_j(k) \big\}.
\end{equation}
The proof of \myCref{thm-coverage} shows that, for every $k\in[K]$ and survey question $j$,
\[
\gamma \cdot \PP\big( \mean \not\in \simCIBern(k) \big) \le \PP\big( \CIalt_j\not\subseteq \simCIBern_j(k) \big) = \gamma \cdot \EE \big[ \coveragealt_{\test}(k) \big] .
\]
Thus, if $\EE \big[ \coveragealt_{\test}(\widehat{k}) \big] \le \alpha$, then $\PP\big( \mean \not\in \simCIBern(\widehat{k})\big) \le \alpha$ must hold.

\subsection{Experiment Results}\label{sec-experiments-results}

For different LLMs on both datasets, we evaluate the following metrics:
\begin{enumerate}
\item miscoverage probability proxies $\coverage_{\test}(\widehat{k})$ and $\coveragealt_{\test}(\widehat{k})$,
\item sharpness of $\widehat{k}$, through comparison with an oracle sample size $k_{\test}^* = \max\{k : \coverage_{\test}(k) \le \alpha \}$,
\item width of the selected confidence interval $\simCIBern(\widehat{k})$,
\item estimated hidden population size $\widehat{\kappa} = \widehat{k}/C$.
\end{enumerate}
The first two metrics are used to verify the validity and sharpness of our sample size selection methods, and the last two metrics provide insights into the LLMs' simulation fidelity on the datasets. The results for the general method are very similar to the ones for the simple method. Therefore, for space considerations, we will only present results for the simple method in the main text, and defer results for the general method to \myCref{sec-appendix-experiments-results}.

\paragraph{Coverage validity.} In \myCref{fig-miscoverage-simple}, we plot the miscoverage probability proxy $\coverage_{\test}(\widehat{k})$ for different LLMs on the two datasets, averaged over $100$ random train-test splits of the questions. The half-width of each error bar is 1.96 times the standard error. We observe that $\coverage_{\test}(\widehat{k})$ is consistently close to or below the target miscoverage level $\alpha$, which shows that the selected confidence intervals $\simCIBern(\widehat{k})$ have approximately $(1-\alpha)$ coverage probabilities. This verifies the coverage guarantee of our method in \myCref{thm-coverage-1D}.

\begin{figure}[h]
	\FIGURE{
    \subcaptionbox{OpinionQA Dataset}
    {\includegraphics[width=0.46\linewidth]{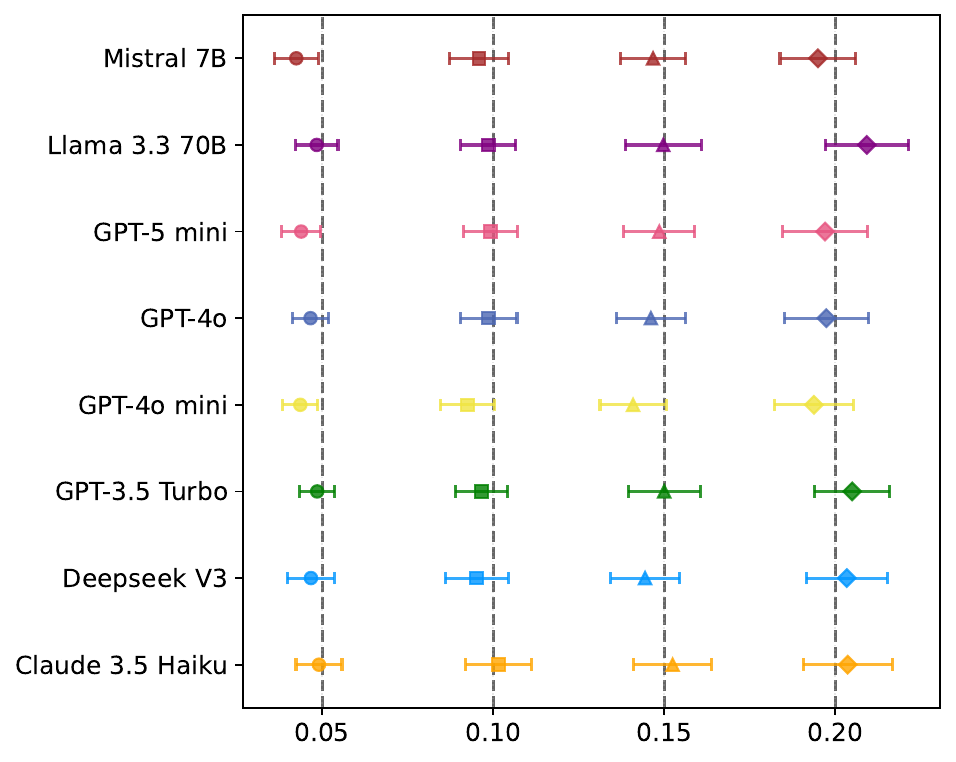}}
    \hfill\subcaptionbox{EEDI Dataset}
    {\includegraphics[width=0.46\linewidth]{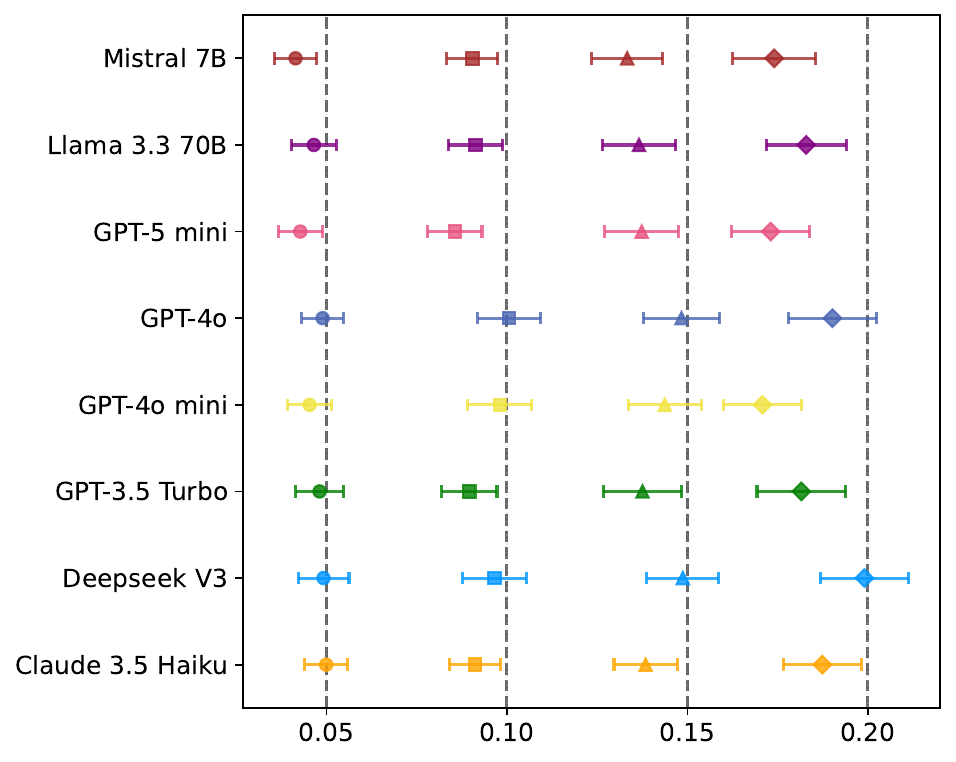}}
	}
	{Miscoverage Probability Proxy $\coverage_{\test}(\widehat{k})$ for Different LLMs and Target Miscoverage Levels $\alpha$. \label{fig-miscoverage-simple}}
	{Horizontal axis: target miscoverage level $\alpha$. Vertical axis: LLM. Circles, squares, triangles and diamonds represent $\coverage_{\test}(\widehat{k})$ for $\alpha=0.05,0.1,0.15,0.2$, respectively.}
\end{figure}

\paragraph{Sharpness of selected sample size.} To complement the coverage validity of the selected confidence interval, we demonstrate that our method does not select an overly conservative confidence interval $\simCIBern(\widehat{k})$ with an overly small sample size $\widehat{k}$. To that end, we compare $\widehat{k}$ with an oracle sample size $k_{\test}^* = \max\{k \in \ZZ_+ : \coverage_{\test}(k) \le \alpha \}$, which is the maximum sample size that guarantees $(1-\alpha)$ proxy coverage over the testing set of survey questions. \myCref{fig-sharpness-simple} plots the histogram of the relative error $\big|\widehat{k} - k^*_{\test}\big| / k_{\test}^*$ over $100$ random train-test splits, for GPT-4o on the OpinionQA dataset. We observe that the relative error between $\widehat{k}$ and $k_{\test}^*$ is within $0.25$ for at least $95\%$ of the random train-test splits. This shows that our method consistently selects a sample size $\widehat{k}$ close to $k^*_{\test}$. In \myCref{sec-appendix-sharpness-simple}, we provide results on the relative error $\big|\widehat{k} - k^*_{\test}\big| / k_{\test}^*$ for other LLMs and $\alpha$'s.

\begin{figure}[h]
\FIGURE
{\includegraphics[scale=0.5]{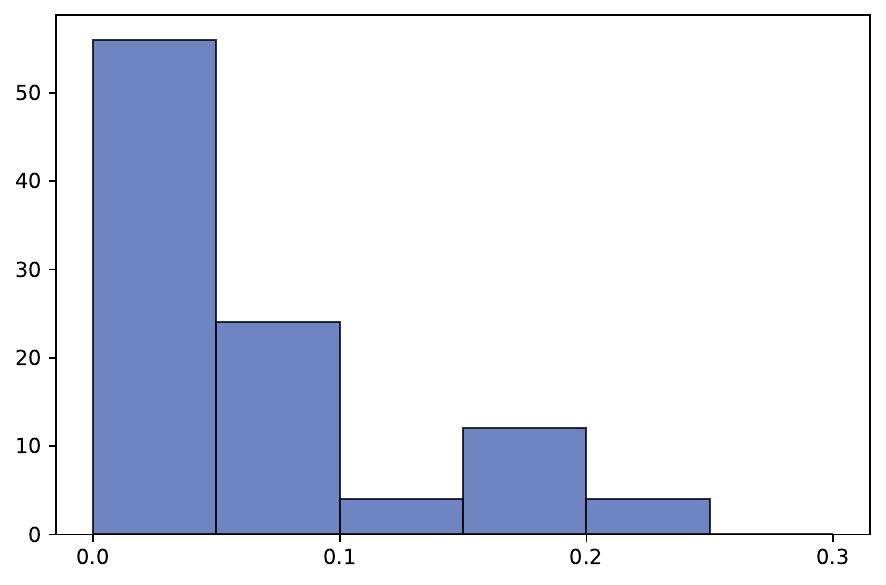}}
{Histogram of the Relative Error $\big|\widehat{k} - k^*_{\test}\big| / k_{\test}^*$ for GPT-4o on the OpinionQA Dataset. \label{fig-sharpness-simple}}
{Horizontal axis: relative error. Vertical axis: frequency. The histogram is over $100$ train-test splits.}
\end{figure}

Having validated our method, we now use the other two metrics, the interval width and the estimated hidden population size, to reveal the simulation fidelity of the LLMs.

\paragraph{Interval width.} In \myCref{fig-width-simple}, we plot the widths of the selected confidence intervals $\simCIBern(\widehat{k})$ for different LLMs and target miscoverage levels $\alpha$. As discussed in \myCref{sec-warmup}, the interval width reflects the human-LLM discrepancy. A shorter interval indicates that the LLM-generated responses support more precise inference about the human population, while a wider interval indicates lower LLM simulation fidelity. In simulating social opinions (\myCref{fig-width-simple-OpinionQA}), GPT-4o has the smallest misalignment gap among the LLMs considered. In simulating middle-school student answers to mathematics questions (\myCref{fig-width-simple-EEDI}), all tested LLMs yield wide confidence intervals, indicating substantial human-LLM misalignment.

We reiterate that our goal is to quantify the uncertainty induced by human-LLM discrepancy. Our method does not aim to improve an LLM's simulation fidelity. Instead, it uses confidence intervals to gauge this discrepancy. Thus, wide confidence intervals should be interpreted as evidence of low simulation fidelity among the tested LLMs, rather a failure of uncertainty quantification.

\begin{figure}[h]
	\FIGURE{
    \subcaptionbox{OpinionQA Dataset \label{fig-width-simple-OpinionQA}}
    {\includegraphics[width=0.49\linewidth]{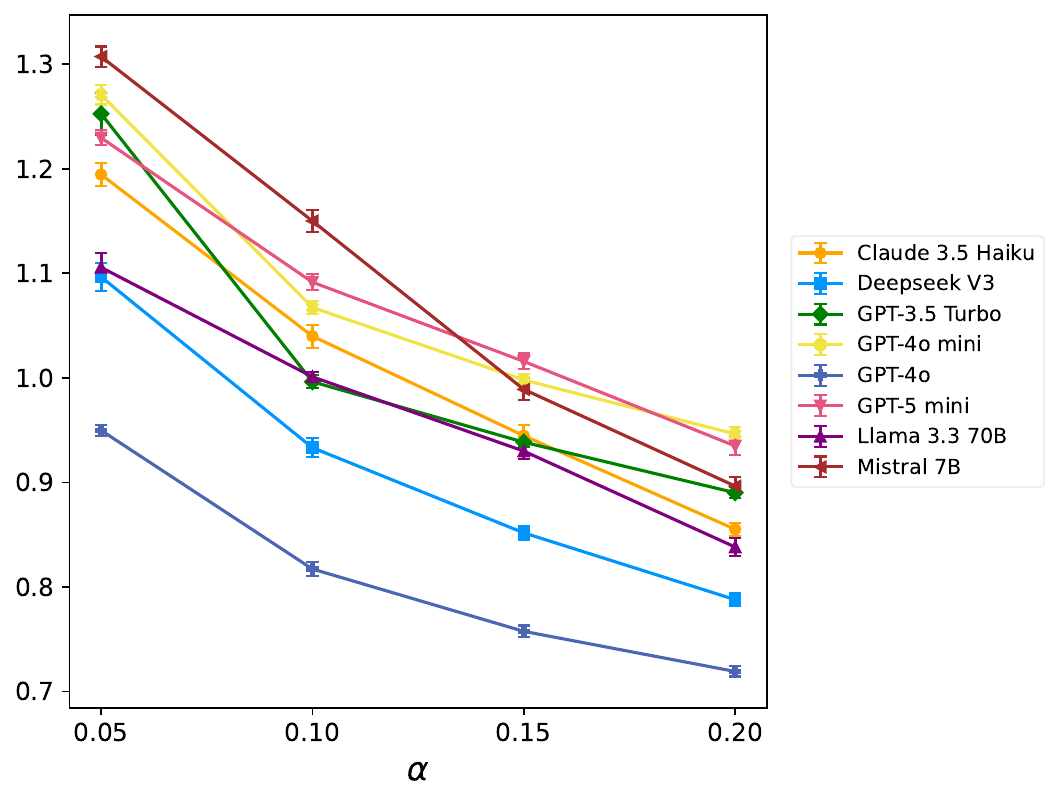}}
    \hfill\subcaptionbox{EEDI Dataset \label{fig-width-simple-EEDI}}
    {\includegraphics[width=0.49\linewidth]{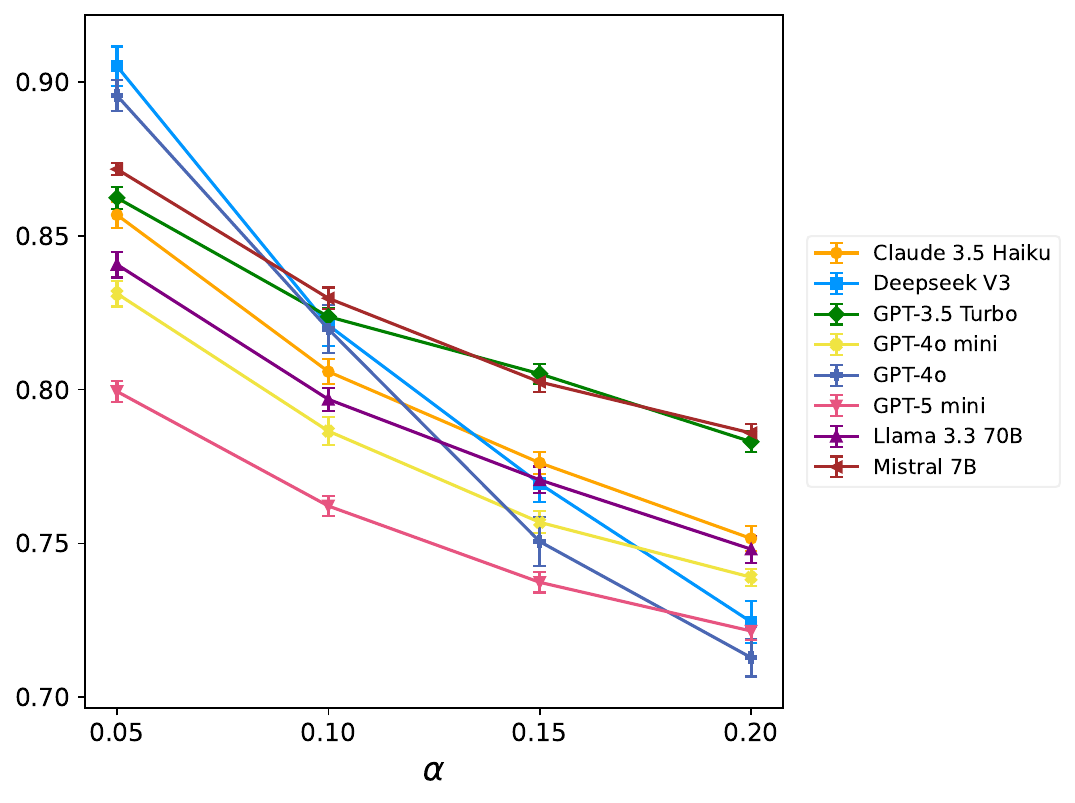}}
	}
	{Widths of Confidence Intervals $\simCIBern(\widehat{k})$ for Different LLMs and Target Levels $\alpha$. \label{fig-width-simple}}
	{Horizontal axis: target miscoverage level $\alpha$. Vertical axis: width of $\simCIBern(\widehat{k})$. The results are averaged over $100$ random train-test splits. The half-width of each error bar is 1.96 times the standard error.}
\end{figure}

\paragraph{Estimated hidden population size $\widehat{\kappa} = \widehat{k}/C$.} In \myCref{fig-kappa-simple}, we report $\widehat{\kappa} = \widehat{k}/C$ for various LLMs on the OpinionQA and EEDI datasets, for $\alpha=0.05$; in \myCref{sec-ablation-C} we examine the sensitivity of $\widehat{k}/C$ to the dilation factor $C$, where the estimates are shown to be stable. As discussed in \myCref{sec-interpretations}, the hidden population size $\kappa$ reflects the effective number of human respondents represented by the LLM. A larger $\widehat{\kappa}$ indicates that the LLM captures more information about the human population and therefore has higher simulation fidelity. On the OpinionQA dataset (\myCref{fig-kappa-simple-opinionQA}), GPT-4o has the largest estimated $\widehat{\kappa}$ of around $58$, followed by DeepSeek-V3 and GPT-3.5 Turbo. On the EEDI dataset (\myCref{fig-kappa-simple-EEDI}), the estimated $\widehat{\kappa}$ values are uniformly smaller, around or below $20$. These small effective sample sizes indicate limited simulation fidelity for middle-school mathematics questions. Comparing across the two datasets shows that the tested LLMs have substantially lower simulation fidelity for middle-school mathematics questions than for social opinions.

\begin{figure}[H]
	\FIGURE{
    \subcaptionbox{OpinionQA Dataset\label{fig-kappa-simple-opinionQA}}
    {\includegraphics[width=0.45\linewidth]{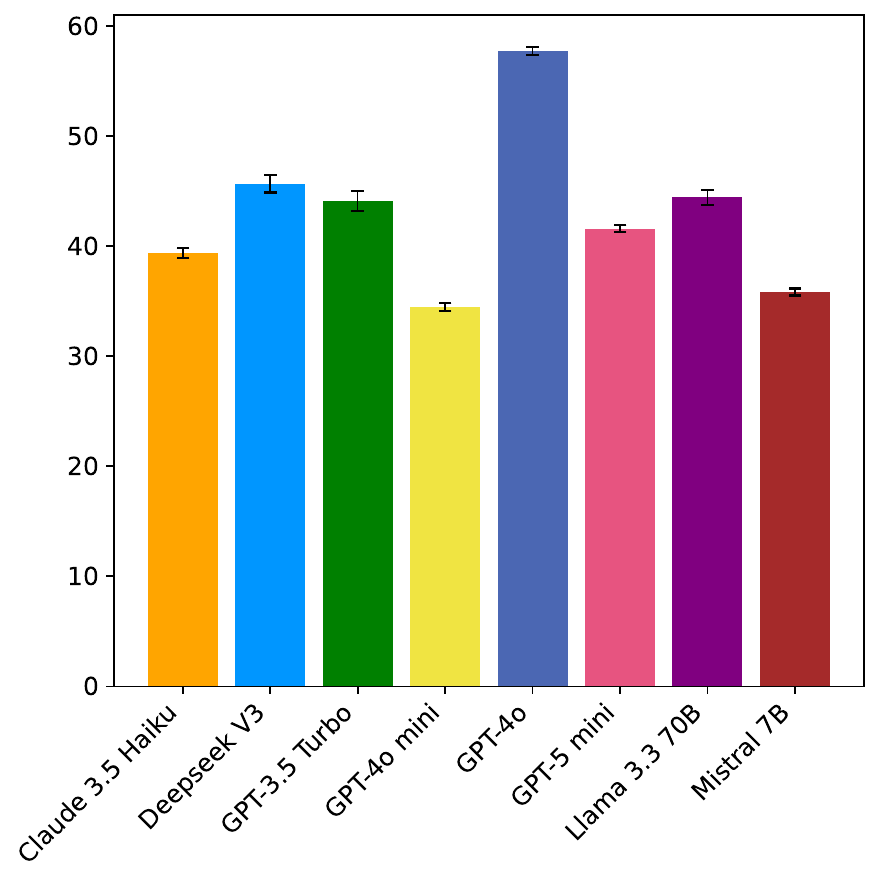}}
    \qquad\subcaptionbox{EEDI Dataset\label{fig-kappa-simple-EEDI}}
    {\includegraphics[width=0.45\linewidth]{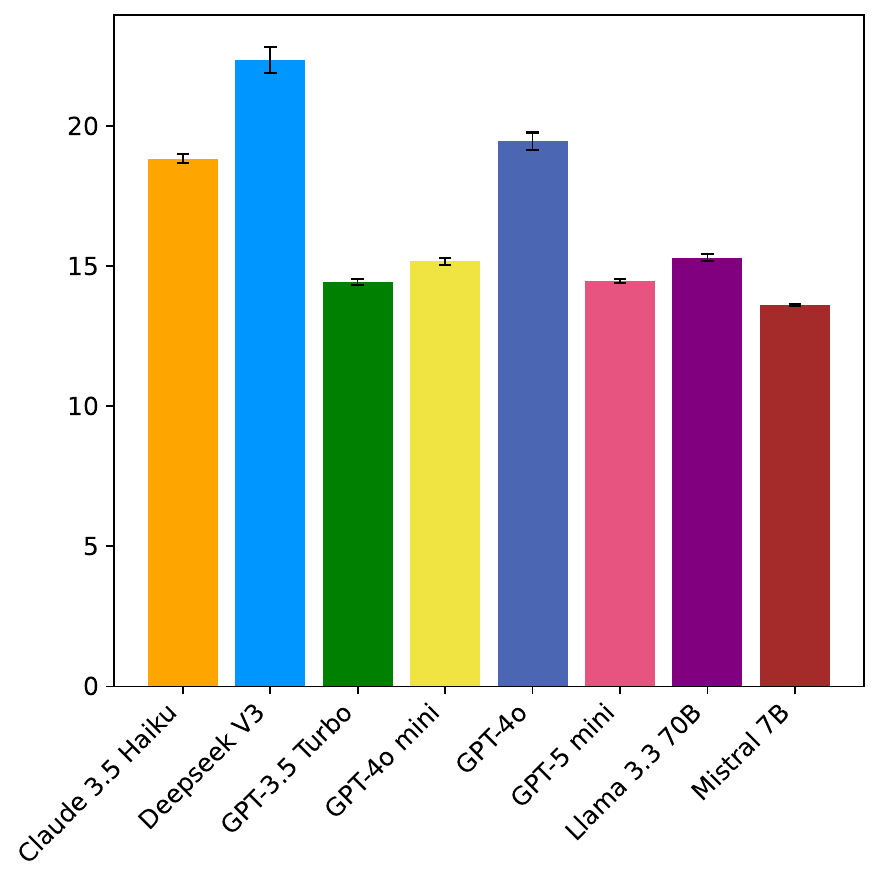}}
	}
	{Estimated Hidden Population Sizes $\widehat{\kappa}=\widehat{k}/C$ of Different LLMs. \label{fig-kappa-simple}}
	{The results are averaged over $100$ train-test splits. The half-width of each error bar is 1.96 times the standard error.}
\end{figure}

Our experiment results highlight a pitfall of using LLM-generated data. The ease of LLM simulation makes it tempting to generate a large number of responses. However, increasing the number of synthetic responses reduces only the simulation noise and does not eliminate human-LLM misalignment. Our results show that the tested LLMs have limited simulation fidelity, and that the fidelity is highly heterogeneous across both LLMs and domains. Consequently, na\"{i}vely generating too many synthetic samples can lead to overconfident and biased inference. Our framework provides a disciplined approach for quantifying this uncertainty and adaptively selecting an appropriate simulation sample size based on the LLM's fidelity.

\section{Discussions}\label{sec-discussions}

We developed a general framework for converting imperfect LLM-based survey simulations into statistically valid confidence sets for population parameters of human responses. These confidence sets quantify the uncertainty induced by human-LLM misalignment, and provide valid inference for the true human population parameter. Our approach identifies a simulation sample size for valid uncertainty quantification. The selected sample size is further shown to correspond to an effective human sample size and a fidelity measure of the LLM. Numerical experiments on multiple LLMs and real datasets verified the coverage guarantees of our approach and revealed substantial heterogeneity in the simulation fidelity across LLMs and domains.

Several future directions are worth exploring. First, our approach does not explicitly minimize the size of the confidence set. A natural question is whether we can incorporate a size minimization procedure to produce smaller confidence sets while maintaining valid coverage. Second, it would be interesting to see if our approach can be combined with debiasing methods to give more informative confidence sets. Finally, as prompt engineering is known to have crucial effects on the quality of LLM simulations, it is worth investigating the impacts of prompts on the selected simulation sample size $\widehat{k}$, and how prompt engineering can be leveraged to improve the fidelity of LLM simulations.

\section*{Acknowledgement}
Chengpiao Huang and Kaizheng Wang's research is supported by NSF grants DMS-2210907 and DMS-2515679, and a Data Science Institute seed grant SF-181 at Columbia University.

\newpage 

\appendix
\crefalias{section}{appendix}
\crefalias{subsection}{appendix}

\section{Proofs of Coverage Guarantees} 

In this appendix, we prove the coverage guarantees (\myCref{thm-coverage-1D} and \myCref{thm-coverage}) for our methods.

\subsection{Proof of \myCref{thm-coverage-1D}}\label{sec-thm-coverage-1D-proof}

We will prove the following stronger guarantee.

\begin{lemma}[Conditional coverage]\label{lem-conditional-coverage-1D}
Consider the setting of \myCref{thm-coverage-1D}. Let $\delta\in(0,1)$. With probability at least $1-\delta$,
\begin{equation}\label{eqn-conditional-coverage-1D}
\PP\Big( \mean \in \simCI(\widehat{k}) \Bigm| \widehat{k} \Big) \ge 1 - (1-2\eta)^{-1} \left( \alpha + \sqrt{\frac{2\log(1/\delta)}{m}} \right).
\end{equation}
\end{lemma}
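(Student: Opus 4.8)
The plan is to reduce the conditional coverage statement to a deterministic population quantity and then to control the data-dependent index $\widehat k$ with a single scalar concentration inequality. Write $p(k) = \PP(\mean\notin\simCI(k))$, which is deterministic in $k$. Because the fresh tuple $(\testfunction,\simdataset)$ is independent of the calibration data by Assumption~\ref{assumption-indep-data-1D}, and $\widehat k$ is a measurable function of the calibration data alone, conditioning on $\widehat k$ does not alter the joint law of $(\mean,\simCI(\cdot))$; hence $\PP(\mean\in\simCI(\widehat k)\mid\widehat k) = 1-p(\widehat k)$. It therefore suffices to show that, with probability at least $1-\delta$ over the calibration data, $p(\widehat k)\le(1-2\eta)^{-1}\big(\alpha+\sqrt{2\log(1/\delta)/m}\big)$.

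Next I would compare $p$ pointwise to the population proxy miscoverage $q(k)=\PP(\responsebar_j\notin\simCI_j(k))$, which is the common mean of the indicators averaged in $\coverage(k)$ and is identical across $j$ by the i.i.d.\ structure of Assumptions~\ref{assumption-iid-test-1D}--\ref{assumption-indep-data-1D}. Conditioning on $(\testfunction_j,\simdataset_j)$ fixes the interval $\simCI_j(k)$; when $\mean_j\notin\simCI_j(k)$ the interval lies entirely to one side of $\mean_j$, and the assumption $\PP(\responsebar_j\le\mean_j\mid\testfunction_j)\in[\tfrac12-\eta,\tfrac12+\eta]$ forces $\responsebar_j$, with conditional probability at least $\tfrac12-\eta$, to lie on the side of $\mean_j$ opposite to the interval and hence outside $\simCI_j(k)$ (a short case split on the two possible sides, using that $\responsebar_j$ is independent of $\simdataset_j$ given $\testfunction_j$, handles the discreteness of $\responsebar_j$). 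Taking expectations gives $q(k)\ge(\tfrac12-\eta)\,p(k)$, i.e.\ $p(k)\le\frac{2}{1-2\eta}\,q(k)$ for every fixed $k$.

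The crux is transferring the constraint $\coverage(\widehat k)\le\alpha/2$ built into the selection rule to a bound on the population quantity $q(\widehat k)$ at the random index $\widehat k$, without paying a union bound over the $K+1$ candidate sizes. Set $\epsilon=\sqrt{\log(1/\delta)/(2m)}$ and define the deterministic threshold index $k_0=\max\{0\le k\le K:\ q(i)\le\alpha/2+\epsilon\ \ \forall\, i\le k\}$. The key device is that $k_0$ carries the same ``for all $i\le k$'' quantifier as $\widehat k$: on the event $\{\widehat k\le k_0\}$ we immediately obtain $q(\widehat k)\le\alpha/2+\epsilon$, with no monotonicity of $q$ required. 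To bound $\PP(\widehat k>k_0)$, note the event is empty when $k_0=K$; otherwise $\widehat k>k_0$ entails $\coverage(k_0+1)\le\alpha/2$, while $q(k_0+1)>\alpha/2+\epsilon$ by definition of $k_0$, so $q(k_0+1)-\coverage(k_0+1)>\epsilon$. Since $\coverage(k_0+1)$ is an average of $m$ i.i.d.\ $\Bernoulli(q(k_0+1))$ indicators and $k_0+1$ is a fixed index, a single one-sided Hoeffding bound yields $\PP(\widehat k>k_0)\le e^{-2m\epsilon^2}=\delta$.

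Combining the pieces finishes the proof: with probability at least $1-\delta$ we have $\widehat k\le k_0$, hence $q(\widehat k)\le\alpha/2+\epsilon$, and the pointwise comparison gives $p(\widehat k)\le\frac{2}{1-2\eta}(\alpha/2+\epsilon)=(1-2\eta)^{-1}(\alpha+2\epsilon)=(1-2\eta)^{-1}\big(\alpha+\sqrt{2\log(1/\delta)/m}\big)$; plugging into the reduction of the first paragraph yields \eqref{eqn-conditional-coverage-1D}. I expect the main obstacle to be exactly this third step---handling the adaptivity of $\widehat k$---and the resolution is to phrase the population threshold $k_0$ with the same universal quantifier as the estimator, which simultaneously eliminates the $\log K$ that a naive union bound would incur and removes any monotonicity requirement on $q$, reducing everything to a one-point concentration inequality at the deterministic index $k_0+1$.
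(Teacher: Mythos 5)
Your proof is correct and follows essentially the same route as the paper's: it reduces the conditional coverage to the deterministic miscoverage function $p(\widehat k)$ via independence of $\widehat k$ from $(\testfunction,\simdataset)$, lower-bounds the proxy miscoverage by $(\tfrac12-\eta)$ times the true miscoverage via the same side-splitting argument, and controls the adaptive index by a single Hoeffding bound at one deterministic threshold index (your $k_0+1$ plays the role of the paper's $\bar k$, defined through the proxy rather than the true miscoverage, which changes nothing in the final constants). The only nit is that $q(k)=\PP\big(\responsebar_j\not\in\simCI_j(k)\big)$ need not be identical across $j$ when the $n_j$ differ; replacing $q$ by the average $\frac1m\sum_{j=1}^m q_j$ throughout (as the paper implicitly does) repairs this with no other change, since Hoeffding's inequality does not require identically distributed summands.
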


By \myCref{lem-conditional-coverage-1D}, we obtain
\begin{align*}
\PP\Big( \mean \in \simCI(\widehat{k}) \Big)
&=
\EE\left[ \PP \Big( \mean \in \simCI(\widehat{k}) \Bigm| \widehat{k} \Big) \right]
=
\int_0^{\infty} \PP \left( \PP \Big( \mean \in \simCI(\widehat{k}) \Bigm| \widehat{k} \Big) > t \right) \,dt \\
&\ge 
\int_0^{1-\alpha/(1-2\eta)} \left[ 1 - \exp\left( -\frac{m(1-2\eta)^2}{2} \Big(t - \Big(1-\frac{\alpha}{1-2\eta} \Big) \Big)^2 \right) \right] \,dt \\
&\ge 
1- \frac{\alpha}{1-2\eta} - \int_{-\infty}^{1-\alpha/(1-2\eta)} \exp\left( -\frac{m(1-2\eta)^2}{2} \Big(t - \Big(1-\frac{\alpha}{1-2\eta} \Big) \Big)^2 \right) \,dt \\
&\ge 
1- (1-2\eta)^{-1} \bigg( \alpha + \sqrt{\frac{2}{m}} \bigg).
\end{align*}

We will now prove \myCref{lem-conditional-coverage-1D}. Define $\varepsilon = \sqrt{2\log(1/\delta) / m}$ and a deterministic oracle sample size
\begin{equation}\label{eqn-oracle-k-1D}
\bar{k} = \inf \left\{ k\in [K] : \PP \Big( \mean \not\in \simCI(k) \Big)  
> (1-2\eta)^{-1} \left(\alpha + \varepsilon\right) \right\}.
\end{equation}
If $\bar{k} = \inf\emptyset$ does not exist, then there is nothing to prove. Now suppose that $\bar{k} \in [K]$ exists. We will prove that with probability at least $1-\delta$, it holds that $\coverage(\bar{k}) > \alpha/2$. When this event happens, we have $\widehat{k} < \bar{k}$, which implies $\PP\big( \mean \not\in \simCI(\widehat{k}) \bigm| \widehat{k} \big) \le (1-2\eta)^{-1}(\alpha+\varepsilon)$ and thus \eqref{eqn-conditional-coverage-1D}, thanks to the independence of $\widehat{k}$ and $(\testfunction,\simdataset)$.

By Hoeffding's inequality (e.g., Theorem 2.8 in \cite{BLM13}) and the conditional independence of $(\dataset_1,\simdataset_1),...,(\dataset_m,\simdataset_m)$ given $(\testfunction_1,...,\testfunction_m)$,
\begin{equation}\label{eqn-Hoeffding-1D}
\PP\left( \coverage(\bar{k}) \ge \frac{1}{m} \sum_{j=1}^m \PP\Big( \responsebar_j \not\in \simCI_j(\bar{k}) \Big) -  \sqrt{\frac{\log(1/\delta)}{2m}} \right) \ge 1-\delta.
\end{equation}
We now bound $\PP\big( \responsebar_j \not\in \simCI_j(\bar{k}) \big)$. For each $j\in[m]$ and $k\in[K]$,
\[
\ind \left\{ \responsebar_j \not\in \simCI_j(k) \right\}
\ge 
\ind \left\{ \responsebar_j < \mean_{j} \text{ and } \mean_{j} < \min \simCI_j(k) \right\}
+
\ind \left\{ \responsebar_j \ge \mean_{j} \text{ and } \mean_{j} > \max \simCI_j(k) \right\}.
\]
By the conditional independence of $\dataset_j$ and $\simdataset_j$ given $\testfunction_j$,
\begin{align*}
\PP \Big( \responsebar_j < \mean_{j} \text{ and } \mean_{j} < \min \simCI_j(k) \Big)
&=
\EE\Big[ \PP \Big( \responsebar_j < \mean_{j} \Bigm| \testfunction_j \Big) \cdot \PP \Big( \mean_{j} < \min \simCI_j(k)  \Bigm| \testfunction_j \Big) \Big] \\
&\ge
\EE\left[ \left(\frac{1}{2} - \eta \right) \cdot \PP \Big( \mean_{j} < \min \simCI_j(k)  \Bigm| \testfunction_j \Big) \right] \\
&=
\left(\frac{1}{2} - \eta \right) \cdot \PP\Big( \mean_{j} < \min \simCI_j(k)  \Big).
\end{align*}
Similarly,
\[
\PP \Big( \responsebar_j \ge \mean_{j} \text{ and } \mean_{j} > \max \simCI_j(k) \Big)
\ge
\left( \frac{1}{2} - \eta \right) \cdot \PP\Big( \mean_{j} > \max \simCI_j(k) \Big).
\]
Therefore,
\begin{align}
\PP \Big( \responsebar_j \not\in \simCI_j(k) \Big)
&\ge 
\left( \frac{1}{2} - \eta \right) \cdot \left[ \PP\Big( \mean_{j} < \min \simCI_j(k) \Big) + \PP\Big( \mean_{j} > \max \simCI_j(k) \Big) \right] \notag \\
&=
\left( \frac{1}{2} - \eta \right) \cdot \PP \Big( \mean_{j} \not\in \simCI_j(k) \Big) \notag \\
&=
\left( \frac{1}{2} - \eta \right) \cdot \PP \Big( \mean \not\in \simCI(k) \Big). \label{eqn-proxy-to-oracle-precise-1D}
\end{align}
When the event in \eqref{eqn-Hoeffding-1D} happens,
\begin{align*}
\coverage(\bar{k}) 
&
\ge \frac{1}{m} \sum_{j=1}^m \PP\Big( \responsebar_j \not\in \simCI_j(\bar{k}) \Big) -  \sqrt{\frac{\log(1/\delta)}{2m}} \\
&\ge 
\left( \frac{1}{2} - \eta \right) \cdot \PP \Big( \mean \not\in \simCI(\bar{k}) \Big) -  \sqrt{\frac{\log(1/\delta)}{2m}} \tag{by \eqref{eqn-proxy-to-oracle-precise-1D}} \\
&> 
\frac{\alpha}{2}. \tag{by definition of $\bar{k}$}
\end{align*}

\subsection{Proof of \myCref{thm-coverage}}\label{sec-thm-coverage-proof}

We will prove the following stronger guarantee.

\begin{lemma}[Conditional coverage]\label{lem-conditional-coverage}
Consider the setting of \myCref{thm-coverage}. Let $\delta\in(0,1)$. With probability at least $1-\delta$,
\begin{equation}\label{eqn-conditional-coverage}
\PP\Big( \statistic(\testfunction)  \in \simCIalt(\widehat{k}) \Bigm| \widehat{k} \Big) \ge 1 - \alpha - \gamma^{-1} \sqrt{\frac{\log(1/\delta)}{2m}}.
\end{equation}
\end{lemma}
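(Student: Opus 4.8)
The plan is to follow the blueprint of the proof of \myCref{lem-conditional-coverage-1D} almost verbatim, with the point-estimate proxy $\responsebar_j$ replaced by the set-valued proxy $\CIalt_j$ and the CLT factor $\tfrac12-\eta$ replaced by the confidence level $\gamma$. Set $\varepsilon_0 = \sqrt{\log(1/\delta)/(2m)}$ and introduce the deterministic oracle sample size
\[
\bar{k} = \inf\big\{ k\in[K] : \PP\big( \statistic(\testfunction) \notin \simCIalt(k) \big) > \alpha + \gamma^{-1}\varepsilon_0 \big\}.
\]
If this infimum is over the empty set, then $\PP(\statistic(\testfunction)\notin\simCIalt(k))\le\alpha+\gamma^{-1}\varepsilon_0$ for every $k\in\{0,\dots,K\}$ and the conclusion is immediate; so I assume $\bar{k}\in[K]$ exists. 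The target is to show that $\coveragealt(\bar{k})>\gamma\alpha$ with probability at least $1-\delta$, because on this event the selection rule \eqref{eqn-empirical-criterion} forces $\widehat{k}<\bar{k}$, and then the definition of $\bar{k}$ together with the independence of $\widehat{k}$ and $(\testfunction,\simdataset)$ (Assumption \myref{assumption-indep-data-1D}) yields $\PP(\statistic(\testfunction)\notin\simCIalt(\widehat{k})\mid\widehat{k})\le\alpha+\gamma^{-1}\varepsilon_0$, which is exactly \eqref{eqn-conditional-coverage}.

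The key step, replacing \eqref{eqn-proxy-to-oracle-precise-1D}, is the inequality $\PP(\CIalt_j\not\subseteq\simCIalt_j(k))\ge\gamma\cdot\PP(\statistic(\testfunction)\notin\simCIalt(k))$ for each $j$ and $k$. I would derive it from the pointwise containment $\{\statistic(\testfunction_j)\in\CIalt_j\}\cap\{\statistic(\testfunction_j)\notin\simCIalt_j(k)\}\subseteq\{\CIalt_j\not\subseteq\simCIalt_j(k)\}$: if the true parameter lies in $\CIalt_j$ but not in $\simCIalt_j(k)$, then $\CIalt_j$ cannot be a subset of $\simCIalt_j(k)$. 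Conditioning on $\testfunction_j$ and using that $\CIalt_j$ (built from $\dataset_j$) and $\simCIalt_j(k)$ (built from $\simdataset_j$) are independent given $\testfunction_j$, together with the calibration guarantee $\PP(\statistic(\testfunction_j)\in\CIalt_j\mid\testfunction_j)\ge\gamma$ from \eqref{eqn-CI-calibrate}, gives
\[
\PP\big(\CIalt_j\not\subseteq\simCIalt_j(k)\big)\ge \EE\big[\gamma\cdot\PP(\statistic(\testfunction_j)\notin\simCIalt_j(k)\mid\testfunction_j)\big]=\gamma\cdot\PP\big(\statistic(\testfunction)\notin\simCIalt(k)\big),
\]
where the final equality uses that $(\testfunction_j,\simdataset_j)$ and $(\testfunction,\simdataset)$ are identically distributed under Assumptions \myref{assumption-iid-test-1D} and \myref{assumption-indep-data-1D}. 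Notably, this step is cleaner than in one dimension: no left/right split of the real line is needed, so the factor is the full $\gamma$ rather than $\tfrac12-\eta$.

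To finish, I observe that for the fixed deterministic $\bar{k}$ the indicators $\{\ind\{\CIalt_j\not\subseteq\simCIalt_j(\bar{k})\}\}_{j=1}^m$ are i.i.d.\ (the tuples $(\testfunction_j,\dataset_j,\simdataset_j)$ are i.i.d.\ under Assumptions \myref{assumption-iid-test-1D} and \myref{assumption-indep-data-1D}) and bounded in $[0,1]$. A one-sided Hoeffding bound then gives, with probability at least $1-\delta$,
\[
\coveragealt(\bar{k})\ge \PP\big(\CIalt_1\not\subseteq\simCIalt_1(\bar{k})\big)-\varepsilon_0\ge \gamma\cdot\PP\big(\statistic(\testfunction)\notin\simCIalt(\bar{k})\big)-\varepsilon_0>\gamma\big(\alpha+\gamma^{-1}\varepsilon_0\big)-\varepsilon_0=\gamma\alpha,
\]
using the key inequality and the defining strict inequality of $\bar{k}$. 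This is precisely the event required in the first paragraph, completing the proof of the lemma; \myCref{thm-coverage} then follows by integrating \eqref{eqn-conditional-coverage} over $\delta$ exactly as in the passage from \myCref{lem-conditional-coverage-1D} to \myCref{thm-coverage-1D}.

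I expect the main obstacle to be the bookkeeping around conditioning and independence in the key inequality — in particular, justifying that $\CIalt_j$ and $\simCIalt_j(k)$ are conditionally independent given $\testfunction_j$, and that $\widehat{k}$ is independent of the test pair $(\testfunction,\simdataset)$, both of which rest on carefully parsing Assumption \myref{assumption-indep-data-1D}. The measure-theoretic handling of the infimum $\bar{k}$ and the empty-set case is routine, and the probabilistic core (containment $\Rightarrow$ factor $\gamma$, then Hoeffding) is a direct translation of the one-dimensional argument.
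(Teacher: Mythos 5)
Your proposal matches the paper's proof essentially verbatim: the same oracle sample size $\bar{k}$ (your $\alpha+\gamma^{-1}\varepsilon_0$ is exactly the paper's $\alpha+\varepsilon$), the same containment argument $\{\statistic(\testfunction_j)\in\CIalt_j\}\cap\{\statistic(\testfunction_j)\notin\simCIalt_j(k)\}\subseteq\{\CIalt_j\not\subseteq\simCIalt_j(k)\}$ combined with conditional independence to get the factor $\gamma$, the same Hoeffding step, and the same final computation $\gamma(\alpha+\gamma^{-1}\varepsilon_0)-\varepsilon_0=\gamma\alpha$. The argument is correct and complete.
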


By \myCref{lem-conditional-coverage}, we obtain
\begin{align*}
\PP\Big( \statistic (\testfunction)  \in \simCIalt(\widehat{k}) \Big)
&=
\EE\left[ \PP \Big( \statistic (\testfunction)  \in \simCIalt(\widehat{k}) \Bigm| \widehat{k} \Big) \right] \\
&=
\int_0^{\infty} \PP \left( \PP \Big( \statistic (\testfunction)  \in \simCIalt(\widehat{k}) \Bigm| \widehat{k} \Big) > t \right) \,dt \\
&\ge 
\int_0^{1-\alpha} \left[ 1 - \exp\left( -2m\gamma^2 \big(t - (1-\alpha) \big)^2 \right) \right] \,dt \\
&\ge 
1-\alpha - \int_{-\infty}^{1-\alpha} \exp\left( -2m\gamma^2 \big(t - ( 1-\alpha) \big)^2 \right) \,dt \\
&\ge 
1-\alpha - \gamma^{-1} \sqrt{\frac{1}{2m}}.
\end{align*}

We now prove \myCref{lem-conditional-coverage}. Define $\varepsilon = \gamma^{-1} \sqrt{\frac{\log(1/\delta)}{2m}} $ and a deterministic oracle sample size
\begin{equation}\label{eqn-oracle-k}
\bar{k} = \inf \left\{ k\in [K] : \PP \Big( \statistic (\testfunction)  \not\in \simCIalt(k) \Big)  
> \alpha + \varepsilon \right\}.
\end{equation}
If $\bar{k} = \inf \emptyset$ does not exist, then there is nothing to prove. Now suppose $\bar{k} \in [K]$ exists. We will prove that with probability at least $1-\delta$, it holds that $\coveragealt(\bar{k}) > \gamma\alpha$. When this event happens, we have $\widehat{k} < \bar{k}$, which implies $\PP\big( \statistic (\testfunction)  \not\in \simCIalt(\widehat{k}) \bigm| \widehat{k} \big) \le \alpha+\varepsilon$ and thus \eqref{eqn-conditional-coverage}, thanks to the independence of $\widehat{k}$ and $(\testfunction,\simdataset)$.

By Hoeffding's inequality (e.g., Theorem 2.8 in \cite{BLM13}) and the conditional independence of $(\dataset_1,\simdataset_1),...,(\dataset_m,\simdataset_m)$ given $(\testfunction_1,...,\testfunction_m)$,
\begin{equation}\label{eqn-Hoeffding}
\PP\left( \coveragealt(\bar{k}) \ge \frac{1}{m} \sum_{j=1}^m \PP\Big( \CIalt_j \not\subseteq \simCIalt_j(\bar{k}) \Big) -  \sqrt{\frac{\log(1/\delta)}{2m}} \right) \ge 1-\delta.
\end{equation}
We now bound $\PP\big( \CIalt_j \not\subseteq \simCIalt_j(\bar{k}) \big)$. For each $j\in[m]$ and $k\in[K]$,
\[
\ind \left\{ \CIalt_j \not\subseteq \simCIalt_j(k) \right\}
\ge 
\ind \left\{ \statistic (\testfunction_j) \in \CIalt_j \text{ and } \statistic (\testfunction_j) \not\in \simCIalt_j(k) \right\}.
\]
By the conditional independence of $\dataset_j$ and $\simdataset_j$ given $\testfunction_j$,
\begin{align}
\PP \Big( \CIalt_j \not\subseteq \simCIalt_j(k) \Big)
&\ge
\EE\Big[ \PP \Big( \statistic (\testfunction_j) \in \CIalt_j \text{ and } \statistic (\testfunction_j) \not\in \simCIalt_j(k)  \Bigm| \testfunction_j \Big) \Big] \notag \\
&=
\EE\Big[ \PP \Big( \statistic (\testfunction_j) \in \CIalt_j \Bigm| \testfunction_j \Big) \cdot \PP \Big( \statistic (\testfunction_j) \not\in \simCIalt_j(k)  \Bigm| \testfunction_j \Big) \Big] \notag \\
&\ge
\EE\Big[ \gamma \cdot \PP \Big( \statistic (\testfunction_j) \not\in \simCIalt_j (k) \Bigm| \testfunction_j \Big) \Big] \notag \\
&=
\gamma \cdot \PP\Big( \statistic (\testfunction_j) \not\in \simCIalt_j(k)  \Big) \notag \\
&=
\gamma \cdot \PP\Big( \statistic (\testfunction)  \not\in \simCIalt(k)  \Big). \label{eqn-proxy-to-oracle-precise}
\end{align}
Therefore, when the event in \eqref{eqn-Hoeffding} happens,
\begin{align*}
\coveragealt(\bar{k}) 
&
\ge \frac{1}{m} \sum_{j=1}^m \PP \Big( \CIalt_j \not\subseteq \simCIalt_j(\bar{k}) \Big) -  \sqrt{\frac{\log(1/\delta)}{2m}} \\
&\ge 
\gamma \cdot \PP\Big( \statistic (\testfunction)  \not\in \simCIalt(\bar{k}) \Big) -  \sqrt{\frac{\log(1/\delta)}{2m}} \tag{by \eqref{eqn-proxy-to-oracle-precise}} \\
&> 
\gamma\alpha. \tag{by definition of $\bar{k}$}
\end{align*}

\section{Characterizations of the Selected Simulation Sample Size}\label{sec-effective-sample-size-full}

In this appendix, we characterize the selected simulation sample sizes $\widehat{k} = \widehat{k}(C)$ and $\widehat{k}_{\KL} = \widehat{k}_{\KL}(C)$ selected for the CLT-based and concentration-based confidence intervals. The results show that they are approximately inversely proportional to the human-LLM discrepancy measures, which is similar to the behavior of their oracle counterparts $k^*(C)$ and $k^*_{\KL}(C)$ in \myCref{thm-oracle-effective-sample-size-chi} and \myCref{thm-oracle-effective-sample-size-KL}, respectively. These results establish $\widehat{k}$ and $\widehat{k}_{\KL}$ as meaningful empirical measures of the LLM's simulation fidelity.

\subsection{CLT-Based Confidence Interval}

\myCref{thm-effective-sample-size-chi} below characterizes the behavior of $\widehat{k}(C)$ defined by \eqref{eqn-selected-sample-size}.

\begin{theorem}[Simulation sample size]\label{thm-effective-sample-size-chi}
Let Assumption \myref{assumption-nondegeneracy-synthetic} hold. Suppose the $m$ calibration questions have the same number of human responses: $n_j=n$ for each $j\in[m]$. Take $\delta\in(0,1)$, and suppose $m\ge 32\alpha^{-1}\log(2K/\delta)$. There exist deterministic quantities $k_-(\alpha, C)$ and $k_+(\alpha, C)$ such that for every $C > 1$, the following holds with probability at least $1-\delta$:
\[
\min\left\{ k_-(\alpha, C), \, K \right\} \le \widehat{k}(C) \le  \min \left\{ k_+(\alpha, C), \, K \right\}.
\]
Here, $k_-(\alpha, C)$ and $k_+(\alpha, C)$ satisfy
\begin{align*}
& \lim_{C \to\infty} \frac{k_-(\alpha, C)}{C} = \left( \frac{z_{\alpha/2}}{ \quantile_{1-\alpha/8}^{1/2}\left( \discchireal(\testfunction) \right) + c_0 n^{-1/2} } \right)^2, \\[4pt]
& \lim_{C\to\infty} \frac{k_+(\alpha, C)}{C}
=
\left( \frac{z_{\alpha/2} }{\big( \quantile_{1-\alpha}^{1/2}\left( \discchireal(\testfunction) \right) - c_0n^{-1/2} \big)_+ } \right)^2,
\end{align*}
where $c_0$ is a constant depending on $\alpha$ and $\eta$.
\end{theorem}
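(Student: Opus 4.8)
The plan is to decouple the three sources of randomness in $\widehat{k}(C)$ and analyze them in sequence. Write $g(k) = \PP\big(\responsebar_j \notin \simCI_j(k;C)\big)$ for the population miscoverage of a single calibration question, which is identical across $j$ by Assumptions \myref{assumption-iid-test-1D}--\myref{assumption-indep-data-1D}, so that the empirical metric in \eqref{eqn-selected-sample-size} is an average of $m$ i.i.d.\ Bernoulli$\big(g(k)\big)$ indicators. I would then introduce deterministic, population-level sample sizes
\[
\underline{k} = \max\Big\{ k : g(i) \le \tfrac{\alpha}{4}\ \ \forall i \le k \Big\}, \qquad \overline{k} = \max\Big\{ k : g(i) \le \alpha\ \ \forall i \le k \Big\},
\]
and take $k_-(\alpha,C)$ and $k_+(\alpha,C)$ to be $\underline k$ and $\overline k$ (up to the vanishing Monte Carlo correction). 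The quantifier ``$\forall i \le k$'' is retained precisely because $g$ (hence $\coverage$) need not be monotone in $k$ at finite $C$.

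\emph{Step 1 (empirical $\to$ population).} Because the indicators are i.i.d.\ over $j$, a \emph{multiplicative} Chernoff bound controls $\coverage(k)$ around $g(k)$; a union bound over $k \in [K]$ together with $m \ge 32\alpha^{-1}\log(2K/\delta)$ gives, with probability at least $1-\delta$ and simultaneously for all $k$, the two one-sided implications $g(k) \le \tfrac{\alpha}{4} \Rightarrow \coverage(k) \le \tfrac{\alpha}{2}$ and $\coverage(k) \le \tfrac{\alpha}{2} \Rightarrow g(k) \le \alpha$. The multiplicative form (rather than additive Hoeffding) is what produces the $\alpha^{-1}$ rather than $\alpha^{-2}$ dependence in the sample-size requirement, and the factor-two slack is what turns the target level $\alpha/2$ into the levels $\alpha/4$ and $\alpha$ in $\underline k$ and $\overline k$. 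On this event the defining maximum in \eqref{eqn-selected-sample-size} is sandwiched as $\min\{\underline k, K\} \le \widehat{k}(C) \le \min\{\overline{k}, K\}$, by the same monotone-threshold argument used in the proof of \myCref{thm-coverage-1D}.

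\emph{Step 2 (large-$C$ asymptotics of $g$).} I would analyze $g$ along $k = \rho C$ as $C\to\infty$, decomposing $\responsebar_j - \simresponsebar_{j,k} = (\mean_j - \simmean_j) + (\responsebar_j - \mean_j) - (\simresponsebar_{j,k} - \simmean_j)$ against the threshold $z_{\alpha/2}\,\simsamplesd_{j,k}/\sqrt{\rho}$. Since $\simresponsebar_{j,k}\to\simmean_j$ and $\simsamplesd_{j,k}\to\simsd_j$, the Monte Carlo fluctuation vanishes while the $O(n^{-1/2})$ real-data fluctuation of $\responsebar_j$ persists, so the miscoverage event converges to $\widehat{\discchireal}_j := (\responsebar_j - \simmean_j)^2/\simsd_j^2 > z_{\alpha/2}^2/\rho$ and $g(\rho C) \to \PP\big(\widehat{\discchireal} > z_{\alpha/2}^2/\rho\big)$; this reuses the vanishing-Monte-Carlo decomposition underlying the $\varepsilon(\alpha,C)\to 0$ term of \myCref{thm-oracle-effective-sample-size-chi}. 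Inverting shows the critical $\rho$ at level $\beta$ tends to $z_{\alpha/2}^2/\quantile_{1-\beta}(\widehat{\discchireal})$. On the non-degeneracy event of \myCref{assumption-nondegeneracy-synthetic} (probability $\ge 1-\alpha/8$) one has $\simsd_j \ge \sqrt{\eta(1-\eta)}$, hence $|\widehat{\discchireal}_j{}^{1/2} - \discchireal(\testfunction_j)^{1/2}| \le c_0 n^{-1/2}$ with high probability; the $\alpha/8$ slack in the assumption is exactly what shifts the effective quantile level from $\alpha/4$ down to $\alpha/8$ on the lower side (degenerate questions are charged as miscovered, consuming $\alpha/8$ of the budget), while the upper side keeps level $\alpha$. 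Combining these conversions yields
\[
\lim_{C\to\infty}\frac{k_-(\alpha,C)}{C} = \left(\frac{z_{\alpha/2}}{\quantile_{1-\alpha/8}^{1/2}(\discchireal(\testfunction)) + c_0 n^{-1/2}}\right)^2, \qquad \lim_{C\to\infty}\frac{k_+(\alpha,C)}{C} = \left(\frac{z_{\alpha/2}}{\big(\quantile_{1-\alpha}^{1/2}(\discchireal(\testfunction)) - c_0 n^{-1/2}\big)_+}\right)^2,
\]
with $(\cdot)_+$ guarding against a vanishing denominator.

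The hard part will be the simultaneous bookkeeping of the three error scales and their interaction with the $\max$ and quantile operators. Concretely, I expect the main difficulty to be justifying the interchange of the $C\to\infty$ limit with the selection rule uniformly over the relevant range of $k$ despite the non-monotonicity of the finite-$C$ miscoverage, and pinning down the exact quantile levels $\alpha/8$ and $\alpha$, which arise only from the precise interplay between the multiplicative-concentration slack, the $\alpha/8$ budget for degenerate questions in \myCref{assumption-nondegeneracy-synthetic}, and the $O(n^{-1/2})$ proxy error in converting $\widehat{\discchireal}$ to $\discchireal$. The final limits then follow from \myCref{cor-sharpness}-style accounting once these pieces are in place.
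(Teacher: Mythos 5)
Your plan reproduces the architecture of the paper's proof in \myCref{sec-thm-effective-sample-size-chi-proof}: a uniform-in-$k$ concentration bound sandwiches $\widehat{k}(C)$ between two deterministic population-level sample sizes defined by perturbed miscoverage thresholds, and the large-$C$ limits are then extracted by decomposing $\responsebar_j - \simresponsebar_{j,k}$ into the systematic gap, the persistent $O(n^{-1/2})$ real-data fluctuation, and the Monte Carlo fluctuation that vanishes relative to the $\sqrt{C/k}$ half-width, followed by solving the resulting deterministic inequality for $k$ and a diagonalization over $\epsilon$ on the upper side. The one substantive local difference is your multiplicative Chernoff bound in Step 1 where the paper uses additive Hoeffding; this is arguably the better choice, since additive Hoeffding under $m\ge 32\alpha^{-1}\log(2K/\delta)$ yields slack $\sqrt{\alpha}/8$ rather than the $\alpha/8$ the paper asserts, so the multiplicative form is what genuinely supports the theorem's $\alpha^{-1}$ dependence on $m$. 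The price is that your intermediate thresholds become $\alpha/4$ and $\alpha$ rather than the paper's $3\alpha/8$ and $5\alpha/8$, and your budget arithmetic does not quite land on the advertised quantile levels: on the lower side, $\alpha/4$ minus the $\alpha/8$ non-degeneracy charge leaves nothing for the Bernstein and sample-variance concentration events (so the quantile level must be strictly below $\alpha/8$ unless those failure probabilities are made to vanish with $k$), and on the upper side, exhibiting miscoverage with probability exceeding $\alpha$ after intersecting with the good event forces a quantile level strictly above $\alpha$. These are constant-level adjustments rather than structural gaps, and you correctly flag the exact-constant bookkeeping and the limit interchange as the remaining delicate steps.
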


\begin{proof}[Proof of \myCref{thm-effective-sample-size-chi}]
See \myCref{sec-thm-effective-sample-size-chi-proof}.
\end{proof}

\myCref{thm-effective-sample-size-chi} shows that $\widehat{k}(C)$ exhibits the same fundamental behavior as the oracle sample size $k^*(C)$. With high probability, $\widehat{k}(C)/C$ is approximately reciprocal to the discrepancy $\quantile_{1-\alpha}\left(\discchireal(\testfunction)\right)$, up to an additional $O(1/n)$ error term due to the finite sample approximation error of $\responsebar_j\approx\mean_j$. Combined with \myCref{thm-kappa-IT-chi}, we see that under the Mechanical Turk model, $\widehat{k}(C)/C \gtrsim \kappa$ approximately, serving as an empirical estimate of the hidden population size $\kappa$.

\subsection{Concentration-Based Confidence Interval}

We now turn to sample size $\widehat{k}_{\KL}(C)$, which has a similar definition of $\widehat{k}(C)$, with $\simCI_j(i;C)$ replaced by $\simCIKL_j(i;C)$ in the definition \eqref{eqn-selected-sample-size}. \myCref{thm-effective-sample-size-KL} below characterizes the asymptotic behavior of $\widehat{k}_{\KL}(C)$.

\begin{theorem}[Simulation sample size]\label{thm-effective-sample-size-KL}
Let Assumption \myref{assumption-nondegeneracy-both} hold. Suppose the $m$ calibration questions have the same number of human responses: $n_j=n$ for each $j\in[m]$. Take $\delta\in(0,1)$, and suppose $m\ge 32\alpha^{-1}\log(2K/\delta)$. There exist a universal constant $c_1>0$ and deterministic quantities $k_-(\alpha, C)$ and $k_+(\alpha, C)$ such that for every $C>1$, when $n \ge c_1\log(1+\alpha^{-1})\cdot\eta^{-1}(1-\eta)^{-1}$, the following holds with probability at least $1-\delta$:
\[
\min\left\{ k_-(\alpha, C), \, K \right\} \le \widehat{k}_{\KL}(C) \le  \min \left\{ k_+(\alpha, C), \, K \right\}.
\]
Here, $k_-(\alpha, C)$ and $k_+(\alpha, C)$ satisfy
\begin{align*}
& \lim_{C \to\infty} \frac{k_-(\alpha, C)}{C} = \frac{\log(2/\alpha)}{ \Big( \quantile_{1-\alpha/8}^{1/2}\left( \discKLreal(\testfunction) \right) + c_2 n^{-1/2} \Big)^2 }, \\[4pt]
& \lim_{C\to\infty} \frac{k_+(\alpha, C)}{C}
=
\frac{ \log(2/\alpha) }{\Big( \quantile_{1-\alpha}^{1/2}\left( \discKLreal(\testfunction) \right) - c_2n^{-1/2} \Big)_+^2 },
\end{align*}
where $c_2>0$ is a constant depending on $\alpha$ and $\eta$.
\end{theorem}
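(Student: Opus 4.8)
The plan is to mirror the proof of \myCref{thm-effective-sample-size-chi} in \myCref{sec-thm-effective-sample-size-chi-proof}, replacing the $\chi^2$-based analysis of the CLT interval \eqref{eqn-CI-intro-sim} by a KL-based analysis of the concentration interval \eqref{eqn-CI-KL}, and using the oracle picture of \myCref{thm-oracle-effective-sample-size-KL} as the target population limit. The first step is to put the selection criterion in a workable form: by definition of $\simCIKL_j(k;C)$, the containment $\responsebar_j \in \simCIKL_j(k;C)$ is equivalent to $\KLdiv{\simresponsebar_{j,k}}{\responsebar_j} \le C\log(2/\alpha)/k$, so that $\widehat{k}_{\KL}(C)$ is the largest $k$ for which $\frac{1}{m}\sum_{j=1}^m \ind\{\KLdiv{\simresponsebar_{j,i}}{\responsebar_j} \le C\log(2/\alpha)/i\} \ge 1-\alpha/2$ holds for every $i \le k$. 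I would then analyze this empirical fraction through its population limit as $C\to\infty$, reducing $\widehat{k}_{\KL}(C)/C$ to a threshold crossing of $\PP(\discKLreal(\testfunction) \le \text{threshold})$.

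Three layers of approximation error must be controlled, each consuming part of the budget that turns the nominal level $1-\alpha/2$ into the quantile levels $1-\alpha/8$ and $1-\alpha$. First, Monte Carlo error: in the regime $k = \Theta(C)$, the synthetic sample size diverges as $C\to\infty$, so $\simresponsebar_{j,k}\to\simmean(\testfunction_j)$ and $\KLdiv{\simresponsebar_{j,k}}{\responsebar_j}\to\KLdiv{\simmean(\testfunction_j)}{\responsebar_j}$; I would make this uniform over the relevant $k$ via a union bound and the Bernoulli concentration of \myCref{lem-Ber-concentration-KL}, so that its contribution vanishes in the $C\to\infty$ limit defining $k_\pm$. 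Second, proxy error from $\responsebar_j\approx\mean(\testfunction_j)$: this is where the additive $c_2 n^{-1/2}$ correction and the square-root structure enter. Writing $\phi_j(p)=\sqrt{\KLdiv{\simmean(\testfunction_j)}{p}}$ and using $\partial_p\KLdiv{q}{p}=(p-q)/[p(1-p)]$, a perturbation bound gives $|\phi_j(\responsebar_j)-\phi_j(\mean(\testfunction_j))|\le c_2|\responsebar_j-\mean(\testfunction_j)|$ on the non-degenerate band supplied by Assumption \myref{assumption-nondegeneracy-both}; combined with $|\responsebar_j-\mean(\testfunction_j)|=O(n^{-1/2})$, this produces the stated correction on $\sqrt{\quantile_{1-\alpha/8}(\discKLreal(\testfunction))}$ and $\sqrt{\quantile_{1-\alpha}(\discKLreal(\testfunction))}$. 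Third, empirical fluctuation of the fraction over the $m$ calibration questions: a Hoeffding bound together with a union bound over $k\in[K]$ yields the requirement $m\ge 32\alpha^{-1}\log(2K/\delta)$ and controls, uniformly in $k$ and within probability $\delta$, the gap between the empirical fraction and $\PP(\discKLreal(\testfunction)\le\cdot)$.

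With these controls in place, I would define $k_-(\alpha,C)$ and $k_+(\alpha,C)$ as the deterministic sample sizes at which the population coverage probability crosses the two shifted thresholds, and establish the sandwich $\min\{k_-,K\}\le\widehat{k}_{\KL}(C)\le\min\{k_+,K\}$ on the intersection of the high-probability events above. For the lower bound I would verify that at $k=k_-$ the coverage fraction stays above $1-\alpha/2$ for all smaller indices (forcing $\widehat{k}_{\KL}(C)\ge\min\{k_-,K\}$); for the upper bound I would exhibit a single index $i\le k_+$ at which the fraction drops below $1-\alpha/2$ (forcing $\widehat{k}_{\KL}(C)\le\min\{k_+,K\}$). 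Taking $C\to\infty$, the relation $C\log(2/\alpha)/k\to\log(2/\alpha)/t$ with $t=k/C$ turns the crossing conditions into $\log(2/\alpha)/t=(\sqrt{\quantile_{1-\alpha/8}(\discKLreal(\testfunction))}+c_2 n^{-1/2})^2$ and $\log(2/\alpha)/t=(\sqrt{\quantile_{1-\alpha}(\discKLreal(\testfunction))}-c_2 n^{-1/2})_+^2$, yielding the claimed limits of $k_\pm(\alpha,C)/C$.

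The main obstacle is the second layer. Unlike the $\chi^2$-divergence, which has the clean closed form $|\mean(\testfunction)-\simmean(\testfunction)|^2/[\simmean(\testfunction)(1-\simmean(\testfunction))]$ and is trivially smooth in both arguments, the KL divergence is asymmetric and carries the proxy variable $\responsebar_j$ inside logarithms, so its perturbation must be routed through the derivative $(p-q)/[p(1-p)]$, which blows up as $p\to 0$ or $p\to 1$. Keeping this bounded requires $\responsebar_j$ to remain in the non-degenerate band, which is exactly what the lower bound $n\ge c_1\log(1+\alpha^{-1})\eta^{-1}(1-\eta)^{-1}$ guarantees through a Chernoff argument for $\Bernoulli(\mean(\testfunction_j))$ averages. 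Propagating the resulting Lipschitz estimate uniformly over the $(1-\alpha/8)$-fraction of well-behaved questions, and matching it to the square-root quantile correction, is the delicate step.
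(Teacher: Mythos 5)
Your proposal follows essentially the same route as the paper's proof: Hoeffding plus a union bound over $k$ to pass from the empirical fraction to the population coverage probability (yielding the $m\ge 32\alpha^{-1}\log(2K/\delta)$ requirement and the shifted thresholds $1-3\alpha/8$ and $1-5\alpha/8$), a KL perturbation analysis in which the proxy $\responsebar_j$ must be kept in the non-degenerate band via the lower bound on $n$, and a threshold-crossing computation as $C\to\infty$. The only substantive differences are minor: the paper controls $\bigl|\KLdiv{\simresponsebar_k}{\responsebar}-\KLdiv{\simmean(\testfunction)}{\mean(\testfunction)}\bigr|$ additively via a three-point identity (\myCref{lem-KL-three-point-identity}, \myCref{lem-Ber-concentration-reverse-KL}, and \myCref{lem-Ber-concentration-KL-perturbed-double}) rather than Lipschitz-bounding $\sqrt{\KLdiv{q}{p}}$ in its second argument as you do (both work on the non-degenerate band, by Pinsker in your case), and the upper-bound limit requires an $\epsilon$-perturbation of the quantile $\quantile_{1-\alpha}\left(\discKLreal(\testfunction)\right)$ together with a diagonalization over $\epsilon\to 0$ (\myCref{lem-liminf-to-lim}) that you gloss over but which is a routine fix.
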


\begin{proof}[Proof of \myCref{thm-effective-sample-size-KL}]
See \myCref{sec-thm-effective-sample-size-KL-proof}.
\end{proof}

Similar to the CLT-based interval case, \myCref{thm-effective-sample-size-KL} confirms that $\widehat{k}_{\KL}(C)/C$ also has the same fundamental behavior as its oracle counterpart $k^*_{\KL}(C)/C$, exhibiting an inverse relation with the KL divergence discrepancy measure $\quantile_{1-\alpha}( \discKLreal(\testfunction) )$, up to a finite-sample approximation error of $O(1/n)$.

\section{Results Related to the Dilation Factor}\label{sec-dilation}

\subsection{Necessity of the Dilation Factor}\label{sec-impossibility-exact-CLT}

This appendix shows that when there is distribution shift between the real and synthetic distributions, the unscaled CLT-based confidence interval constructed from the synthetic data may fail to cover the true mean response with the prescribed coverage probability, regardless of the sample size.

Consider a true distribution $N(\mean,1)$ and a synthetic distribution $N(\simmean,1)$. Suppose we draw $k$ i.i.d.~synthetic samples $\{ x_i \}_{i=1}^k \sim N(\simmean,1)$ and construct the standard $(1-\alpha)$ confidence interval for $\mean$, given by 
\[
\simCIJ(k) = \left[ \bar{x}_k - \frac{z_{\alpha/2}}{\sqrt{k}}, ~ \bar{x}_k + \frac{z_{\alpha/2}}{\sqrt{k}} \right],
\] 
where $\bar{x}_k = \frac{1}{k} \sum_{i=1}^k x_i$. Let $\Delta = \mean - \simmean$, which represents the discrepancy between the true distribution and the synthetic distribution. Then
\begin{align*}
\PP\Big( \mean \in \simCIJ(k) \Big)
=
\PP\left( | \bar{x}_k - \mean | \le \frac{z_{\alpha/2}}{\sqrt{k}} \right) 
&=
\PP\left( | \sqrt{k} (\bar{x}_k - \simmean) - \sqrt{k} (\mean - \simmean) | \le z_{\alpha/2} \right) \\[4pt]
&=
\PP\left( \sqrt{k} \Delta - z_{\alpha/2} \le \sqrt{k} (\bar{x}_k - \simmean) \le \sqrt{k} \Delta + z_{\alpha/2} \right).
\end{align*}
Note that $\sqrt{k} (\bar{x}_k - \simmean) \sim N(0,1)$. Thus, whenever $\Delta\neq 0$, it holds that $\PP\left( \mean \in \simCIJ(k) \right) < 1-\alpha$, failing to attain $(1-\alpha)$ coverage probability, regardless of the sample size $k$.

\subsection{A Bootstrap Interpretation of the Dilation Factor}\label{sec-dilation-bootstrap}

In \myCref{sec-bootstrap}, we made a conceptual comparison between our framework and the classical parametric bootstrap. In this appendix, we establish a more quantitative connection. We will demonstrate that the structure of our dilated confidence interval arises naturally from the Mechanical Turk model when viewed through the lens of bootstrap resampling \citep{Efr79}. This derivation mathematically connects our dilation factor $C$ to the number of bootstrap resamples, justifying the relation $\widehat{k}\approx C\kappa$ between the total number of samples drawn $\widehat{k}$ and the effective sample size $\kappa$.
 
Under the Mechanical Turk model (Assumption \myref{assumption-MTurk}), the LLM simulates a hidden population of $\kappa$ human agents. If we simulate a total of $k$ synthetic samples where $k>\kappa$, we are effectively resampling from this hidden population. For simplicity, assume $k = B\kappa$, where we draw $B$ i.i.d.~responses $\{\simresponsetilde_{i,j}\}_{j=1}^B$ from each of the $\kappa$ agents $i=1,...,\kappa$. In this setup, $B$ is directly analogous to the number of resamples in a standard bootstrap procedure.

Fix a deterministic test question $\testfunction$. Our goal is to use the $k=B\kappa$ samples $\{\simresponsetilde_{i,j}\}_{i\in[\kappa],\,j\in[B]}$ to construct a confidence interval for the mean response $\mean = \EE_{\profile\sim\distribution} \performancefunction(\profile,\testfunction)$. Let $\simresponsebar_{k}$ denote the sample mean. To construct a confidence interval around $\simresponsebar_{k}$, we derive its asymptotic distribution in the following lemma.

\begin{lemma}[Central limit theorem]\label{lem-MTurk-CLT}
Let $\{\profile_i\}_{i=1}^{\infty}\sim\distribution$ be i.i.d. For each $i\in\ZZ_+$, given $\profile_i$, let $\datasetB_i = \{\simresponsetilde_{i,j}\}_{j=1}^B \sim \Bernoulli ( \performancefunction(\profile_i, \testfunction) )$ be i.i.d. Assume $\{ \datasetB_i \}_{i=1}^{\infty}$ are independent. Define $\tau^2 = \var_{\profile\sim\distribution} \performancefunction(\profile,\testfunction)$ and $\sigma^2 = \EE_{\profile \sim \distribution} \big[ \performancefunction (\profile, \testfunction) \left( 1 - \performancefunction (\profile, \testfunction) \right) \big]$. Let $k= \kappa B$, 
\[
\simresponsebar_{k} = \frac{1}{\kappa B} \sum_{i=1}^{\kappa} \sum_{j=1}^B \simresponsetilde_{i,j}
\quad\text{and}\quad
\simsamplesd_k = \sqrt{\simresponsebar_{k}\left( 1 - \simresponsebar_{k} \right)}.
\]
Then as $\kappa\to\infty$,
\[
\frac{\simresponsebar_{k} - \mean}{\simsamplesd_k/\sqrt{k}} ~ \xrightarrow{~d~} ~ N \left( 0, \, \frac{B\tau^2+\sigma^2}{\tau^2+\sigma^2} \right).
\]
\end{lemma}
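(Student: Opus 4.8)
The plan is to reduce the double sum to a sum of i.i.d.\ agent-level averages and then invoke the classical central limit theorem in $\kappa$. For each agent $i$, write $p_i = \performancefunction(\profile_i,\testfunction)$ and let $\bar{Y}_i = \frac{1}{B}\sum_{j=1}^B \simresponsetilde_{i,j}$ be the within-agent average, so that $\simresponsebar_{k} = \frac{1}{\kappa}\sum_{i=1}^{\kappa}\bar{Y}_i$. Since the profiles $\{\profile_i\}$ are i.i.d.\ and the blocks $\{\datasetB_i\}$ are independent across agents, the variables $\bar{Y}_1,\dots,\bar{Y}_{\kappa}$ are i.i.d.; these are the fundamental building blocks on which the CLT will act.

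First I would compute the first two moments of $\bar{Y}_i$ by conditioning on $\profile_i$. The tower property gives $\EE[\bar{Y}_i]=\EE[p_i]=\mean$, and the law of total variance yields $\var(\bar{Y}_i)=\EE[\var(\bar{Y}_i\mid\profile_i)]+\var(\EE[\bar{Y}_i\mid\profile_i])=\frac{1}{B}\EE[p_i(1-p_i)]+\var(p_i)=\tau^2+\sigma^2/B$, where I use that, conditionally on $\profile_i$, $\bar{Y}_i$ is the mean of $B$ i.i.d.\ $\Bernoulli(p_i)$ variables. Applying the classical i.i.d.\ CLT to $\{\bar{Y}_i\}$ then gives $\sqrt{\kappa}\,(\simresponsebar_{k}-\mean)\xrightarrow{d} N\!\left(0,\ \tau^2+\sigma^2/B\right)$ as $\kappa\to\infty$.

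Next I would handle the studentization. Because $k=\kappa B$, I can write $\frac{\simresponsebar_{k}-\mean}{\simsamplesd_k/\sqrt{k}}=\sqrt{B}\cdot\frac{\sqrt{\kappa}\,(\simresponsebar_{k}-\mean)}{\simsamplesd_k}$. The law of large numbers gives $\simresponsebar_{k}\xrightarrow{p}\mean$, so $\simsamplesd_k=\sqrt{\simresponsebar_{k}(1-\simresponsebar_{k})}\xrightarrow{p}\sqrt{\mean(1-\mean)}$. The key algebraic step is the variance decomposition $\mean(1-\mean)=\EE[p_i]-(\EE[p_i])^2=\EE[p_i(1-p_i)]+\var(p_i)=\sigma^2+\tau^2$, which shows $\simsamplesd_k\xrightarrow{p}\sqrt{\sigma^2+\tau^2}$. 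Slutsky's theorem then combines this with the CLT, and the $\sqrt{B}$ factor rescales the limiting variance to $\frac{B(\tau^2+\sigma^2/B)}{\sigma^2+\tau^2}=\frac{B\tau^2+\sigma^2}{\tau^2+\sigma^2}$, giving exactly the claimed limit $N\!\left(0,\ \frac{B\tau^2+\sigma^2}{\tau^2+\sigma^2}\right)$.

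I expect no serious obstacle, as the argument is routine. The two points requiring care are purely bookkeeping: tracking the two scales (the studentization is in $k=\kappa B$ while the CLT is taken in $\kappa$, producing the $\sqrt{B}$ factor) so the final variance is assembled correctly, and recognizing the identity $\mean(1-\mean)=\sigma^2+\tau^2$, which is precisely what makes the studentized limit collapse to the stated ratio rather than a more complicated expression.
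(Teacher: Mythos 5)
Your proposal is correct and follows essentially the same route as the paper's proof: reduce to the i.i.d.\ agent-level averages $\bar{Y}_i$, compute $\var(\bar{Y}_i)=\tau^2+\sigma^2/B$ by the law of total variance, apply a CLT in $\kappa$, and finish with the law of large numbers, the identity $\mean(1-\mean)=\sigma^2+\tau^2$, and Slutsky's theorem. The only cosmetic difference is that the paper verifies Lyapunov's condition whereas you invoke the classical i.i.d.\ CLT directly, which is equally valid here since the $\bar{Y}_i$ are i.i.d.\ and bounded.
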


\begin{proof}[Proof of \myCref{lem-MTurk-CLT}]
See \myCref{sec-lem-MTurk-CLT-proof}.
\end{proof}

\myCref{lem-MTurk-CLT} leads us to construct the following confidence interval:
\begin{equation}\label{eqn-MTurk-CI-dialted}
\left[
\simresponsebar_{k} - z_{\alpha/2}\simsamplesd_k\sqrt{\frac{C}{k}},
~
\simresponsebar_{k} + z_{\alpha/2}\simsamplesd_k\sqrt{\frac{C}{k}}
\right],
\end{equation}
where $C = \frac{B\tau^2+\sigma^2}{\tau^2+\sigma^2}\in[1,B]$ dilates the width of the confidence interval. This is exactly the form of the dilated confidence interval. The dilation factor $C$ is revealed to arise naturally in the resampling process. When $B$ is large, $C\approx B$, and we can identify the dilation factor $C$ with the number $B$ of resamples. 

This derivation provides a quantitative support for the comparison to bootstrap in \myCref{sec-bootstrap}. Without any knowledge of $\kappa$ and $B$, our method reparametrizes $B$ in terms of $C$, and selects a sample size $\widehat{k}$ for which \eqref{eqn-MTurk-CI-dialted} is a valid confidence interval. Thus, we expect $\widehat{k}\approx B\kappa$, which is $C\kappa$ when $C$ is large. In other words, $\widehat{k}/C\approx\kappa$ for $C$ large. This is consistent with our results in \myCref{sec-effective-sample-size}.

\section{Proofs for Effective Human Sample Size Analysis}

\subsection{Proof of \myCref{thm-kappa-IT-chi}}\label{sec-thm-kappa-IT-chi-proof}
Let $q(\profiles_{\kappa}) = \quantile_{1-\alpha}\big( \discchi(\testfunction) \mid \profiles_{\kappa} \big)$. Take $m= \big\lceil \frac{\log(\delta/2)}{\log(1-\alpha)} \big\rceil$ i.i.d.~samples $\testfunction_1,...,\testfunction_m\sim\testfunctiondist$, independent of $\profiles_{\kappa}$. Then $m \le -\frac{\log(2/\delta)}{\log(1-\alpha)} + 1 \le \frac{\log(2/\delta)}{\alpha} + 1$. By the definition of quantile, for all $\varepsilon > 0$,
\[
\PP\left(  \discchi(\testfunction_j) \le  q(\profiles_{\kappa}) - \varepsilon \mid \profiles_{\kappa} \right) < 1 - \alpha, \qquad \forall j\in[m].
\]
By independence,
\[
\PP\left(  \max_{j\in[m]} \discchi(\testfunction_j) \le q(\profiles_{\kappa}) - \varepsilon \Bigm| \profiles_{\kappa} \right)
=
\prod_{j=1}^m \PP\left(  \discchi(\testfunction_j) \le q(\profiles_{\kappa}) - \varepsilon \mid \profiles_{\kappa} \right) < (1-\alpha)^m \le \frac{\delta}{2},
\]
which implies
\begin{equation}\label{eqn-kappa-IT-chi-proof-1}
\PP\left(  q(\profiles_{\kappa}) < \max_{j\in[m]} \discchi(\testfunction_j) + \varepsilon \right) > 1 - \frac{\delta}{2}.
\end{equation}

We now derive a bound for $\max_{j\in[m]}\discchi(\testfunction_j)$. By Hoeffding's inequality (e.g., Theorem 2.8 in \cite{BLM13}), for every $j\in[m]$,
\[
\PP\left( \big| \mean(\testfunction_j) - \widehat{\mean}_{\kappa}(\testfunction_j) \big| \le \sqrt{\frac{\log(4m/\delta)}{2\kappa}} \Bigm| \testfunction_j \right) \ge 1 - \frac{\delta}{2m}.
\]
When this happens, if $\kappa \ge 2\left( \frac{\log(4m/\delta)}{r(1-r)} \right)^2$, then
\[
\widehat{\mean}_{\kappa}(\testfunction_j) \big(1 - \widehat{\mean}_{\kappa}(\testfunction_j)\big)
\ge 
\mean(\testfunction_j)\big(1-\mean(\testfunction)\big) - \big| \mean(\testfunction_j) - \widehat{\mean}_{\kappa}(\testfunction_j) \big|
\ge 
r(1-r) - \sqrt{\frac{\log(4m/\delta)}{2\kappa}}
\ge \frac{r(1-r)}{2},
\]
so
\[
\discchi (\testfunction_j) = \frac{\big| \mean(\testfunction_j) - \widehat{\mean}_{\kappa}(\testfunction_j) \big|^2}{\widehat{\mean}_{\kappa}\big(\testfunction_j) (1 - \widehat{\mean}_{\kappa}(\testfunction_j) )} \le \frac{\log(4m/\delta)}{r(1-r)} \cdot \frac{1}{\kappa} \le  \frac{c}{\kappa},
\]
where
\[
c  = \frac{\log\big[ 4\delta^{-1}\big(1 + \alpha^{-1} \log(2/\delta)\big)  \big]}{r(1-r)}.
\]
By a union bound over $j\in[m]$,
\begin{equation}\label{eqn-kappa-IT-chi-proof-2}
\PP\left( \max_{j\in[m]} \discchi (\testfunction_j) \le \frac{c}{\kappa} \right) \ge 1 - \frac{\delta}{2}.
\end{equation}

Combining \eqref{eqn-kappa-IT-chi-proof-1} and \eqref{eqn-kappa-IT-chi-proof-2} with a union bound, we obtain that
\begin{equation}\label{eqn-kappa-IT-chi-proof-3}
\PP\left(
q(\profiles_{\kappa}) <
\varepsilon + \frac{c}{\kappa} \right) \ge 1 - \delta.
\end{equation}
Since this holds for all $\varepsilon> 0$, then by the continuity of the probability measure $\PP$,
\[
\PP\left(
q(\profiles_{\kappa}) \le \frac{c}{\kappa} \right) = \lim_{\varepsilon\to 0^+}\PP\left( q(\profiles_{\kappa}) <
\varepsilon + \frac{c}{\kappa} \right) \ge 1 - \delta.
\]
This finishes the proof.

\subsection{Proof of \myCref{thm-oracle-effective-sample-size-chi}}\label{sec-thm-oracle-effective-sample-size-chi-proof}

For notational convenience, we define $Q_{\beta}=\quantile_{\beta}\left(\discchireal(\testfunction)\right)$. \myCref{thm-oracle-effective-sample-size-chi} is a consequence of the following lemma.

\begin{lemma}\label{lem-oracle-effective-sample-size-chi}
In the setting of \myCref{thm-oracle-effective-sample-size-chi},
\[
\frac{z_{\alpha/2}^2}{Q_{1-\alpha/2}}
\le 
\liminf_{C \to\infty} \frac{k^*(C)}{C}
\le
\limsup_{C\to\infty}\frac{k^*(C)}{C}
\le 
\frac{z_{\alpha/2}^2}{Q_{1-2\alpha}}.
\]
\end{lemma}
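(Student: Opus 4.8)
The plan is to pass to the conditional coverage probability $h_C(k;\testfunction)=\PP\big(\mean(\testfunction)\in\simCI(k;C)\mid\testfunction\big)$ and integrate over $\testfunction\sim\Pi$, analysing the coverage along rays $k=\lambda C$ for a fixed slope $\lambda>0$. Conditionally on $\testfunction$ the synthetic responses are i.i.d.\ $\Bernoulli(\simmean(\testfunction))$, so the coverage event is $\big|\simresponsebar_k-\mean(\testfunction)\big|\le z_{\alpha/2}\,\simsamplesd_k\sqrt{C/k}$. Along $k=\lambda C$ the half-width equals $z_{\alpha/2}\simsamplesd_k/\sqrt{\lambda}$, which tends to the constant $z_{\alpha/2}\sqrt{\simmean(1-\simmean)}/\sqrt{\lambda}$ as $C\to\infty$ (hence $k\to\infty$), while $\simresponsebar_k\to\simmean(\testfunction)$ and $\simsamplesd_k\to\sqrt{\simmean(1-\simmean)}$. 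Thus on the non-degeneracy event $\{\simmean(\testfunction)\in[\eta,1-\eta]\}$, where the limiting variance is bounded below by $\eta(1-\eta)$, the event collapses to $|\mean-\simmean|\le z_{\alpha/2}\sqrt{\simmean(1-\simmean)}/\sqrt\lambda$, i.e.\ $\discchi(\testfunction)\le z_{\alpha/2}^2/\lambda$, so that $h_C(\lambda C;\testfunction)\to\ind\{\discchi(\testfunction)<z_{\alpha/2}^2/\lambda\}$ off the boundary. I would make this quantitative with Hoeffding's inequality, bounding $\PP(|\simresponsebar_k-\simmean|>t\mid\testfunction)$ for a $t$ strictly between zero and the gap $\big|\,|\mean-\simmean|-z_{\alpha/2}\sqrt{\simmean(1-\simmean)}/\sqrt\lambda\,\big|$.

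For the upper bound $\limsup_C k^*(C)/C\le z_{\alpha/2}^2/Q_{1-2\alpha}$, fix $\lambda>z_{\alpha/2}^2/Q_{1-2\alpha}$, so $z_{\alpha/2}^2/\lambda<Q_{1-2\alpha}$ and hence $\PP\big(\discchi(\testfunction)\le z_{\alpha/2}^2/\lambda\big)<1-2\alpha$ by the definition of the quantile. On the non-degenerate questions with $\discchi(\testfunction)>z_{\alpha/2}^2/\lambda$, a uniform (maximal) concentration bound controls $\simresponsebar_k$ and $\simsamplesd_k$ simultaneously for all $k\ge\lambda C$, forcing $\mean(\testfunction)\notin\simCI(k;C)$ with conditional probability tending to one uniformly in such $k$. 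Integrating and discarding the degenerate questions (mass $\le\alpha/8$) gives $\sup_{k\ge\lambda C} g_C(k)\le(1-2\alpha)+\alpha/8+o(1)<1-\alpha$, where $g_C(k)=\PP(\mean(\testfunction)\in\simCI(k;C))$. Hence $k^*(C)<\lambda C$ for large $C$, and letting $\lambda\downarrow z_{\alpha/2}^2/Q_{1-2\alpha}$ yields the bound. The uniformity over $k\ge\lambda C$ is what replaces a (false) deterministic monotonicity of the random half-width.

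The lower bound $\liminf_C k^*(C)/C\ge z_{\alpha/2}^2/Q_{1-\alpha/2}$ is symmetric. Fix $\lambda<z_{\alpha/2}^2/Q_{1-\alpha/2}$, so $z_{\alpha/2}^2/\lambda>Q_{1-\alpha/2}$ and therefore $\PP\big(\discchi(\testfunction)<z_{\alpha/2}^2/\lambda\big)\ge1-\alpha/2$. Intersecting with the non-degeneracy event leaves a set of probability at least $1-\alpha/2-\alpha/8$ on which $h_C(\lambda C;\testfunction)\to1$, so that $g_C(\lfloor\lambda C\rfloor)\ge1-\tfrac{5\alpha}{8}-o(1)\ge1-\alpha$ for $C$ large. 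This places $\lfloor\lambda C\rfloor$ in the defining set of $k^*(C)$, whence $k^*(C)\ge\lfloor\lambda C\rfloor$; letting $\lambda\uparrow z_{\alpha/2}^2/Q_{1-\alpha/2}$ finishes the proof. The generous slacks ($2\alpha$ and $\alpha/2$ against the target $\alpha$) are precisely what absorb the degenerate mass and the vanishing fluctuation terms.

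The step I expect to be the main obstacle is the treatment of degenerate questions, where $\simmean(\testfunction)$ is near $0$ or $1$: there $\discchi(\testfunction)$ is ill-behaved, $\simsamplesd_k$ may vanish, the interval can collapse, and the indicator limit fails. Assumption~\myref{assumption-nondegeneracy-synthetic} confines these to probability $\alpha/8$, and the argument must route all of this mass into the slack terms rather than the main estimate. A secondary difficulty is justifying the interchange of the $C\to\infty$ limit with the expectation over $\testfunction$ uniformly over the relevant range of $k$; here the uniform concentration used in the upper bound, together with a direct check that $g_C(k)\to0$ as $k\to\infty$ for each fixed $C$ (since the interval narrows while $\simresponsebar_k\to\simmean\neq\mean$), keeps the supremum defining $k^*(C)$ finite and controllable.
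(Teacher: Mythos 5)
Your proposal is correct and follows essentially the same route as the paper's proof: condition on $\testfunction$, discard the degenerate questions using the $\alpha/8$ mass from Assumption \myref{assumption-nondegeneracy-synthetic}, use concentration of $\simresponsebar_k$ (and hence $\simsamplesd_k$) to reduce the coverage event to the deterministic comparison of $\discchireal(\testfunction)$ with $z_{\alpha/2}^2 C/k$, and convert that comparison into probability bounds via the quantiles at levels $1-\alpha/2$ and $1-2\alpha$. The only differences are presentational — you parametrize along rays $k=\lambda C$ and pass to limits where the paper solves explicit quadratic inequalities in $\sqrt{k}$ (with Bernstein and Maurer--Pontil in place of Hoeffding), and note that your "uniform over $k\ge\lambda C$" step only needs a per-$k$ bound (since $k^*(C)$ is a supremum of a pointwise condition), not a maximal inequality.
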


\begin{proof}[Proof of \myCref{lem-oracle-effective-sample-size-chi}]
See \myCref{sec-lem-oracle-effective-sample-size-chi-proof}.
\end{proof}

Given \myCref{lem-oracle-effective-sample-size-chi}, we can construct $\varepsilon(\alpha, C)$ as follows. Define
\[
\varepsilon_-(\alpha, C) = \left( \frac{z_{\alpha/2}^2}{Q_{1-\alpha/2}} - \inf_{C'\ge C}\frac{k^*(C')}{C'}  \right)_+
\qquad\text{and}\qquad
\varepsilon_+(\alpha, C) = \left( \sup_{C'\ge C}\frac{k^*(C')}{C'} - \frac{z_{\alpha/2}^2}{Q_{1-2\alpha}} \right)_+.
\]
Then for all $C$,
\[
\frac{z_{\alpha/2}^2}{Q_{1-\alpha/2}} - \varepsilon_-(\alpha, C) \le \inf_{C'\ge C}\frac{k^*(C')}{C'} 
\le 
\frac{k^*(C)}{C}
\le
\sup_{C'\ge C}\frac{k^*(C')}{C'} 
\le 
\frac{z_{\alpha/2}^2}{Q_{1-2\alpha}} + \varepsilon_+(\alpha, C).
\]
Let
\[
\varepsilon(\alpha, C) = \max\left\{ \varepsilon_-(\alpha, C), ~ \varepsilon_+(\alpha, C) \right\},
\]
then for all $C$,
\[
\frac{z_{\alpha/2}^2}{Q_{1-\alpha/2}} - \varepsilon(\alpha, C) 
\le
\frac{k^*(C)}{C}
\le 
\frac{z_{\alpha/2}^2}{Q_{1-2\alpha}} + \varepsilon(\alpha, C).
\]
By \myCref{lem-oracle-effective-sample-size-chi},
\begin{align*}
&\lim_{C\to\infty} \varepsilon_-(\alpha, C) = \left( \frac{z_{\alpha/2}^2}{Q_{1-\alpha/2}} - \liminf_{C\to\infty}\frac{k^*(C)}{C}  \right)_+ = 0, \\[4pt]
&\lim_{C\to\infty}\varepsilon_+(\alpha, C) = \left( \limsup_{C\to\infty}\frac{k^*(C)}{C} - \frac{z_{\alpha/2}^2}{Q_{1-2\alpha}} \right)_+ = 0,
\end{align*}
which implies $\lim_{C\to\infty} \varepsilon(\alpha, C) = 0$. This proves \myCref{thm-oracle-effective-sample-size-chi}.

\subsection{Proof of \myCref{lem-oracle-effective-sample-size-chi}}\label{sec-lem-oracle-effective-sample-size-chi-proof}

Define the standard deviation $\simsd(\testfunction) = \sqrt{\simmean(\testfunction) \cdot \left( 1 - \simmean(\testfunction) \right) }$ of the synthetic response distribution. By Assumption \myref{assumption-nondegeneracy-synthetic}, the following event has probability at least $1-\alpha/8$:
\[
\cA_1= \left\{ \simmean(\testfunction) \in [\eta, 1-\eta] \right\}.
\]
Thanks to Assumption \myref{assumption-nondegeneracy-synthetic}, $Q_{\beta}$ is finite for all $\beta\in[0,1-\alpha/8]$. By definition, the following event has probability at least $1-\alpha/2$:
\[
\cA_2 = \left\{ \discchireal(\testfunction) \le Q_{1-\alpha/2} \right\}.
\]
By Bernstein's inequality (e.g., Inequality (2.10) in \cite{BLM13}), the following event has probability at least $1-\alpha/4$:
\[
\cA_3(k) = \left\{ \big| \simresponsebar_{k} - \simmean(\testfunction) \big|
\le 
\simsd(\testfunction)\sqrt{\frac{2\log(8/\alpha)}{k}} + \frac{2\log(8/\alpha)}{3k} \right\}.
\]
By Theorem 10 in \cite{MPo09}, the following event has probability at least $1-\alpha/8$:
\[
\cA_4(k) = \left\{ \bigg| \simsamplesd_{k} - \simsd(\testfunction) \sqrt{\frac{k-1}{k}} \bigg| \le \sqrt{\frac{2\log(16/\alpha)}{k}} \right\}.
\]
By a union bound, the event $\cA(k) = \cA_1 \cap \cA_2 \cap \cA_3(k) \cap \cA_4(k)$ has probability at least $1-\alpha$. 

\paragraph{Lower bound.} We first derive a lower bound for $\liminf_{C\to\infty} k^*(C)/C$. For a fixed $k\in\ZZ_+$ and $C$, we have
\begin{align}
\mean(\testfunction) \in  \simCI(k;C)
&\quad \Leftrightarrow \quad
\big| \mean(\testfunction) - \simresponsebar_{k} \big| \le z_{\alpha/2} \cdot  \simsamplesd_{k} \sqrt{\frac{C}{k}} \notag \\[4pt]
& \quad \Leftarrow \quad
| \mean(\testfunction) - \simmean(\testfunction) | 
+ 
\big| \simresponsebar_{k} - \simmean(\testfunction) \big| \notag \\[4pt]
&\qquad\qquad \le 
z_{\alpha/2} \left( \simsd(\testfunction) - \bigg| \simsamplesd_{k} - \simsd(\testfunction) \sqrt{\frac{k-1}{k}} \bigg| \right) \sqrt{\frac{C}{k}} , \label{eqn-lem-oracle-effective-sample-size-chi-lower-1}
\end{align}
When $\cA(k)$ happens, \eqref{eqn-lem-oracle-effective-sample-size-chi-lower-1} is implied by
\begin{align}
	& | \mean(\testfunction) - \simmean(\testfunction) |  + \left( \simsd(\testfunction) \sqrt{\frac{2\log(8/\alpha)}{k}} + \frac{2\log(8/\alpha)}{3k} \right) \notag \\
	&\qquad \le 
	z_{\alpha/2} \left( \simsd(\testfunction) - \sqrt{\frac{2\log(16/\alpha)}{k}} \right) \sqrt{\frac{C}{k}} \notag \\[6pt]
	&\Leftarrow \quad \sqrt{\discchireal(\testfunction)}  + \left( \sqrt{\frac{2\log(8/\alpha)}{k}} + \frac{1}{\simsd(\testfunction)} \frac{2\log(8/\alpha)}{3k} \right) \notag \\
	&\qquad \le 
	z_{\alpha/2} \left( 1 - \frac{1}{\simsd(\testfunction)} \sqrt{\frac{2\log(16/\alpha)}{k}} \right) \sqrt{\frac{C}{k}} \tag{divide by $\simsd(\testfunction)$} \\[6pt]
	&\Leftarrow \quad \sqrt{Q_{1-\alpha/2}}  + \left( \sqrt{\frac{2\log(8/\alpha)}{k}} + \sqrt{\frac{1}{\eta(1-\eta)}} \frac{2\log(8/\alpha)}{3k} \right) \notag \\
	&\qquad \le 
	z_{\alpha/2} \left( 1 - \sqrt{\frac{1}{\eta(1-\eta)}} \sqrt{\frac{2\log(16/\alpha)}{k}} \right) \sqrt{\frac{C}{k}} \notag \\[6pt] 
	&\Leftarrow \quad \sqrt{Q_{1-\alpha/2}} + \frac{c_1}{\sqrt{k}} \le z_{\alpha/2}\left( 1 - \frac{c_2}{\sqrt{k}} \right) \sqrt{\frac{C}{k}} \label{eqn-lem-oracle-effective-sample-size-chi-lower-2}
\end{align}
for some constants $c_1,c_2>0$ that depend on $\alpha$ and $\eta$. When $C$ is sufficiently large, solving \eqref{eqn-lem-oracle-effective-sample-size-chi-lower-2} for $k$ yields
\[
k \le \frac{ C \left(z_{\alpha/2} - C^{-1/2} c_1 + \sqrt{ \left( z_{\alpha/2} - C^{-1/2}c_1 \right)^2 - 4 C^{-1/2}c_2 z_{\alpha/2} Q_{1-\alpha/2}^{1/2}}\right)^2}{4Q_{1-\alpha/2}}.
\]
Let
\[
k_-^*(C) = \Bigg\lfloor  \frac{ C \left(z_{\alpha/2} - C^{-1/2} c_1 + \sqrt{ \left( z_{\alpha/2} - C^{-1/2} c_1 \right)^2 - 4 C^{-1/2}c_2 z_{\alpha/2} Q_{1-\alpha/2}^{1/2}}\right)^2}{4Q_{1-\alpha/2}} \Bigg\rfloor.
\]
By \eqref{eqn-lem-oracle-effective-sample-size-chi-lower-1} and \eqref{eqn-lem-oracle-effective-sample-size-chi-lower-2},
\[
\PP\left( \mean(\testfunction) \in  \simCI(k^*_-(C);C) \right) \ge \PP\left( \cA(k_-^*(C)) \right) \ge 1-\alpha.
\]
Then $k^*(C) \ge k_-^*(C)$, and thus
\begin{equation}\label{eqn-lem-oracle-effective-sample-size-chi-lower-final}
\liminf_{C \to\infty} \frac{k^*(C)}{C} \ge \lim_{C\to\infty} \frac{k_-^*(C)}{C} = \frac{z_{\alpha/2}^2}{Q_{1-\alpha/2}}.
\end{equation}
This proves the lower bound.

\paragraph{Upper bound.} We next derive an upper bound for $\limsup_{C\to\infty} k^*(C)/C$. Take an arbitrary $\epsilon\in(0,Q_{1-2\alpha})$. By definition, the following event has probability strictly greater than $2\alpha$:
\[
\cA_5 = \left\{ \discchireal(\testfunction) \ge Q_{1-2\alpha} - \epsilon \right\}.
\]
By a union bound, the event $\cA'(k) = \cA(k)\cap\cA_5$ has probability strictly greater than $\alpha$.

For a fixed $k\in\ZZ_+$, we have
\begin{align}
\mean(\testfunction) \not\in  \simCI(k;C)
&\quad \Leftrightarrow \quad
\big| \mean(\testfunction) - \simresponsebar_{k} \big| > z_{\alpha/2} \cdot  \simsamplesd_{k} \sqrt{\frac{C}{k}} \notag \\[4pt]
& \quad \Leftarrow \quad
| \mean(\testfunction) - \simmean(\testfunction) | 
-
\big| \simresponsebar_{k} - \simmean(\testfunction) \big| \notag \\[4pt]
&\qquad\qquad >
z_{\alpha/2} \left( \simsd(\testfunction) + \bigg| \simsamplesd_{k} - \simsd(\testfunction) \sqrt{\frac{k-1}{k}} \bigg| \right) \sqrt{\frac{C}{k}} , \label{eqn-lem-oracle-effective-sample-size-chi-upper-1}
\end{align}
When $\cA'(k)$ happens, \eqref{eqn-lem-oracle-effective-sample-size-chi-upper-1} is implied by
\begin{align}
	& | \mean(\testfunction) - \simmean(\testfunction) |  - \left( \simsd(\testfunction) \sqrt{\frac{2\log(8/\alpha)}{k}} + \frac{2\log(8/\alpha)}{3k} \right) \notag \\
	&\qquad >
	z_{\alpha/2} \left( \simsd(\testfunction) + \sqrt{\frac{2\log(16/\alpha)}{k}} \right) \sqrt{\frac{C}{k}} \notag \\[6pt]
	&\Leftarrow \quad \sqrt{\discchireal(\testfunction)}  - \left( \sqrt{\frac{2\log(8/\alpha)}{k}} + \frac{1}{\simsd(\testfunction)} \frac{2\log(8/\alpha)}{3k} \right) \notag \\
	&\qquad > 
	z_{\alpha/2} \left( 1 + \frac{1}{\simsd(\testfunction)} \sqrt{\frac{2\log(16/\alpha)}{k}} \right) \sqrt{\frac{C}{k}} \tag{divide by $\simsd(\testfunction)$} \\[6pt]
	&\Leftarrow \quad \sqrt{Q_{1-2\alpha}-\epsilon}  - \left( \sqrt{\frac{2\log(8/\alpha)}{k}} + \sqrt{\frac{1}{\eta(1-\eta)}} \frac{2\log(8/\alpha)}{3k} \right) \notag \\
	&\qquad >
	z_{\alpha/2} \left( 1 + \sqrt{\frac{1}{\eta(1-\eta)}} \sqrt{\frac{2\log(16/\alpha)}{k}} \right) \sqrt{\frac{C}{k}} \notag \\[6pt] 
	&\Leftarrow \quad \sqrt{Q_{1-2\alpha}-\epsilon} - \frac{c_3}{\sqrt{k}} > z_{\alpha/2}\left( 1 + \frac{c_4}{\sqrt{k}} \right) \sqrt{\frac{C}{k}} \label{eqn-lem-oracle-effective-sample-size-chi-upper-2}
\end{align}
for some constants $c_3,c_4>0$ that depend on $\alpha$ and $\eta$. Solving \eqref{eqn-lem-oracle-effective-sample-size-chi-upper-2} for $k$ gives
\[
k >  \frac{C\left( 2 + C^{-1/2} c_3 + \sqrt{\left( z_{\alpha/2} + C^{-1/2}c_3 \right)^2 + 4C^{-1/2}c_4z_{\alpha/2} ( Q_{1-2\alpha} - \epsilon )^{1/2} } \right)^2}{ 4 (Q_{1-2\alpha} - \epsilon )} .
\]
Let
\[
k_+^*(C,\epsilon) = \Bigg\lfloor \frac{C\left( z_{\alpha/2} + C^{-1/2} c_3 + \sqrt{\left( z_{\alpha/2} + C^{-1/2}c_3 \right)^2 + 4C^{-1/2}c_4z_{\alpha/2} ( Q_{1-2\alpha} - \epsilon )^{1/2} } \right)^2}{ 4 (Q_{1-2\alpha} - \epsilon )} \Bigg\rfloor + 1.
\]
By \eqref{eqn-lem-oracle-effective-sample-size-chi-upper-1} and \eqref{eqn-lem-oracle-effective-sample-size-chi-upper-2},
\[
\PP\left( \mean(\testfunction) \not\in  \simCI(k_+^*(C,\epsilon);C) \right) \ge \PP \left(\cA'(k_+^*(C,\epsilon)) \right) > \alpha
.\]
Then $k^*(C) < k_+^*(C,\epsilon)$, and thus
\[
\limsup_{C\to\infty}\frac{k^*(C)}{C}
\le 
\lim_{C\to\infty} \frac{k_+^*(C,\epsilon)}{C}
=
\frac{z_{\alpha/2}^2}{Q_{1-2\alpha} - \epsilon}.
\]
Since this holds for all $\epsilon\in(0,Q_{1-2\alpha})$, then
\begin{equation}\label{eqn-lem-oracle-effective-sample-size-chi-upper-final}
\limsup_{C\to\infty}\frac{k^*(C)}{C}
\le 
\frac{z_{\alpha/2}^2}{Q_{1-2\alpha}}.
\end{equation}
This proves the upper bound.

Combining \eqref{eqn-lem-oracle-effective-sample-size-chi-lower-final} and \eqref{eqn-lem-oracle-effective-sample-size-chi-upper-final} finishes the proof.

\subsection{Proof of \myCref{thm-effective-sample-size-chi}}\label{sec-thm-effective-sample-size-chi-proof}

Take i.i.d.~samples $\{y_i\}_{i=1}^n\sim\Bernoulli(\mean(\testfunction))$, and set $\responsebar = \frac{1}{n} \sum_{i=1}^n y_i$. By Hoeffding's inequality and a union bound, with probability at least $1-\delta$,
\begin{equation}\label{eqn-thm-effective-sample-size-chi-1}
\left|\frac{1}{m} \sum_{j=1}^m \ind \left\{ \responsebar_j \in \simCI_j(k;C) \right\}
-
\PP\big( \responsebar \in \simCI(k;C) \big) \right|
\le 
\sqrt{\frac{\log(2K/\delta)}{2m}}, \quad \forall k\in[K].
\end{equation}
From now on suppose that \eqref{eqn-thm-effective-sample-size-chi-1} happens. When $m\ge 32\alpha^{-1}\log(2K/\delta)$, we have $\sqrt{\frac{\log(K/\delta)}{2m}}\le \alpha/8$. In this case,
\begin{align}
&\widehat{k}(C)
\ge 
\max\left\{ 0\le k\le K:
\PP\big( \responsebar \in \simCI(i;C) \big) \ge 1 - 3\alpha/8,\ \forall i \le k
\right\}, \label{eqn-thm-effective-sample-size-chi-upper-0} \\[4pt]
&\widehat{k}(C)
\le 
\max\left\{ 0\le k\le K:
\PP\big( \responsebar \in \simCI(i;C) \big) \ge 1 - 5\alpha/8,\ \forall i \le k
\right\}. \label{eqn-thm-effective-sample-size-chi-lower-0}
\end{align}
By definition,
\[
\PP\big( \responsebar \in \simCI(k;C) \big)
=
\PP\left( \big| \responsebar - \simresponsebar_{k} \big| \le z_{\alpha/2} \cdot  \simsamplesd_{k} \sqrt{\frac{C}{k}} \right).
\]

Define the standard deviation $\simsd(\testfunction) = \sqrt{\simmean(\testfunction) \cdot \left( 1 - \simmean(\testfunction) \right) }$ of the synthetic response distribution. By Assumption \myref{assumption-nondegeneracy-synthetic}, the following event has probability at least $1-\alpha/8$:
\[
\cB_1= \left\{ \simmean(\testfunction) \in [\eta, 1-\eta] \right\}.
\]
For notational convenience, we define $Q_{\beta}=\quantile_{\beta}\left(\discchireal(\testfunction)\right)$. Thanks to Assumption \myref{assumption-nondegeneracy-synthetic}, $Q_{\beta}$ is finite for all $\beta\in[0,1-\alpha/8]$. By definition, the following event has probability at least $1-\alpha/8$:
\[
\cB_2 = \left\{ \discchireal(\testfunction) \le Q_{1-\alpha/8} \right\}.
\]
By Bernstein's inequality (e.g., Inequality (2.10) in \cite{BLM13}), the following event has probability at least $1-\alpha/24$:
\[
\cB_3(k) = \left\{ \big| \simresponsebar_{k} - \simmean(\testfunction) \big|
\le 
\simsd(\testfunction)\sqrt{\frac{2\log(48/\alpha)}{k}} + \frac{2\log(48/\alpha)}{3k} \right\}.
\]
By Theorem 10 in \cite{MPo09}, the following event has probability at least $1-\alpha/24$:
\[
\cB_4(k) = \left\{ \bigg| \simsamplesd_{k} - \simsd(\testfunction) \sqrt{\frac{k-1}{k}} \bigg| \le \sqrt{\frac{2\log(48/\alpha)}{k}} \right\}.
\]
By Hoeffding's inequality (e.g., Theorem 2.8 in \cite{BLM13}), the following event has probability at least $1-\alpha/24$:
\[
\cB_5 = \left\{ \big| \responsebar - \mean(\testfunction) \big|
\le 
\sqrt{\frac{\log(48/\alpha)}{2n}} \right\}.
\]
By a union bound, the event $\cB(k) = \cB_1 \cap \cB_2 \cap \cB_3(k) \cap \cB_4(k) \cap \cB_5$ has probability at least $1-3\alpha/8$.

\paragraph{Lower bound.} We first derive a lower bound for $\widehat{k}(C)$. For a fixed $k\in\ZZ_+$ and $C$, we have
\begin{align}
\responsebar \in  \simCI(k;C)
&\quad \Leftrightarrow \quad
\big| \responsebar - \simresponsebar_{k} \big| \le z_{\alpha/2} \cdot  \simsamplesd_{k} \sqrt{\frac{C}{k}} \notag \\[4pt]
& \quad \Leftarrow \quad
| \mean(\testfunction) - \simmean(\testfunction) |
+
| \responsebar - \mean(\testfunction) | 
+ 
\big| \simresponsebar_{k} - \simmean(\testfunction) \big| \notag \\[4pt]
&\qquad\qquad \le 
z_{\alpha/2} \left( \simsd(\testfunction) - \bigg| \simsamplesd_{k} - \simsd(\testfunction) \sqrt{\frac{k-1}{k}} \bigg| \right) \sqrt{\frac{C}{k}} , \label{eqn-thm-effective-sample-size-chi-lower-1}
\end{align}
When $\cB(k)$ happens, \eqref{eqn-thm-effective-sample-size-chi-lower-1} is implied by
\begin{align}
	& | \mean(\testfunction) - \simmean(\testfunction) | + \sqrt{\frac{\log(48/\alpha)}{2n}}  + \left( \simsd(\testfunction) \sqrt{\frac{2\log(48/\alpha)}{k}} + \frac{2\log(48/\alpha)}{3k} \right) \notag \\
	&\qquad \le 
	z_{\alpha/2} \left( \simsd(\testfunction) - \sqrt{\frac{2\log(48/\alpha)}{k}} \right) \sqrt{\frac{C}{k}} \notag \\[6pt]
	&\Leftarrow \quad \sqrt{\discchireal(\testfunction)}  + \left( \frac{1}{\simsd(\testfunction)} \sqrt{\frac{\log(48/\alpha)}{2n}} + \sqrt{\frac{2\log(48/\alpha)}{k}} +  \frac{1}{\simsd(\testfunction)} \frac{2\log(48/\alpha)}{3k} \right) \notag \\
	&\qquad \le 
	z_{\alpha/2} \left( 1 - \frac{1}{\simsd(\testfunction)} \sqrt{\frac{2\log(48/\alpha)}{k}} \right) \sqrt{\frac{C}{k}} \tag{divide by $\simsd(\testfunction)$} \\[6pt]
	&\Leftarrow \quad \sqrt{Q_{1-\alpha/8}}  + \left( \sqrt{\frac{1}{\eta(1-\eta)}} \sqrt{\frac{\log(48/\alpha)}{2n}} + \sqrt{\frac{2\log(48/\alpha)}{k}} + \sqrt{\frac{1}{\eta(1-\eta)}} \frac{2\log(48/\alpha)}{3k} \right) \notag \\
	&\qquad \le 
	z_{\alpha/2} \left( 1 - \sqrt{\frac{1}{\eta(1-\eta)}} \sqrt{\frac{2\log(48/\alpha)}{k}} \right) \sqrt{\frac{C}{k}} \notag \\[6pt] 
	&\Leftarrow \quad \sqrt{Q_{1-\alpha/8}} + \frac{c_0}{\sqrt{n}} + \frac{c_1}{\sqrt{k}} \le z_{\alpha/2}\left( 1 - \frac{c_2}{\sqrt{k}} \right) \sqrt{\frac{C}{k}} \label{eqn-thm-effective-sample-size-chi-lower-2}
\end{align}
for some constants $c_0,c_1,c_2>0$ that depend on $\alpha$ and $\eta$. When $C$ is sufficiently large, solving \eqref{eqn-thm-effective-sample-size-chi-lower-2} for $k$ yields
\[
k \le \frac{ C \left(z_{\alpha/2} - C^{-1/2} c_1 + \sqrt{ \left( z_{\alpha/2} - C^{-1/2}c_1 \right)^2 - 4 C^{-1/2}c_2 z_{\alpha/2} \big( Q_{1-\alpha/8}^{1/2} + c_0n^{-1/2} \big) }\right)^2}{4\big( Q_{1-\alpha/8}^{1/2} + c_0 n^{-1/2} \big)^2 }.
\]
Let
\[
k_-(C) = \Bigg\lfloor  \frac{ C \left(z_{\alpha/2} - C^{-1/2} c_1 + \sqrt{ \left( z_{\alpha/2} - C^{-1/2} c_1 \right)^2 - 4 C^{-1/2}c_2 z_{\alpha/2} \big( Q_{1-\alpha/8}^{1/2} + c_0n^{-1/2} \big) }\right)^2}{ 4\big( Q_{1-\alpha/8}^{1/2} + c_0 n^{-1/2} \big)^2 } \Bigg\rfloor.
\]
Here we have dropped the dependence of $k_-(C)$ on $\alpha$ for notational convenience. By \eqref{eqn-thm-effective-sample-size-chi-lower-1} and \eqref{eqn-thm-effective-sample-size-chi-lower-2}, for all $i \le k_-(C)$,
\[
\PP\left( \responsebar \in \simCI(i;C) \right)  \ge \PP\left( \cB(i) \right) \ge 1 - 3\alpha/8.
\]
By \eqref{eqn-thm-effective-sample-size-chi-lower-0}, we have $\widehat{k}(C) \ge k_-(C)$, with
\[
\lim_{C \to\infty} \frac{k_-(C)}{C} = \left( \frac{z_{\alpha/2}}{ Q_{1-\alpha/8}^{1/2} + c_0 n^{-1/2} } \right)^2.
\]
This proves the lower bound.

\paragraph{Upper bound.} We next derive an upper bound for $\widehat{k}(C)$. Take an arbitrary $\epsilon\in(0,Q_{1-\alpha})$. By definition, the following event has probability strictly greater than $\alpha$:
\[
\cB_5 = \left\{ \discchireal(\testfunction) \ge Q_{1-\alpha} - \epsilon \right\}.
\]
By a union bound, the event $\cB'(k) = \cB(k)\cap\cB_5$ has probability strictly greater than $5\alpha/8$.

For a fixed $k\in\ZZ_+$, we have
\begin{align}
\responsebar \not\in  \simCI(k;C)
&\quad \Leftrightarrow \quad
\big| \responsebar - \simresponsebar_{k} \big| > z_{\alpha/2} \cdot  \simsamplesd_{k} \sqrt{\frac{C}{k}} \notag \\[4pt]
& \quad \Leftarrow \quad
| \mean(\testfunction) - \simmean(\testfunction) | 
- | \responsebar - \mean(\testfunction) | -
\big| \simresponsebar_{k} - \simmean(\testfunction) \big| \notag \\[4pt]
&\qquad\qquad >
z_{\alpha/2} \left( \simsd(\testfunction) + \bigg| \simsamplesd_{k} - \simsd(\testfunction) \sqrt{\frac{k-1}{k}} \bigg| \right) \sqrt{\frac{C}{k}} , \label{eqn-thm-effective-sample-size-chi-upper-1}
\end{align}
When $\cB'(k)$ happens, \eqref{eqn-thm-effective-sample-size-chi-upper-1} is implied by
\begin{align}
	& | \mean(\testfunction) - \simmean(\testfunction) | - \sqrt{\frac{\log(48/\alpha)}{2n}}  - \left( \simsd(\testfunction) \sqrt{\frac{2\log(48/\alpha)}{k}} + \frac{2\log(48/\alpha)}{3k} \right) \notag \\
	&\qquad >
	z_{\alpha/2} \left( \simsd(\testfunction) + \sqrt{\frac{2\log(48/\alpha)}{k}} \right) \sqrt{\frac{C}{k}} \notag \\[6pt]
	&\Leftarrow \quad \sqrt{\discchireal(\testfunction)}   - \frac{1}{\simsd(\testfunction)} \sqrt{\frac{\log(48/\alpha)}{2n}} - \left( \sqrt{\frac{2\log(48/\alpha)}{k}} + \frac{1}{\simsd(\testfunction)} \frac{2\log(48/\alpha)}{3k} \right) \notag \\
	&\qquad > 
	z_{\alpha/2} \left( 1 + \frac{1}{\simsd(\testfunction)} \sqrt{\frac{2\log(48/\alpha)}{k}} \right) \sqrt{\frac{C}{k}} \tag{divide by $\simsd(\testfunction)$} \\[6pt]
	&\Leftarrow \quad \sqrt{Q_{1-\alpha} -\epsilon} - \sqrt{\frac{1}{\eta(1-\eta)}} \sqrt{\frac{\log(48/\alpha)}{2n}}  - \left( \sqrt{\frac{2\log(48/\alpha)}{k}} + \sqrt{\frac{1}{\eta(1-\eta)}} \frac{2\log(48/\alpha)}{3k} \right) \notag \\
	&\qquad >
	z_{\alpha/2} \left( 1 + \sqrt{\frac{1}{\eta(1-\eta)}} \sqrt{\frac{2\log(16/\alpha)}{k}} \right) \sqrt{\frac{C}{k}} \notag \\[6pt] 
	&\Leftarrow \quad \sqrt{Q_{1-\alpha}-\epsilon} - \frac{c_0}{\sqrt{n}} - \frac{c_3}{\sqrt{k}} > z_{\alpha/2}\left( 1 + \frac{c_4}{\sqrt{k}} \right) \sqrt{\frac{C}{k}} \label{eqn-thm-effective-sample-size-chi-upper-2}
\end{align}
for some constants $c_0,c_3,c_4>0$ that depend on $\alpha$ and $\eta$. Solving \eqref{eqn-thm-effective-sample-size-chi-upper-2} for $k$ gives
\[
k >  \frac{C\left( z_{\alpha/2} + C^{-1/2} c_3 + \sqrt{\left( z_{\alpha/2} + C^{-1/2}c_3 \right)^2 + 4C^{-1/2}c_4z_{\alpha/2} \big[ ( Q_{1-\alpha} - \epsilon )^{1/2} - c_0 n^{-1/2} \big] } \right)^2}{ 4 \big( (Q_{1-\alpha} - \epsilon )^{1/2} - c_0n^{-1/2} \big)_+ } .
\]
Let
\[
\widetilde{k}_+(C,\epsilon) = \Bigg\lfloor \frac{C\left( z_{\alpha/2} + C^{-1/2} c_3 + \sqrt{\left( z_{\alpha/2} + C^{-1/2}c_3 \right)^2 + 4C^{-1/2}c_4z_{\alpha/2} ( Q_{1-\alpha} - \epsilon )^{1/2} } \right)^2}{ 4 \big( (Q_{1-\alpha} - \epsilon )^{1/2} - c_0n^{-1/2} \big)_+^2 } \Bigg\rfloor + 1.
\]
Here we have dropped the dependence of $\widetilde{k}_+(C,\epsilon)$ on $\alpha$ for notational convenience. When $\widetilde{k}_+(C,\epsilon)<\infty$, by \eqref{eqn-thm-effective-sample-size-chi-upper-1} and \eqref{eqn-thm-effective-sample-size-chi-upper-2},
\[
\PP\left( \responsebar \not\in \simCI(\widetilde{k}_+(C,\epsilon);C) \right)  \ge \PP\left( \cB'(\widetilde{k}_+(C,\epsilon)) \right) > 5\alpha/8,
\]
so $\PP\left( \responsebar \in \simCI(\widetilde{k}_+(C,\epsilon);C) \right)  < 1 -  5\alpha/8$. By \eqref{eqn-thm-effective-sample-size-chi-upper-0}, we have 
\[
\widehat{k}(C) < \widetilde{k}_+(C,\epsilon),
\] 
with
\begin{equation}\label{eqn-thm-effective-sample-size-chi-upper-diagonal-1}
\lim_{C\to\infty} \frac{\widetilde{k}_+(C,\epsilon)}{C}
=
\left[ \frac{z_{\alpha/2} }{\big( (Q_{1-\alpha} - \epsilon )^{1/2} - c_0n^{-1/2} \big)_+ } \right]^2.
\end{equation}
When $\widetilde{k}_+(C,\epsilon)=\infty$, we still have $\widehat{k}(C) < \widetilde{k}_+(C,\epsilon)$ and \eqref{eqn-thm-effective-sample-size-chi-upper-diagonal-1}. As $\epsilon\to 0^+$, 
\begin{equation}\label{eqn-thm-effective-sample-size-chi-upper-diagonal-2}
\left[ \frac{z_{\alpha/2} }{\big( (Q_{1-\alpha} - \epsilon )^{1/2} - c_0n^{-1/2} \big)_+ } \right]^2
\to
\left[ \frac{z_{\alpha/2} }{\big( Q_{1-\alpha}^{1/2} - c_0n^{-1/2} \big)_+ } \right]^2.
\end{equation}
Denote the left hand side of \eqref{eqn-thm-effective-sample-size-chi-upper-diagonal-2} by $U(\epsilon)$, then the right hand side is $U(0)$.

We will now use $\widetilde{k}_+(C,\epsilon)$ to construct $k_+(C)$ such that $\widehat{k}(C) \le k_+(C)$ and
\[
\lim_{C\to\infty} \frac{k_+(C)}{C} = U(0).
\]
Take a positive sequence $\{\epsilon_j\}_{j=1}^{\infty}$ such that $\epsilon_j\to 0$ as $j\to\infty$, and $\big| U(\epsilon_j) - U(0) \big| \le 1/j$. By \eqref{eqn-thm-effective-sample-size-chi-upper-diagonal-1}, there exists a strictly increasing sequence $\{C_j\}_{j=1}^{\infty}$ such that for each $j\in\ZZ_+$,
\[
\left|\frac{\widetilde{k}_+(C,\epsilon_j)}{C} - U(\epsilon_j) \right| \le \frac{1}{j} ,\qquad \forall C \ge C_j.
\]
Thus, by the triangle inequality,
\[
\left|\frac{\widetilde{k}_+(C,\epsilon_j)}{C} - U(0) \right| \le \frac{2}{j}, \qquad\forall C\ge C_j.
\]
Define $k_+(C)$ by
\[
k_+(C) = 
\begin{cases}
K, &\quad\text{if } C < C_1 \\
\widetilde{k}_+(C,\epsilon_j), &\quad\text{if } C_j \le C < C_{j+1}
\end{cases}.
\]
Then for all $C$, we have $\widehat{k} \le \min\{K, k_+(C)\}$. Moreover, for all $j\in\ZZ_+$, for all $C\ge C_j$, there exists $j'\ge j$ such that $C_{j'}\le C < C_{j'+1}$, so
\[
\left| \frac{k_+(C)}{C} - U(0) \right|
=
\left| \frac{\widetilde{k}_+(C,\epsilon_{j'})}{C} - U(0) \right|
\le 
\frac{2}{j'}
\le 
\frac{2}{j}.
\]
Taking $C\to\infty$ yields
\[
\limsup_{C\to\infty} \left| \frac{k_+(C)}{C} - U(0) \right| \le \frac{2}{j}.
\]
Since this holds for all $j\in\ZZ_+$, we conclude that
\[
\lim_{C\to\infty} \frac{k_+(C)}{C} = U(0) = \left[ \frac{z_{\alpha/2} }{\big( Q_{1-\alpha}^{1/2} - c_0n^{-1/2} \big)_+ } \right]^2.
\]
This completes the proof.

\subsection{Proof of \myCref{cor-sharpness}}\label{sec-cor-sharpness-proof}

By \myCref{thm-effective-sample-size-chi}, for $C$ sufficiently large,
\begin{align}
k_-(\alpha,C) 
&\ge 
\frac{Cz_{\alpha/2}^2}{2}\left[ \quantile_{1-\alpha/8}^{1/2}\left( \discchireal(\testfunction) \right) + c_0 n^{-1/2} \right]^{-2} \notag \\[4pt]
&\ge 
\frac{Cz_{\alpha/2}^2}{2} \min\left\{ \quantile^{-1}_{1-\alpha/8} \left( \discchireal(\testfunction) \right), ~ c_0^{-2} n \right\} \label{eqn-cor-sharpness-1} \\[4pt]
&\ge 
\min\left\{ \frac{Cz_{\alpha/2}^2}{2  \quantile_{1-\alpha/8} \left( \discchireal(\testfunction) \right)}, ~ n \right\}. \label{eqn-cor-sharpness-2}
\end{align}
With probability at least $1-\delta$, by \eqref{eqn-cor-sharpness-2},
\[
\widehat{k}(C) \ge \min \left\{ K, \, k_-(\alpha,C) \right\} \ge \min\left\{ K, \, n, \, \frac{Cz_{\alpha/2}^2}{2  \quantile_{1-\alpha/8} \left( \discchireal(\testfunction) \right)} \right\}.
\]
When this happens, by \eqref{eqn-cor-sharpness-1}, the selection confidence interval $\simCI\big(\widehat{k}(C);C\big)$ has width at most
\[
2z_{\alpha/2}\sqrt{\frac{C}{\widehat{k}(C)}}
\le
2\max\left\{ z_{\alpha/2}\sqrt{\frac{C}{K}}, ~ \sqrt{2}\quantile^{-1/2}_{1-\alpha/8} \left( \discchireal(\testfunction) \right), ~ \sqrt{2}c_0 n^{-1/2}  \right\}.
\]
This completes the proof.

\subsection{Proof of \myCref{lem-MTurk-CLT}}\label{sec-lem-MTurk-CLT-proof}

For each $i\in\ZZ_+$, we define $X_i = \frac{1}{B} \sum_{j=1}^B \simresponsetilde_{i,j}$, which is the average of the responses sampled from profile $\profile_i$. Then $\{X_i\}_{i=1}^{\infty}$ are independent, $\EE[X_i] = \mean$, and
\begin{align*}
\var(X_i)
&=
\var\left( \EE\left[ X_i \mid z_i \right] \right) + \EE\left[ \var\left( X_i \mid z_i \right) \right] \\[4pt]
&=
\var\left( \performancefunction( \profile_i, \testfunction ) \right) + \EE\left[ \frac{1}{B} \performancefunction( \profile_i, \testfunction ) \left( 1 - \performancefunction( \profile_i, \testfunction ) \right) \right]
=
\tau^2 + \frac{\sigma^2}{B}.
\end{align*}
For each $\kappa\in\ZZ_+$, we define $S_{\kappa} = \sum_{i=1}^{\kappa} X_i$, then $S_{\kappa} / \kappa = \simresponsebar_{k}$. We will apply the central limit theorem with Lyapunov's condition (e.g., Theorem 27.3 in \cite{Bil17}). To this end, we check that
\[
\limsup_{\kappa\to\infty} \left[\var(S_{\kappa})\right]^{-3/2}  \sum_{i=1}^{\kappa} \EE\left[ \left|X_i - \EE X_i \right|^3 \right]
\le 
\limsup_{\kappa\to\infty} \frac{8\kappa}{\kappa^{3/2}\left( \tau^2 + \frac{\sigma^2}{B} \right)^{3/2}}
=
0.
\]
Therefore, central limit theorem applies: as $\kappa\to\infty$,
\[
\sqrt{\kappa} \left( \simresponsebar_{k} - \mean \right) ~ \xrightarrow{~d~} ~ N \left( 0, \, \tau^2 + \frac{\sigma^2}{B} \right).
\]
By the law of large numbers, as $\kappa\to\infty$, $\simresponsebar_{k} = S_{\kappa} / \kappa \to \mean$ almost surely, so $\simsamplesd_k \to \sqrt{\mean(1-\mean)} = \sqrt{\tau^2 + \sigma^2}$ almost surely. By Slutsky's Theorem,
\[
\frac{\simresponsebar_{k} - \mean}{\simsamplesd_k/\sqrt{k}}
=
\sqrt{\kappa} \left( \simresponsebar_{k} - \mean \right) \cdot \frac{\sqrt{B}}{\simsamplesd_k}
 ~ \xrightarrow{~d~} ~ N \left(0, \, \frac{B\tau^2+\sigma^2}{\tau^2+\sigma^2} \right).
\]
This finishes the proof.

\section{Proofs for Concentration-Based Interval Analysis}

\subsection{Proof of \myCref{lem-Ber-concentration-KL}}\label{sec-lem-Ber-concentration-KL-proof}
For every $\theta\in(0,1)$, the function $r\mapsto \KLdiv{r}{\theta}$ is decreasing on $(0,\theta)$ and increasing on $(\theta,1)$. For every $\epsilon>0$, by the Chernoff bound for the Bernoulli distribution (e.g., Section 2.2 in \cite{BLM13}),
\begin{align*}
& \PP\left( \bar{x}_k < r \right) \le \exp\left( -k \cdot \KLdiv{r}{q} \right),\quad \forall r\in(0,q), \\[4pt]
& \PP\left( \bar{x}_k > r \right) \le \exp\left( -k \cdot \KLdiv{r}{q} \right),\quad \forall r\in(q,1).
\end{align*}
For $\epsilon>0$, let $\cC_L(q,\epsilon) = \left\{ r\in(0,1) : \KLdiv{r}{q} \le \epsilon  \right\}$, which is a closed interval containing $q$. Then,
\[
\PP\left( \bar{x}_k \not\in \cC_L(q,\epsilon) \right)
\le 
\PP\left( \bar{x}_k < \min \cC_L(q,\epsilon) \right) +  \PP\left( \bar{x}_k > \max \cC_L(q,\epsilon) \right)
\le 
2\exp(-k\epsilon).
\]
In other words,
\[
\PP\left( \KLdiv{\bar{x}_k}{q} \le \epsilon \right) \ge 1 - 2\exp(-k\epsilon).
\]
Setting $\epsilon = \log(2/\delta)/k$ finishes the proof.

\subsection{Tightness of the Concentration-Based Confidence Interval}\label{sec-lem-KL-Bernstein-proof}

The following lemma shows that the concentration-based confidence interval is always at least as tight as an interval based on the empirical-Bernstein bound \citep{MPo09}.

\begin{lemma}\label{lem-KL-Bernstein}
It always holds that 
\[
\left\{ p\in(0,1): \KLdiv{\bar{x}_k}{p} \le \frac{\log(2/\alpha)}{k} \right\}
\subseteq
\left[ \bar{x}_k - R_k,
~
\bar{x}_k + R_k
\right],
\] 
where
\[
R_k = s_k\sqrt{\frac{2\log(2/\alpha)}{k}} + \frac{2\log(2/\alpha)}{k}
\quad\text{and}\quad
s_k = \sqrt{\bar{x}_k(1-\bar{x}_k)}.
\]
\end{lemma}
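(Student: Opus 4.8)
The plan is to prove the set containment by a pointwise argument: it suffices to show that every $p$ with $\KLdiv{\bar{x}_k}{p} \le \epsilon$, where I abbreviate $\epsilon := \log(2/\alpha)/k$, satisfies $|p - \bar{x}_k| \le R_k$, since this is exactly membership in $[\bar{x}_k - R_k, \bar{x}_k + R_k]$. Writing $q := \bar{x}_k \in (0,1)$ and $s^2 := q(1-q)$, the target inequality reads $|p - q| \le s\sqrt{2\epsilon} + 2\epsilon$. The key first step is to compute the derivative of $p \mapsto \KLdiv{q}{p}$, which simplifies remarkably: $\frac{d}{dp}\KLdiv{q}{p} = -\frac{q}{p} + \frac{1-q}{1-p} = \frac{p-q}{p(1-p)}$. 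Since $\KLdiv{q}{q} = 0$, this yields the integral representation $\KLdiv{q}{p} = \int_q^p \frac{u-q}{u(1-u)}\,du$ for $p > q$.

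The main step is to turn this into a variance-dependent (Bernstein-type) lower bound. I would bound the denominator of the integrand uniformly on $[q,p]$ via the identity $u(1-u) = q(1-q) + (u-q)(1-u-q)$. On $[q,p]$ one has $0 \le u - q \le p - q$ and $1 - u - q \le 1$, so $(u-q)(1-u-q) \le p - q$ and hence $u(1-u) \le s^2 + (p-q)$. Substituting this into the integral gives the lower bound $\KLdiv{q}{p} \ge \int_q^p \frac{u-q}{s^2 + (p-q)}\,du = \frac{(p-q)^2}{2\big(s^2 + (p-q)\big)}$, which is precisely the variance-plus-range form I expect to drive the whole argument.

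From $\KLdiv{q}{p} \le \epsilon$ I then obtain the quadratic inequality $d^2 \le 2\epsilon(s^2 + d)$ in $d := p - q > 0$, whose positive-root bound is $d \le \epsilon + \sqrt{\epsilon^2 + 2\epsilon s^2}$. Subadditivity of the square root, $\sqrt{\epsilon^2 + 2\epsilon s^2} \le \epsilon + s\sqrt{2\epsilon}$, then gives $d \le 2\epsilon + s\sqrt{2\epsilon}$, which matches $R_k = s_k\sqrt{2\log(2/\alpha)/k} + 2\log(2/\alpha)/k$ exactly. The case $p < q$ follows from the symmetry $\KLdiv{q}{p} = \KLdiv{1-q}{1-p}$ together with the invariance of $s^2 = q(1-q)$ under $q \mapsto 1-q$, which reduces it to the already-treated case with $q' = 1-q < p' = 1-p$. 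Combining the two sides yields $|p-q| \le R_k$ for every $p$ in the KL set, which is the claimed inclusion.

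I expect the only genuine obstacle to be identifying the correct upper bound on $u(1-u)$ inside the integral. A crude bound such as $u(1-u) \le \tfrac14$ only recovers Pinsker's inequality $\KLdiv{q}{p} \ge 2(p-q)^2$, which loses the variance dependence and cannot reproduce the $s_k$-weighted term in $R_k$. It is the linear-in-$(p-q)$ bound $u(1-u) \le s^2 + (p-q)$ that simultaneously produces the $s\sqrt{2\epsilon}$ and the additive $2\epsilon$ terms with the right constants; once this is in hand, the remaining derivative computation, integration, and solving of the quadratic are routine.
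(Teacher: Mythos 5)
Your proposal is correct and follows essentially the same route as the paper's proof: both lower-bound the KL divergence via the integral representation $\KLdiv{\bar{x}_k}{p}=\int_{\bar{x}_k}^p\frac{u-\bar{x}_k}{u(1-u)}\,du$ and the denominator bound $u(1-u)\le \bar{x}_k(1-\bar{x}_k)+|p-\bar{x}_k|$ (you via the exact identity $u(1-u)=q(1-q)+(u-q)(1-u-q)$, the paper via $1$-Lipschitzness of $z\mapsto z(1-z)$), arriving at the same Bernstein-type inequality $\KLdiv{\bar{x}_k}{p}\ge \frac{(p-\bar{x}_k)^2}{2(s_k^2+|p-\bar{x}_k|)}$ and then solving the resulting quadratic. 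Your explicit treatment of the quadratic and the symmetry reduction for $p<\bar{x}_k$ are fine and only spell out steps the paper leaves implicit.
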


\begin{proof}[Proof of \myCref{lem-KL-Bernstein}]
We will prove a more general result: for all $x\in(0,1)$ and $b>0$,
\[
\left\{ p\in(0,1) : \KLdiv{x}{p} \le b \right\} \subseteq \left[ x-R, x+R \right],
\quad\text{where}\quad
R = \sqrt{2x(1-x)b} + 2b.
\]
To prove it, since the function $z\mapsto z(1-z)$ is $1$-Lipschitz continuous on $[0,1]$, then for every $p\in(0,1)$,
\[
\KLdiv{x}{p} = \int_x^p \frac{u-x}{u(1-u)}\,du
\ge 
\int_x^p \frac{u-x}{x(1-x) + |p-x|} \,du = \frac{1}{2} \cdot \frac{(p-x)^2}{x(1-x) + |p-x|}.
\]
Thus,
\[
\left\{ p\in(0,1) : \KLdiv{x}{p} \le b \right\}
\subseteq
\left\{ p\in(0,1) : \frac{1}{2} \cdot \frac{(p-x)^2}{x(1-x) + |p-x|} \le b \right\} \subseteq [x-R,x+R].
\]
This completes the proof.
\end{proof}

\subsection{Proof of \myCref{thm-kappa-IT-KL}}\label{sec-thm-kappa-IT-KL-proof}
Let $q(\profiles_{\kappa}) = \quantile_{1-\alpha}\big( \discKL(\testfunction) \mid \profiles_{\kappa} \big)$. Take $m= \big\lceil \frac{\log(\delta/2)}{\log(1-\alpha)} \big\rceil$ i.i.d.~samples $\testfunction_1,...,\testfunction_m\sim\testfunctiondist$, independent of $\profiles_{\kappa}$. Then $m \le -\frac{\log(2/\delta)}{\log(1-\alpha)} + 1 \le \frac{\log(2/\delta)}{\alpha} + 1$. By the definition of quantile, for all $\epsilon > 0$,
\[
\PP\left(  \discKL(\testfunction_j) \le  q(\profiles_{\kappa}) - \epsilon \mid \profiles_{\kappa} \right) < 1 - \alpha, \qquad \forall j\in[m].
\]
By independence,
\[
\PP\left(  \max_{j\in[m]} \discKL(\testfunction_j) \le q(\profiles_{\kappa}) - \epsilon \Bigm| \profiles_{\kappa} \right)
=
\prod_{j=1}^m \PP\left(  \discKL(\testfunction_j) \le q(\profiles_{\kappa}) - \epsilon \mid \profiles_{\kappa} \right) < (1-\alpha)^m \le \frac{\delta}{2},
\]
which implies
\begin{equation}\label{eqn-kappa-IT-KL-proof-1}
\PP\left(  q(\profiles_{\kappa}) < \max_{j\in[m]} \discKL(\testfunction_j) + \epsilon \right) > 1 - \frac{\delta}{2}.
\end{equation}

We now derive a bound for $\max_{j\in[m]}\discKL(\testfunction_j)$. By \myCref{lem-Ber-concentration-KL}, for every $j\in[m]$,
\[
\PP\left( \discKL(\testfunction_j) \le \frac{\log(4m/\delta)}{\kappa} \Bigm| \testfunction_j \right) \ge 1 - \frac{\delta}{2m}.
\]
By a union bound over $j\in[m]$,
\begin{equation}\label{eqn-kappa-IT-KL-proof-2}
\PP\left( \max_{j\in[m]} \discKL (\testfunction_j) \le \frac{\log(4m/\delta)}{\kappa} \right) \ge 1 - \frac{\delta}{2}.
\end{equation}

Combining \eqref{eqn-kappa-IT-KL-proof-1} and \eqref{eqn-kappa-IT-KL-proof-2} with a union bound, we obtain that with probability at least $1-\delta$,
\begin{equation}\label{eqn-kappa-IT-KL-proof-3}
q(\profiles_{\kappa}) < \epsilon + \frac{\log(4m/\delta)}{\kappa} \le \epsilon + \frac{c'}{\kappa},
\end{equation}
where $c' = \log\big[ 4\delta^{-1}\big(1 + \alpha^{-1} \log(2/\delta)\big)  \big]$. Since this holds for all $\epsilon> 0$, then by the continuity of the probability measure $\PP$,
\[
\PP\left(
q(\profiles_{\kappa}) \le \frac{c'}{\kappa} \right) = \lim_{\epsilon\to 0^+}\PP\left( q(\profiles_{\kappa}) <
\epsilon + \frac{c'}{\kappa} \right) \ge 1 - \delta.
\]
This completes the proof.

\subsection{Proof of \myCref{thm-oracle-effective-sample-size-KL}}\label{sec-thm-oracle-effective-sample-size-KL-proof}

We write $Q_{\beta}=\quantile_{\beta}\left(\discKLreal(\testfunction)\right)$. \myCref{thm-oracle-effective-sample-size-KL} is a consequence of the following lemma.

\begin{lemma}\label{lem-oracle-effective-sample-size-KL}
In the setting of \myCref{thm-oracle-effective-sample-size-KL},
\[
\frac{\log(2/\alpha)}{Q_{1-\alpha/2}}
\le 
\liminf_{C \to\infty} \frac{k_{\KL}^*(C)}{C}
\le
\limsup_{C\to\infty}\frac{k_{\KL}^*(C)}{C}
\le 
\frac{\log(2/\alpha)}{Q_{1-2\alpha}}.
\]
\end{lemma}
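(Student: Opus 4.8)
The plan is to follow the proof of \myCref{lem-oracle-effective-sample-size-chi} almost verbatim, substituting the concentration-based interval $\simCIKL(k;C)$ for the CLT-based interval and the KL discrepancy $\discKLreal(\testfunction)=\KLdiv{\simmean(\testfunction)}{\mean(\testfunction)}$ for the $\chi^2$ discrepancy. Writing $Q_{\beta}=\quantile_{\beta}(\discKLreal(\testfunction))$, the key observation is that membership $\mean(\testfunction)\in\simCIKL(k;C)$ is \emph{equivalent} to $\KLdiv{\simresponsebar_k}{\mean(\testfunction)}\le C\log(2/\alpha)/k$. I would bound $k_{\KL}^*(C)$ from below and from above separately, and then take $C\to\infty$; the two-sided matching form of the statement follows from the definition of $k_{\KL}^*(C)$ as a supremum, exactly as in the reduction of \myCref{thm-oracle-effective-sample-size-chi} to \myCref{lem-oracle-effective-sample-size-chi}.

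First I would set up high-probability events over the randomness of $(\testfunction,\simdataset)$. Under Assumption \myref{assumption-nondegeneracy-both}, the non-degeneracy event $\cA_1=\{\mean(\testfunction),\simmean(\testfunction)\in[\eta,1-\eta]\}$ has probability at least $1-\alpha/4$; the quantile event $\cA_2=\{\discKLreal(\testfunction)\le Q_{1-\alpha/2}\}$ has probability at least $1-\alpha/2$; and a Bernstein bound (Inequality (2.10) in \cite{BLM13}) gives a concentration event $\cA_3(k)=\{|\simresponsebar_k-\simmean(\testfunction)|\le \simsd(\testfunction)\sqrt{2\log(8/\alpha)/k}+2\log(8/\alpha)/(3k)\}$ with probability at least $1-\alpha/4$. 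In contrast to the $\chi^2$ case, I need \emph{no} event controlling a sample standard deviation, since the half-width $C\log(2/\alpha)/k$ is deterministic; a union bound therefore yields $\cA(k)=\cA_1\cap\cA_2\cap\cA_3(k)$ with probability at least $1-\alpha$.

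The step I expect to be the main obstacle is the reduction that replaces $\KLdiv{\simresponsebar_k}{\mean(\testfunction)}$ by $\discKLreal(\testfunction)=\KLdiv{\simmean(\testfunction)}{\mean(\testfunction)}$. Unlike $|\mean(\testfunction)-\simresponsebar_k|$ in the CLT analysis, the KL divergence is not a metric and obeys no triangle inequality, so I would instead use a local Lipschitz estimate in the first argument. On the non-degenerate region one has $\partial_q\KLdiv{q}{p}=\log\frac{q(1-p)}{p(1-q)}$, which is bounded in absolute value by a constant $L=L(\eta)$ whenever $q,p\in[\eta/2,1-\eta/2]$, so that $|\KLdiv{q}{p}-\KLdiv{q'}{p}|\le L|q-q'|$ there. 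On $\cA(k)$ the mean $\simresponsebar_k$ lies within $O(\sqrt{\log(1/\alpha)/k})$ of $\simmean(\testfunction)\in[\eta,1-\eta]$, and so stays bounded away from $0$ and $1$ once $k$ is large, giving
\[
\bigl|\KLdiv{\simresponsebar_k}{\mean(\testfunction)}-\discKLreal(\testfunction)\bigr|\le L\,|\simresponsebar_k-\simmean(\testfunction)|\le \frac{c}{\sqrt{k}}
\]
for a constant $c=c(\alpha,\eta)$. Verifying that $\simresponsebar_k$ remains in the non-degenerate region, so the derivative bound applies uniformly, is the delicate point.

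With this reduction in hand, the two bounds are routine. For the lower bound, on $\cA(k)$ I get $\KLdiv{\simresponsebar_k}{\mean(\testfunction)}\le Q_{1-\alpha/2}+c/\sqrt{k}$, so $\mean(\testfunction)\in\simCIKL(k;C)$ is implied by $Q_{1-\alpha/2}+c/\sqrt{k}\le C\log(2/\alpha)/k$; solving for $k$ produces a deterministic $k_-(C)$ with $\PP(\mean(\testfunction)\in\simCIKL(k_-(C);C))\ge 1-\alpha$, whence $k_{\KL}^*(C)\ge k_-(C)$ and $\lim_{C\to\infty}k_-(C)/C=\log(2/\alpha)/Q_{1-\alpha/2}$. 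For the upper bound I adjoin, for any $\epsilon\in(0,Q_{1-2\alpha})$, the event $\cA_4=\{\discKLreal(\testfunction)\ge Q_{1-2\alpha}-\epsilon\}$ of probability strictly above $2\alpha$, so that $\cA'(k)=\cA(k)\cap\cA_4$ has probability strictly above $\alpha$; on $\cA'(k)$ the reduction gives $\KLdiv{\simresponsebar_k}{\mean(\testfunction)}\ge Q_{1-2\alpha}-\epsilon-c/\sqrt{k}$, and $\mean(\testfunction)\notin\simCIKL(k;C)$ follows once $Q_{1-2\alpha}-\epsilon-c/\sqrt{k}>C\log(2/\alpha)/k$. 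Solving yields $k_+(C,\epsilon)$ with $k_{\KL}^*(C)<k_+(C,\epsilon)$ and $\lim_{C\to\infty}k_+(C,\epsilon)/C=\log(2/\alpha)/(Q_{1-2\alpha}-\epsilon)$; letting $\epsilon\to 0^+$ gives $\limsup_{C\to\infty}k_{\KL}^*(C)/C\le \log(2/\alpha)/Q_{1-2\alpha}$, completing the proof.
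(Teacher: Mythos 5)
Your proposal is correct and reproduces the paper's overall architecture exactly: the same three high-probability events (non-degeneracy, the $Q_{1-\alpha/2}$ quantile event, and Bernoulli concentration of $\simresponsebar_k$), the same extra event $\{\discKLreal(\testfunction)\ge Q_{1-2\alpha}-\epsilon\}$ for the upper bound, the same solve-for-$k$ and $C\to\infty$, $\epsilon\to 0^+$ limit passage, and the correct observation that no sample-standard-deviation event is needed because the radius $C\log(2/\alpha)/k$ is deterministic. The one genuine difference is the technical core. The paper controls $\bigl|\KLdiv{\simresponsebar_k}{\mean(\testfunction)}-\discKLreal(\testfunction)\bigr|$ via \myCref{lem-Ber-concentration-KL-perturbed}, which rests on the three-point identity for the KL divergence (\myCref{lem-KL-three-point-identity}) together with the Chernoff bound $\KLdiv{\simresponsebar_k}{\simmean(\testfunction)}\le \log(2/\delta)/k$, the log-odds Lipschitz bound and Pinsker's inequality; the resulting perturbation is of order $\sqrt{\discKLreal(\testfunction)\log(1/\alpha)/k}\,/\,\eta(1-\eta)$ plus a $1/k$ term, i.e.\ it is self-normalized and shrinks with the discrepancy. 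You instead apply the mean value theorem in the first argument of the KL divergence, which yields the cruder but simpler bound $L(\eta)\,|\simresponsebar_k-\simmean(\testfunction)|=O(\eta^{-1}\sqrt{\log(1/\alpha)/k})$; this requires $\simresponsebar_k$ to remain in the non-degenerate region, which your Bernstein event guarantees once $k\gtrsim \eta^{-2}\log(1/\alpha)$ — automatic here since $k_-(C),k_+(C,\epsilon)\to\infty$ with $C$. For this lemma, where only the $C\to\infty$ limits of $k^*_{\KL}(C)/C$ matter, both error terms are $o(1)$ relative to the main term and the two arguments are equally valid; the paper's sharper, discrepancy-proportional bound pays off only in the finite-$n$, finite-$C$ analysis of \myCref{thm-effective-sample-size-KL}, where the same perturbation lemma is reused.
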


\begin{proof}[Proof of \myCref{lem-oracle-effective-sample-size-KL}]
See \myCref{sec-lem-oracle-effective-sample-size-KL-proof}.
\end{proof}

Given \myCref{lem-oracle-effective-sample-size-KL}, we can derive \myCref{thm-oracle-effective-sample-size-KL} using the same argument as \myCref{sec-thm-oracle-effective-sample-size-KL-proof}.

\subsection{Proof of \myCref{lem-oracle-effective-sample-size-KL}}\label{sec-lem-oracle-effective-sample-size-KL-proof}

By definition, with probability at least $1-\alpha/2$,
\begin{equation}\label{eqn-lem-oracle-effective-sample-size-KL-1}
\discKLreal(\testfunction) \le Q_{1-\alpha/2}.
\end{equation}
Define a function $m(p,q) = \min\{p(1-p),q(1-q)\}$. By Assumption \myref{assumption-nondegeneracy-both}, with probability at least $1-\alpha/4$,
\begin{equation}\label{eqn-lem-oracle-effective-sample-size-KL-2}
m \big( \mean(\testfunction), \, \simmean(\testfunction)  \big) = \min \big\{ \mean(\testfunction)\left(1-\mean(\testfunction)\right), ~ \simmean(\testfunction)\left(1-\simmean(\testfunction)\right) \big\} \ge \eta(1-\eta).
\end{equation}
By \myCref{lem-Ber-concentration-KL-perturbed}, with probability at least $1-\alpha/4$,
\begin{multline}\label{eqn-lem-oracle-effective-sample-size-KL-3}
\left| \KLdiv{\simresponsebar_k}{\mean(\testfunction)} - \KLdiv{\simmean(\testfunction)}{\mean(\testfunction)} \right| \\[4pt]
\le 
\sqrt{\frac{\KLdiv{\simmean(\testfunction)}{\mean(\testfunction)}}{m \big( \mean(\testfunction), \, \simmean(\testfunction)  \big)}}\cdot \sqrt{\frac{\log(8/\alpha)}{k}} + \left( 1 + \frac{ \sqrt{2\KLdiv{\simmean(\testfunction)}{\mean(\testfunction)}} }{m \big( \mean(\testfunction), \, \simmean(\testfunction)  \big)}\right)\frac{\log(8/\alpha)}{k}. 
\end{multline}

\paragraph{Lower bound.} We first derive a lower bound for $\liminf_{C\to\infty} k_{\KL}^*(C)/C$. Applying a union bound to the three events in \eqref{eqn-lem-oracle-effective-sample-size-KL-1}, \eqref{eqn-lem-oracle-effective-sample-size-KL-2} and \eqref{eqn-lem-oracle-effective-sample-size-KL-3}, we see that the following inequality holds with probability at least $1-\alpha$:
\begin{equation}\label{eqn-lem-oracle-effective-sample-size-KL-lower-1}
\KLdiv{\simresponsebar_k}{\mean(\testfunction)}
\le 
Q_{1-\alpha/2}
+
\sqrt{\frac{Q_{1-\alpha/2}}{\eta(1-\eta)}}\cdot \sqrt{\frac{\log(8/\alpha)}{k}} + \left( 1 + \frac{ \sqrt{2Q_{1-\alpha/2}} }{\eta(1-\eta)}\right)\frac{\log(8/\alpha)}{k}.
\end{equation}
Denote this event by $\cE(k)$.

For a fixed $k\in\ZZ_+$ and $C\ge 1$, when $\cE(k)$ happens,
\begin{align}
&\quad \mean(\testfunction) \in \simCIKL(k;C) \notag \\[4pt]
\Leftrightarrow&\quad 
\KLdiv{ \simresponsebar_k}{p} \le \frac{C\log(2/\alpha)}{k} \notag \\[4pt]
\Leftarrow&\quad
Q_{1-\alpha/2}
+
\sqrt{\frac{Q_{1-\alpha/2}}{\eta(1-\eta)}}\cdot \sqrt{\frac{\log(8/\alpha)}{k}} + \left( 1 + \frac{ \sqrt{2Q_{1-\alpha/2}} }{\eta(1-\eta)}\right)\frac{\log(8/\alpha)}{k}
\le \frac{C\log(2/\alpha)}{k} \label{eqn-lem-oracle-effective-sample-size-KL-lower-2}
\end{align}
When $C\ge 1$ is sufficiently large, solving \eqref{eqn-lem-oracle-effective-sample-size-KL-lower-2} for $k$ yields
\[
k \le \frac{C\log(2/\alpha) - \xi}{Q_{1-\alpha/2}}\left( 1 + \sqrt{\frac{\log(8/\alpha) \cdot \eta^{-1}(1-\eta)^{-1}}{C\log(2/\alpha) - \xi}} \right)^{-2},
\]
where $\xi=\left( 1 + \frac{ \sqrt{2 Q_{1-\alpha/2}} }{\eta(1-\eta)}\right)\log(8/\alpha)$. Let 
\[
k^*_-(C) = \bigg\lfloor \frac{C\log(2/\alpha) - \xi}{Q_{1-\alpha/2}}\left( 1 + \sqrt{\frac{\log(8/\alpha) \cdot \eta^{-1}(1-\eta)^{-1}}{C\log(2/\alpha) - \xi}} \right)^{-2} \bigg\rfloor,
\] 
then $\PP\big(\mean(\testfunction) \in \simCIKL(k^*_-(C);C) \big) \ge \PP\big( \cE(k^*_-(C)) \big) \ge  1-\alpha$, and thus $k_{\KL}^*(C) \ge k^*_-(C)$. Therefore,
\[
\limsup_{C\to\infty} \frac{k_{\KL}^*(C)}{C} \ge \lim_{C\to\infty} \frac{k_-^*(C)}{C} = \frac{\log(2/\alpha)}{Q_{1-\alpha/2}}.
\]
This proves the lower bound.

\paragraph{Upper bound.} Next we derive an upper bound for $\limsup_{C\to\infty} k_{\KL}^*(C)/C$. Take arbitrary $\epsilon \in (0, Q_{1-2\alpha})$. Then with probability strictly greater than $2\alpha$,
\begin{equation}\label{eqn-lem-oracle-effective-sample-size-KL-4}
\discKLreal(\testfunction) \ge Q_{1-2\alpha} - \epsilon.
\end{equation}
Applying the union bound to the four events in \eqref{eqn-lem-oracle-effective-sample-size-KL-1}, \eqref{eqn-lem-oracle-effective-sample-size-KL-2}, \eqref{eqn-lem-oracle-effective-sample-size-KL-3} and \eqref{eqn-lem-oracle-effective-sample-size-KL-4}, we see that the following inequality holds with probability strictly greater than $\alpha$:
\begin{equation}\label{eqn-lem-oracle-effective-sample-size-KL-upper-1}
\KLdiv{\simresponsebar_k}{\mean(\testfunction)}
\ge 
Q_{1-2\alpha} - \epsilon
-
\sqrt{\frac{Q_{1-\alpha/2}}{\eta(1-\eta)}}\cdot \sqrt{\frac{\log(8/\alpha)}{k}} - \left( 1 + \frac{ \sqrt{2Q_{1-\alpha/2}} }{\eta(1-\eta)}\right)\frac{\log(8/\alpha)}{k}.
\end{equation}
Denote this event by $\cE'(k)$. 

For a fixed $k\in\ZZ_+$ and $C\ge 1$, when $\cE'(k)$ happens,
\begin{align}
&\quad \mean(\testfunction) \not\in \simCIKL(k;C) \notag \\[4pt]
\Leftrightarrow&\quad 
\KLdiv{ \simresponsebar_k}{p} > \frac{C\log(2/\alpha)}{k} \notag \\[4pt]
\Leftarrow&\quad
Q_{1-2\alpha}
-
\sqrt{\frac{Q_{1-\alpha/2}}{\eta(1-\eta)}}\cdot \sqrt{\frac{\log(8/\alpha)}{k}} - \left( 1 + \frac{ \sqrt{2Q_{1-\alpha/2}} }{\eta(1-\eta)}\right)\frac{\log(8/\alpha)}{k}
> \frac{C\log(2/\alpha)}{k} \label{eqn-lem-oracle-effective-sample-size-KL-upper-2}
\end{align}
When $C\ge 1$ is sufficiently large, solving \eqref{eqn-lem-oracle-effective-sample-size-KL-upper-2} yields
\[
k > \big( C\log(2/\alpha) + \xi \big) \left( \sqrt{Q_{1-2\alpha}-\epsilon} + \frac{1}{2} \sqrt{\frac{Q_{1-\alpha/2}\log(8/\alpha) \cdot \eta^{-1}(1-\eta)^{-1}}{C\log(2/\alpha) + \xi}} \right)^{-2},
\]
where $\xi=\left( 1 + \frac{ \sqrt{2 Q_{1-\alpha/2}} }{\eta(1-\eta)}\right)\log(8/\alpha)$. Let 
\[
k^*_+(C,\epsilon) = \bigg\lfloor \big( C\log(2/\alpha) + \xi \big) \left( \sqrt{Q_{1-2\alpha}-\epsilon} + \frac{1}{2} \sqrt{\frac{Q_{1-\alpha/2}\log(8/\alpha) \cdot \eta^{-1}(1-\eta)^{-1}}{C\log(2/\alpha) + \xi}} \right)^{-2} \bigg\rfloor + 1.
\] 
Then $\PP\Big( \mean(\testfunction) \not\in \simCIKL(k^*_+(C,\epsilon);C) \Big) \ge \PP\big( \cE'(k_+^*(C,\epsilon)) \big) > \alpha$, which implies \[
\PP\left( \mean(\testfunction) \in \simCIKL(k^*_+(C,\epsilon);C) \right) < 1 - \alpha.
\] 
Thus, $k_{\KL}^*(C) < k^*_+(C)$, and
\[
\liminf_{C\to\infty} \frac{k_{\KL}^*(C)}{C} \le \lim_{C\to\infty} \frac{k^*_+(C,\epsilon)}{C} = \frac{\log(2/\alpha)}{Q_{1-2\alpha}-\epsilon}.
\]
Since this holds for arbitrarily small $\epsilon>0$, then
\[
\limsup_{C\to\infty}\frac{k^*(C)}{C}
\le
\frac{\log(2/\alpha)}{Q_{1-2\alpha}}.
\]
This proves the upper bound.

\subsection{Proof of \myCref{thm-effective-sample-size-KL}}\label{sec-thm-effective-sample-size-KL-proof}
Take i.i.d.~samples $\{y_i\}_{i=1}^n\sim\Bernoulli(\mean(\testfunction))$, and set $\responsebar = \frac{1}{n} \sum_{i=1}^n y_i$. By Hoeffding's inequality and a union bound, with probability at least $1-\delta$,
\begin{equation}\label{eqn-thm-effective-sample-size-KL-0}
\left|\frac{1}{m} \sum_{j=1}^m \ind \left\{ \responsebar_j \in \simCIKL_j(k;C) \right\}
-
\PP\big( \responsebar \in \simCIKL(k;C) \big) \right|
\le 
\sqrt{\frac{\log(2K/\delta)}{2m}}, \quad \forall k\in[K].
\end{equation}
From now on assume that \eqref{eqn-thm-effective-sample-size-KL-0} happens. When $m\ge 32\alpha^{-1}\log(2K/\delta)$, we have $\sqrt{\frac{\log(K/\delta)}{2m}}\le \alpha/8$. Thus,
\begin{align}
&\widehat{k}(C)
\ge 
\max\left\{ 0\le k\le K:
\PP\big( \responsebar \in \simCIKL(i;C) \big) \ge 1 - 3\alpha/8,\ \forall i \le k
\right\}, \label{eqn-thm-effective-sample-size-KL-lower-0} \\[4pt]
&\widehat{k}(C)
\le 
\max\left\{ 0\le k\le K:
\PP\big( \responsebar \in \simCIKL(i;C) \big) \ge 1 - 5\alpha/8,\ \forall i \le k
\right\}. \label{eqn-thm-effective-sample-size-KL-upper-0}
\end{align}
By definition,
\[
\PP\big( \responsebar \in \simCIKL(k;C) \big)
=
\PP\left( \KLdiv{\simresponse_k}{\responsebar} \le \frac{C\log(2/\alpha)}{k} \right).
\]

For notational convenience, we let $Q_{\beta} = \quantile_{\beta} \left( \Delta(\testfunction) \right)$. By \myCref{lem-Ber-concentration-KL-perturbed-double}, there exists a universal constant $c_1>0$ such that when $n\ge c_1\log(1+\alpha^{-1})\cdot \eta^{-1}(1-\eta)^{-1}$, the following happens with probability at least $1-\alpha/8$,
\begin{align}
\left| \KLdiv{\simresponse_k}{\responsebar} - \Delta(\testfunction) \right|
&\lesssim 
\frac{\sqrt{v\left(\simmean(\testfunction)\right)\cdot \Delta(\testfunction)}}{m(\simmean(\testfunction),\mean(\testfunction))} \cdot \sqrt{\frac{\log(1+\alpha^{-1})}{k}}
+
\sqrt{\frac{\Delta(\testfunction)}{v\left(\mean(\testfunction)\right)}} \cdot \sqrt{\frac{\log(1+\alpha^{-1})}{n}} \notag \\[4pt]
&\qquad+
\left(1 + \frac{\sqrt{\Delta(\testfunction)}}{m(\simmean(\testfunction),\mean(\testfunction))} \right) \left[ \frac{\log(1+\alpha^{-1})}{k} + \frac{\log(1+\alpha^{-1})}{n} \right] \notag \\[4pt]
&\qquad +
\sqrt{\frac{v\left(\simmean(\testfunction)\right)}{v\left(\mean(\testfunction)\right)}}\frac{\log(1+\alpha^{-1})}{kn}. \label{eqn-thm-effective-sample-size-KL-1}
\end{align}
Here $v(p) = p(1-p)$, $m(p,q) = \min\{v(p),v(q)\}$, and $\lesssim$ hides a universal constant. 
By definition, with probability at least $1-\alpha/8$,
\begin{equation}\label{eqn-thm-effective-sample-size-KL-2}
\Delta(\testfunction) \le Q_{1-\alpha/8}.
\end{equation} 
By Assumption \myref{assumption-nondegeneracy-both}, with probability at least $1-\alpha/8$,
\begin{equation}\label{eqn-thm-effective-sample-size-KL-3}
m \big( \mean(\testfunction), \, \simmean(\testfunction)  \big) = \min \big\{ \mean(\testfunction)\left(1-\mean(\testfunction)\right), ~ \simmean(\testfunction)\left(1-\simmean(\testfunction)\right) \big\} \ge \eta(1-\eta).
\end{equation}

\paragraph{Lower bound.} We first derive a lower bound for $\widehat{k}_{\KL}(C)$. 
Applying a union bound to the events in \eqref{eqn-thm-effective-sample-size-KL-1}, \eqref{eqn-thm-effective-sample-size-KL-2} and \eqref{eqn-thm-effective-sample-size-KL-3}, we obtain that for every $k\in[K]$, with probability at least $1-3\alpha/8$,
\begin{multline}\label{eqn-thm-effective-sample-size-KL-lower-1}
\KLdiv{\simresponse_k}{\responsebar}
\le 
Q_{1-\alpha/8}
+
C_0 \bigg\{
\frac{\sqrt{Q_{1-\alpha/8}}}{\eta(1-\eta)} \cdot \sqrt{\frac{\log(1+\alpha^{-1})}{k}}
+
\sqrt{\frac{Q_{1-\alpha/8}}{\eta(1-\eta)}} \cdot \sqrt{\frac{\log(1+\alpha^{-1})}{n}} \\[4pt]
+
\left(1 + \frac{\sqrt{Q_{1-\alpha/8}}}{\eta(1-\eta)} \right) \left[ \frac{\log(1+\alpha^{-1})}{k} + \frac{\log(1+\alpha^{-1})}{n} \right]
+
\sqrt{\frac{1}{\eta(1-\eta)}}\frac{\log(1+\alpha^{-1})}{kn} \bigg\},
\end{multline}
where $C_0>0$ is a universal constant. Denote the right hand side of \eqref{eqn-thm-effective-sample-size-KL-lower-1} by $b(n,k)$, and let
\[
k_-(C) = \sup\left\{ k\in\ZZ_+ : b(n,i) \le \frac{C\log(2/\alpha)}{i},\ \forall i\le k \right\}.
\]
It is easy to see that $k_-(C)$ is finite and increasing in $C$, and as $C\to\infty$, $\widetilde{k}_-(C)\to\infty$. Moreover, $\PP\Big( \responsebar \in \simCIKL(i;C) \Big) \ge 1-3\alpha/8$ for all $i\le k_-(C)$. Therefore, by \eqref{eqn-thm-effective-sample-size-KL-lower-0},
\[
\widehat{k}(C) \ge \min \{  k_-(C), \, K\}.
\]

By definition,
\[
b\left( n,k_-(C)+1 \right) > \frac{C\log(2/\alpha)}{k_-(C)+1}.
\]
Taking $\limsup$ as $C\to\infty$ on both sides yields
\begin{multline*}
Q_{1-\alpha/8} + C_0 \left\{ \sqrt{\frac{Q_{1-\alpha/8}}{\eta(1-\eta)}} \cdot \sqrt{\frac{\log(1+\alpha^{-1})}{n}} + \left(1 + \frac{\sqrt{Q_{1-\alpha/8}}}{\eta(1-\eta)} \right)\frac{\log(1+\alpha^{-1})}{n} \right\} \\
\ge 
\log(2/\alpha)\cdot \limsup_{C\to\infty} \frac{C}{k_-(C)}.
\end{multline*}
Thus,
\begin{align*}
\liminf_{C\to\infty}\frac{k_-(C)}{C} 
&\ge 
\log(2/\alpha) \cdot \left( Q_{1-\alpha/8} + c_2'\sqrt{\frac{Q_{1-\alpha/8}}{\eta(1-\eta)}} \cdot \frac{\log(1+\alpha^{-1})}{n} \right)^{-1} \\[4pt]
&\ge
\frac{\log(2/\alpha)}{\big( Q_{1-\alpha/8}^{1/2} + c_2n^{-1/2} \big)^2}
\end{align*}
for some constants $c_2>0$ depending on $\alpha$ and $\eta$. By \myCref{lem-liminf-to-lim}, we can further construct $\overline{k}_-(C)$ such that $\widehat{k}(C)\ge \min\big\{K, \overline{k}_-(C) \big\}$ and
\[
\lim_{C\to\infty}\frac{\overline{k}_-(C)}{C} = \frac{\log(2/\alpha)}{\big( Q_{1-\alpha/8}^{1/2} + c_2n^{-1/2} \big)^2}.
\]

\paragraph{Upper bound.} Take arbitrary $\epsilon\in(0,Q_{1-\alpha})$. By definition, with probability strictly greater than $\alpha$,
\begin{equation}\label{eqn-thm-effective-sample-size-KL-upper-1}
\Delta(\testfunction) > Q_{1-\alpha} - \epsilon.
\end{equation} 
Applying a union bound to the events in \eqref{eqn-thm-effective-sample-size-KL-1}, \eqref{eqn-thm-effective-sample-size-KL-2}, \eqref{eqn-thm-effective-sample-size-KL-3} and \eqref{eqn-thm-effective-sample-size-KL-upper-1}, we obtain that for every $k\in[K]$, with probability strictly greater than $5\alpha/8$,
\begin{multline}\label{eqn-thm-effective-sample-size-KL-upper-2}
\KLdiv{\simresponse_k}{\responsebar}
\ge 
Q_{1-\alpha} - \epsilon
-
C_0 \bigg\{
\frac{\sqrt{Q_{1-\alpha/8}}}{\eta(1-\eta)} \cdot \sqrt{\frac{\log(1+\alpha^{-1})}{k}}
+
\sqrt{\frac{Q_{1-\alpha/8}}{\eta(1-\eta)}} \cdot \sqrt{\frac{\log(1+\alpha^{-1})}{n}} \\[4pt]
+
\left(1 + \frac{\sqrt{Q_{1-\alpha/8}}}{\eta(1-\eta)} \right) \left[ \frac{\log(1+\alpha^{-1})}{k} + \frac{\log(1+\alpha^{-1})}{n} \right]
+
\sqrt{\frac{1}{\eta(1-\eta)}}\frac{\log(1+\alpha^{-1})}{kn} \bigg\},
\end{multline}
where $C_0>0$ is a universal constant. Denote the right hand side of \eqref{eqn-thm-effective-sample-size-KL-upper-2} by $b'(n,k)-\epsilon$, and let
\[
k_+(C,\epsilon) = \inf\left\{ k\in\ZZ_+ : b'(n,k) - \epsilon \ge \frac{C\log(2/\alpha)}{k} \right\}.
\]
It is easy to see that $k_+(C,\epsilon)$ is increasing in $C$, and as $C\to\infty$, $k_+(C,\epsilon)\to \infty$. 

If $k_+(C,\epsilon)<\infty$ for all $C\ge 1$ sufficiently large, then $\PP\big(  \responsebar \not\in \simCIKL(k_+(C,\epsilon);C) \big) > 5\alpha/8$, so $\PP\big(  \responsebar \in \simCIKL(k_+(C,\epsilon);C) \big) < 1- 5\alpha/8$. Thus, by \eqref{eqn-thm-effective-sample-size-KL-upper-0},
\[
\widehat{k}(C) \le \min\big\{ K, \, k_+(C,\epsilon) \big\}.
\] 
By definition,
\[
b'\left( n,k_+(C,\epsilon)-1 \right) - \varepsilon < \frac{C\log(2/\alpha)}{k_+(C,\epsilon)-1}.
\]
Taking $\liminf$ as $C\to\infty$ on both sides yields
\begin{multline*}
Q_{1-\alpha} - \epsilon - C_0 \left\{ \sqrt{\frac{Q_{1-\alpha/8}}{\eta(1-\eta)}} \cdot \sqrt{\frac{\log(1+\alpha^{-1})}{n}} + \left(1 + \frac{\sqrt{Q_{1-\alpha/8}}}{\eta(1-\eta)} \right)\frac{\log(1 + \alpha^{-1})}{n} \right\} \\
\le 
\log(2/\alpha)\cdot \liminf_{C\to\infty} \frac{C}{k_+(C,\epsilon)}.
\end{multline*}
Since this holds for arbitrarily small $\epsilon>0$, then
\begin{align*}
\limsup_{C\to\infty}\frac{k_+(C,\epsilon)}{C} 
&\le
\log(2/\alpha) \cdot \left( Q_{1-\alpha} - \epsilon - c_3'\sqrt{\frac{Q_{1-\alpha/8}}{\eta(1-\eta)}} \cdot \frac{\log(1+\alpha^{-1})}{n} \right)^{-1} \\[4pt]
&\le
\frac{\log(2/\alpha)}{\big( (Q_{1-\alpha}-\epsilon)^{1/2} - c_3 n^{-1/2} \big)_+^2}
\end{align*}
for some constant $c_3>0$ depending on $\eta$ and $\alpha$. By \myCref{lem-liminf-to-lim}, we can further construct $\overline{k}_+(C,\epsilon)$ such that $\widehat{k}(C)\le \min\big\{K, \overline{k}_+(C,\epsilon) \big\}$ and
\[
\lim_{C\to\infty}\frac{\overline{k}_+(C,\epsilon)}{C} = \frac{\log(2/\alpha)}{\big( (Q_{1-\alpha}-\epsilon)^{1/2} - c_3 n^{-1/2} \big)_+^2}.
\]
When $k_+(C,\epsilon)=\infty$ for some $C$, the right hand side of the above must be $\infty$ as well, and the result still holds. 

Following a similar argument as the last part of the proof of \myCref{thm-effective-sample-size-chi}, we can construct $\widetilde{k}_+(C)$ such that $\widehat{k}(C)\le \min\big\{K, \widetilde{k}_+(C) \big\}$ and
\[
\lim_{C\to\infty}\frac{\widetilde{k}_+(C)}{C} = \frac{\log(2/\alpha)}{\big( Q_{1-\alpha}^{1/2} - c_3 n^{-1/2} \big)_+^2}.
\]
This finishes the proof.

\section{Technical Lemmas}

\begin{lemma}[Reversed KL-based Bernoulli concentration bound]\label{lem-Ber-concentration-reverse-KL}
Fix $p\in(0,1)$. Let $\{y_i\}_{i=1}^n$ be i.i.d.~samples of $\Bernoulli(p)$. Define $\bar{y}_n = \frac{1}{n}\sum_{i=1}^n y_i$. Fix $\delta\in(0,1)$. There exist universal constants $c_1,c_2>0$ such that
\[
\PP\left( \KLdiv{p}{\bar{y}_n} \le \frac{c_2\log(2/\delta)}{n} \right) \ge 1-\delta,
\quad
\forall n \ge \frac{c_1\log(2/\delta)}{p(1-p)}.
\]
\end{lemma}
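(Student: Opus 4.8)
The plan is to reduce the reversed divergence to a squared-deviation bound via a second-order Taylor expansion, and then to control $|\bar{y}_n - p|$ with a variance-sensitive Bernstein bound. The one genuine difficulty is that $\KLdiv{p}{\bar{y}_n}$ is infinite whenever $\bar{y}_n\in\{0,1\}$, and blows up like $1/\bar{y}_n$ or $1/(1-\bar{y}_n)$ near the endpoints; the hypothesis $n\ge c_1\log(2/\delta)/(p(1-p))$ is precisely what guarantees, through concentration, that $\bar{y}_n$ stays safely away from $0$ and $1$, so that the quadratic approximation is valid with a controlled curvature.

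First I would invoke Bernstein's inequality (the same form used for the event $\cA_3(k)$ earlier, e.g.\ Inequality (2.10) in \cite{BLM13}): with probability at least $1-\delta$,
\[
|\bar{y}_n - p| \le \sqrt{\frac{2p(1-p)\log(2/\delta)}{n}} + \frac{2\log(2/\delta)}{3n} =: D.
\]
Writing $t = \log(2/\delta)/n$, the assumption $n\ge c_1\log(2/\delta)/(p(1-p))$ reads $t\le p(1-p)/c_1$, whence $D\le p(1-p)\big(\sqrt{2/c_1}+2/(3c_1)\big)$. Taking $c_1$ a large enough universal constant (e.g.\ $c_1=32$) forces $D\le p(1-p)/2$. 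Since $p(1-p)\le\min\{p,1-p\}$, this yields $\bar{y}_n\ge p/2$ and $1-\bar{y}_n\ge(1-p)/2$ on the good event, so the entire segment between $p$ and $\bar{y}_n$ is bounded below by $p/2$ and above by $1-(1-p)/2$.

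Next I would Taylor-expand $f(q)=\KLdiv{p}{q}$ in its second argument. A direct computation gives $f(p)=0$, $f'(p)=0$, and $f''(q)=p/q^2+(1-p)/(1-q)^2$. On the segment above, $q\ge p/2$ and $1-q\ge(1-p)/2$ give $f''(q)\le 4/p+4/(1-p)=4/(p(1-p))$, so Lagrange's remainder yields, on the good event,
\[
\KLdiv{p}{\bar{y}_n} = \tfrac{1}{2} f''(\xi)\,(\bar{y}_n - p)^2 \le \frac{2(\bar{y}_n - p)^2}{p(1-p)}.
\]
Substituting $(\bar{y}_n-p)^2\le 2D^2\le \tfrac{8p(1-p)\log(2/\delta)}{n}+\tfrac{16\log^2(2/\delta)}{9n^2}$ (using $(a+b)^2\le 2a^2+2b^2$) gives
\[
\KLdiv{p}{\bar{y}_n} \le \frac{8\log(2/\delta)}{n} + \frac{16\log^2(2/\delta)}{9\,p(1-p)\,n^2},
\]
and bounding the second term by $\tfrac{16}{9c_1}\cdot\tfrac{\log(2/\delta)}{n}$ via $t\le p(1-p)/c_1$ absorbs it into the first, producing the claim with $c_2 = 8 + 16/(9c_1)$.

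The main obstacle, as flagged, is the boundary blow-up. A plain Hoeffding bound on $|\bar{y}_n-p|$ would fail on two counts: it cannot keep $\bar{y}_n$ away from the endpoints with a universal $c_1$, since the required deviation threshold scales like $p(1-p)$ rather than $\sqrt{p(1-p)}$, and it would leave a spurious $1/(p(1-p))$ factor in the final rate. The variance-sensitive Bernstein bound is exactly what makes the $p(1-p)$ factors cancel and delivers the clean $O(\log(2/\delta)/n)$ rate.
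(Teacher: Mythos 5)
Your proposal is correct and follows essentially the same route as the paper's proof: Bernstein's inequality plus the sample-size condition keeps $\bar{y}_n$ within $p(1-p)/2$ of $p$ (hence strictly inside $(0,1)$), and a local Taylor/mean-value argument then yields the identical intermediate bound $\KLdiv{p}{\bar{y}_n}\le 2(\bar{y}_n-p)^2/(p(1-p))$, which the Bernstein deviation bound converts into the claimed $O(\log(2/\delta)/n)$ rate. The only cosmetic difference is that you use a second-order expansion with a bound on $f''$ along the segment, whereas the paper applies the first-order mean value theorem to $r\mapsto\KLdiv{p}{r}$ together with the $1$-Lipschitzness of $z\mapsto z(1-z)$; both land in the same place (modulo a harmless dropped factor of $2$ in your displayed constants, which is absorbed into the universal constant $c_2$).
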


\begin{proof}[Proof of \myCref{lem-Ber-concentration-reverse-KL}]
By Bernstein's inequality (e.g., (2.10) in \cite{BLM13}),
\begin{equation}\label{eqn-lem-Ber-concentration-reverse-KL-proof-1}
\PP\left( \left|\bar{y}_n - p\right| \le \sqrt{\frac{2p(1-p)\log(2/\delta)}{n}} + \frac{2\log(2/\delta)}{3n}  \right) \ge 1 - \delta.
\end{equation}
From now on suppose that the event in \eqref{eqn-lem-Ber-concentration-reverse-KL-proof-1} happens. There exists a universal constant $c>0$ such that when $n\ge c\log(2/\delta)\cdot p^{-1}(1-p)^{-1}$,
\[
\left|\bar{y}_n - p\right| \le \sqrt{\frac{2p(1-p)\log(2/\delta)}{n}} + \frac{2\log(2/\delta)}{3n}
\le 
\frac{p(1-p)}{2}.
\]
In this case, $\bar{y}_n\in(0,1)$. Let $g(r)=\KLdiv{p}{r}$, $r\in(0,1)$, then $g'(r) = \frac{r-p}{r(1-r)}$. By the mean value theorem, there exists $\xi$ between $\bar{y}_n$ and $p$ such that
\[
\KLdiv{p}{\bar{y}_n} = \frac{\xi-p}{\xi(1-\xi)}\left(\bar{y}_n - p \right).
\]
Since $(\xi-p)(\bar{y}_n - p) \le \left( \bar{y}_n - p \right)^2$ and $\xi(1-\xi) \ge p(1-p) - |\xi-p| \ge p(1-p)/2$, then
\begin{align*}
\KLdiv{p}{\bar{y}_n}
&\le 
\frac{2}{p(1-p)} \cdot \left( \bar{y}_n - p \right)^2 \\[4pt]
&\lesssim \frac{1}{p(1-p)} \left( \frac{p(1-p)\log(2/\delta)}{n} + \frac{\log(2/\delta)^2}{n^2} \right)
\lesssim
\frac{\log(2/\delta)}{n}.
\end{align*}
In the last inequality, we have used the fact that $n\gtrsim \log(2/\delta)\cdot p^{-1}(1-p)^{-1}$.
\end{proof}

\begin{lemma}[KL-based perturbed Bernoulli concentration bound]\label{lem-Ber-concentration-KL-perturbed}
Fix $q,p\in(0,1)$. Let $\{x_i\}_{i=1}^k$ be i.i.d.~samples of $\Bernoulli(q)$. Define $\bar{x}_k = \frac{1}{k}\sum_{i=1}^k x_i$. For every $\delta\in(0,1)$, with probability at least $1-\delta$,
\[
\left| \KLdiv{\bar{x}_k}{ p} - \KLdiv{q }{ p} \right|
\le 
\frac{\sqrt{q(1-q)\cdot \KLdiv{q}{p}}}{m(p,q)}\sqrt{\frac{\log(2/\delta)}{k}} + \left( 1 + \frac{ \sqrt{2\KLdiv{q}{p}} }{m(p,q)}\right)\frac{\log(2/\delta)}{k},
\]
where $m(p,q) = \min \left\{ p(1-p), q(1-q) \right\}$.
\end{lemma}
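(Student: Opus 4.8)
The plan is to reduce the statement to an exact algebraic decomposition of the left-hand side, followed by three elementary estimates, two of which I will extract from a single concentration event. First I would record the Bregman-type identity for the Bernoulli KL divergence: writing $\phi(r) = r\log r + (1-r)\log(1-r)$ for the negative binary entropy, one has $\KLdiv{a}{b} = \phi(a) - \phi(b) - \phi'(b)(a-b)$, and subtracting the instances at $(\bar{x}_k,p)$ and $(q,p)$ gives the exact identity
\[
\KLdiv{\bar{x}_k}{p} - \KLdiv{q}{p} = f'(q)\,(\bar{x}_k - q) + \KLdiv{\bar{x}_k}{q},
\qquad f'(q) = \log\frac{q(1-p)}{(1-q)p}.
\]
Since $\KLdiv{\bar{x}_k}{q}\ge 0$, this yields the two-sided bound $\big|\KLdiv{\bar{x}_k}{p} - \KLdiv{q}{p}\big| \le |f'(q)|\,|\bar{x}_k - q| + \KLdiv{\bar{x}_k}{q}$, so it remains to control $|f'(q)|$, $|\bar{x}_k - q|$, and $\KLdiv{\bar{x}_k}{q}$.

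For $|f'(q)|$ I would use $f'(q) = \int_p^q \frac{\mathrm{d}t}{t(1-t)}$, so that $|f'(q)| \le |q-p|/m(p,q)$ because $t(1-t)$ is concave and hence minimized at an endpoint of the interval between $p$ and $q$. Combining this with Pinsker's inequality $\KLdiv{q}{p} \ge 2(q-p)^2$ gives the deterministic bound $|f'(q)| \le \sqrt{\KLdiv{q}{p}/2}\,/\,m(p,q)$, which already supplies the factor $\sqrt{\KLdiv{q}{p}}/m(p,q)$ appearing in both terms of the claim.

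The randomness then enters only through the single event $\{\KLdiv{\bar{x}_k}{q} \le \log(2/\delta)/k\}$, which holds with probability at least $1-\delta$ by \myCref{lem-Ber-concentration-KL}. On this event the last summand is controlled directly, contributing the ``$1$'' inside the coefficient $\big(1 + \sqrt{2\KLdiv{q}{p}}/m(p,q)\big)$. To control $|\bar{x}_k - q|$ in a variance-adaptive way on the same event, I would mirror the $1$-Lipschitz argument already used in \myCref{lem-KL-Bernstein}: from $\KLdiv{r}{q} = \int_q^r \frac{u-q}{u(1-u)}\,\mathrm{d}u \ge \frac{(r-q)^2}{2(q(1-q)+|r-q|)}$ one deduces that $\KLdiv{\bar{x}_k}{q}\le \epsilon$ forces $|\bar{x}_k - q| \le \sqrt{2q(1-q)\epsilon} + 2\epsilon$ with $\epsilon = \log(2/\delta)/k$. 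Substituting the three bounds into the decomposition and simplifying — using $\sqrt{\KLdiv{q}{p}/2}\cdot\sqrt{2q(1-q)} = \sqrt{q(1-q)\KLdiv{q}{p}}$ and $2\sqrt{\KLdiv{q}{p}/2} = \sqrt{2\KLdiv{q}{p}}$ — reproduces the stated bound with its exact constants.

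The main obstacle I anticipate is obtaining the variance-dependent factor $\sqrt{q(1-q)}$ in the leading term while keeping the probability at $1-\delta$ with the clean argument $\log(2/\delta)$. A naive route would apply Pinsker to the concentration event to bound $|\bar{x}_k - q|$, but this discards the variance and yields a strictly weaker estimate; conversely, invoking Bernstein's inequality as a second, separate event would cost an extra $\delta$ in the union bound and replace $\log(2/\delta)$ by $\log(4/\delta)$. The resolution is precisely the KL-ball-to-Bernstein-ball inclusion of the previous paragraph, which squeezes the sharp variance-adaptive deviation bound out of the one event already in hand; checking that this produces the claimed constants, and not merely the correct rate, is the only genuinely delicate bookkeeping.
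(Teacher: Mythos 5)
Your proposal is correct and follows essentially the same route as the paper's proof: the Bregman/three-point identity for the Bernoulli KL, the Lipschitz bound $\big|\log\tfrac{q}{1-q}-\log\tfrac{p}{1-p}\big|\le |p-q|/m(p,q)$, Pinsker's inequality, and a single Chernoff event from which both the KL bound and the variance-adaptive deviation bound on $|\bar{x}_k-q|$ are extracted. Your explicit KL-ball-to-Bernstein-ball inclusion is just a spelled-out version of the paper's terser remark that Bernstein's inequality follows from the same Chernoff event, so no substantive difference.
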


\begin{proof}[Proof of \myCref{lem-Ber-concentration-KL-perturbed}]
By \myCref{lem-KL-three-point-identity} with $a=\bar{x}_k$, $A=q$ and $b=p$, we have
\begin{equation}\label{eqn-lem-Ber-concentration-KL-perturbed-proof-0}
\KLdiv{\bar{x}_k}{ p} - \KLdiv{q }{ p}
=
\KLdiv{\bar{x}_k}{ q} + \left(\bar{x}_k - q\right) \cdot \left( \log\frac{q}{1-q} - \log\frac{p}{1-p}  \right). 
\end{equation}
By \myCref{lem-Ber-concentration-KL} and since Bernstein's inequality can be derived from \myCref{lem-Ber-concentration-KL}, the following two inequalities hold with probability at least $1-\delta$:
\begin{align}
&\KLdiv{\bar{x}_k }{ q} \le \frac{\log(2/\delta)}{k},\label{eqn-lem-Ber-concentration-KL-perturbed-proof-1} \\[4pt]
&\left|\bar{x}_k - q\right| \le \sqrt{\frac{2q(1-q)\log(2/\delta)}{k}} + \frac{2\log(2/\delta)}{3k}. \label{eqn-lem-Ber-concentration-KL-perturbed-proof-2}
\end{align}
By \myCref{lem-local-Lipschitz-log-x/(1-x)},
\begin{equation}\label{eqn-lem-Ber-concentration-KL-perturbed-proof-3}
\left| \log\frac{q}{1-q} - \log\frac{p}{1-p}  \right|
\le 
\frac{|p-q|}{m(p,q)}.
\end{equation}
Substituting \eqref{eqn-lem-Ber-concentration-KL-perturbed-proof-1}, \eqref{eqn-lem-Ber-concentration-KL-perturbed-proof-2} and \eqref{eqn-lem-Ber-concentration-KL-perturbed-proof-3} into \eqref{eqn-lem-Ber-concentration-KL-perturbed-proof-0} yields that with probability at least $1-\delta$,
\[
\left| \KLdiv{\bar{x}_k}{ p } - \KLdiv{q}{ p } \right|
\le 
\frac{\log(4/\delta)}{k} + \left( \sqrt{\frac{2q(1-q)\log(2/\delta)}{k}} + \frac{2\log(2/\delta)}{3k} \right) \cdot \frac{|p-q|}{m(p,q)}.
\]
We finish the proof by invoking Pinsker's inequality $|p-q| \le \sqrt{\KLdiv{q}{p}/2}$ (e.g., Theorem 4.19 in \cite{BLM13}).
\end{proof}

\begin{lemma}[KL-based perturbed two-sample Bernoulli concentration bound]\label{lem-Ber-concentration-KL-perturbed-double}
Fix $q,p\in(0,1)$. Let $\{x_i\}_{i=1}^k$ be i.i.d.~samples of $\Bernoulli(q)$, and $\{y_i\}_{i=1}^n$ i.i.d.~samples of $\Bernoulli(p)$. Suppose $\{x_i\}_{i=1}^k$ and $\{y_i\}_{i=1}^n$ are independent. Define $\bar{x}_k = \frac{1}{k}\sum_{i=1}^k x_i$ and $\bar{y}_n = \frac{1}{n} \sum_{i=1}^n y_i$. Fix $\delta\in(0,1)$. There exists a universal constant $c>0$ such that for all $k\in\ZZ_+$ and $n\ge c\log(4/\delta)\cdot p^{-1}(1-p)^{-1}$, the following inequality holds with probability $1-\delta$:
\begin{multline*}
\left| \KLdiv{\bar{x}_k}{\bar{y}_n} - \KLdiv{q }{ p} \right|
\lesssim 
\frac{\sqrt{q(1-q)\KLdiv{q}{p}}}{m(p,q)} \cdot \sqrt{\frac{\log(4/\delta)}{k}}
+
\sqrt{\frac{\KLdiv{q}{p}}{p(1-p)}} \cdot \sqrt{\frac{\log(4/\delta)}{n}} \\[4pt]
+
\left(1 + \frac{\sqrt{\KLdiv{q}{p}}}{m(p,q)} \right) \left[ \frac{\log(4/\delta)}{k} + \frac{\log(4/\delta)}{n} \right]
+
\sqrt{\frac{q(1-q)}{p(1-p)}}\frac{\log(4/\delta)}{kn},
\end{multline*}
where $m(p,q) = \min \left\{ p(1-p), q(1-q) \right\}$, and $\lesssim$ hides a universal constant.
\end{lemma}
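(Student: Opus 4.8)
The plan is to reduce the two-sided perturbation (both arguments of the divergence are replaced by empirical means) to a sum of one-sided perturbations, each of which is controlled by a lemma already in hand. The engine is the three-point identity \myCref{lem-KL-three-point-identity}, applied once to move the first argument from $\bar{x}_k$ to $q$ and once to move the second argument from $\bar{y}_n$ to $p$. Concretely, since $\{x_i\}$ and $\{y_i\}$ are independent, I would condition on $\bar{y}_n$, apply the identity with $(a,A,b)=(\bar{x}_k,q,\bar{y}_n)$ and then the analogous second-argument identity (obtained by the same Bregman computation), to obtain the exact decomposition
\begin{align*}
\KLdiv{\bar{x}_k}{\bar{y}_n} - \KLdiv{q}{p}
&= \underbrace{\KLdiv{\bar{x}_k}{q}}_{\mathrm{(i)}} + \underbrace{\KLdiv{p}{\bar{y}_n}}_{\mathrm{(ii)}}
+ \underbrace{(\bar{x}_k - q)\Big( \log\tfrac{q}{1-q} - \log\tfrac{\bar{y}_n}{1-\bar{y}_n} \Big)}_{\mathrm{(iii)}} \\
&\qquad + \underbrace{(q - p)\Big( \log\tfrac{p}{1-p} - \log\tfrac{\bar{y}_n}{1-\bar{y}_n} \Big)}_{\mathrm{(iv)}}.
\end{align*}
Terms (i) and (ii) are pure one-sample divergences: (i) is bounded by \myCref{lem-Ber-concentration-KL} by $\log(2/\delta')/k$, and (ii) by the reversed bound \myCref{lem-Ber-concentration-reverse-KL} by $O(\log(2/\delta')/n)$. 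Both feed the ``$1\cdot[\tfrac{\log}{k}+\tfrac{\log}{n}]$'' part of the target right-hand side, and the reversed bound is precisely why the sample-size threshold $n\gtrsim\log(4/\delta)\cdot p^{-1}(1-p)^{-1}$ is imposed.

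Next I would set up a single high-probability event $\mathcal{G}$ for the $y$-sample on which $|\bar{y}_n-p|\le p(1-p)/2$ (by Bernstein, using the same $n$-threshold). Since $z\mapsto z(1-z)$ is $1$-Lipschitz, on $\mathcal{G}$ we get $\bar{y}_n(1-\bar{y}_n)\ge p(1-p)/2$, and hence $m(\bar{y}_n,q)\ge \tfrac12 m(p,q)$ and $m(p,\bar{y}_n)\ge \tfrac12 p(1-p)$; these let me replace every $\bar{y}_n$-dependent denominator by $p,q$ quantities. For the two linear terms I would use \myCref{lem-local-Lipschitz-log-x/(1-x)} to bound the log-odds differences by $|q-\bar{y}_n|/m(q,\bar{y}_n)$ and $|p-\bar{y}_n|/m(p,\bar{y}_n)$, Pinsker's inequality to bound $|q-p|\le\sqrt{\KLdiv{q}{p}/2}$, and Bernstein for $|\bar{x}_k-q|$ and $|p-\bar{y}_n|$. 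Splitting $\log\tfrac{q}{1-q}-\log\tfrac{\bar{y}_n}{1-\bar{y}_n}$ in (iii) as $(\log\tfrac{q}{1-q}-\log\tfrac{p}{1-p})+(\log\tfrac{p}{1-p}-\log\tfrac{\bar{y}_n}{1-\bar{y}_n})$ is the key move: the first half pairs with $|\bar{x}_k-q|$ to produce exactly the first named term $\tfrac{\sqrt{q(1-q)\KLdiv{q}{p}}}{m(p,q)}\sqrt{\tfrac{\log}{k}}$ plus a $\tfrac{\sqrt{\KLdiv{q}{p}}}{m(p,q)}\tfrac{\log}{k}$ contribution, while (iv) analogously produces the second named term $\sqrt{\tfrac{\KLdiv{q}{p}}{p(1-p)}}\sqrt{\tfrac{\log}{n}}$.

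The main obstacle will be the second half of (iii), namely the product $(\bar{x}_k-q)\big(\log\tfrac{p}{1-p}-\log\tfrac{\bar{y}_n}{1-\bar{y}_n}\big)$ of two \emph{independent} fluctuations, together with the assorted mixed $\tfrac{1}{\sqrt{kn}}$ cross terms. A crude product of the two Bernstein bounds yields a term of order $\sqrt{\tfrac{q(1-q)}{p(1-p)}}\,\tfrac{\log}{\sqrt{kn}}$, which is \emph{not} the listed $\sqrt{\tfrac{q(1-q)}{p(1-p)}}\,\tfrac{\log}{kn}$ and must be absorbed. I expect to handle this by a case split on the variance ratio $R^2=\tfrac{q(1-q)}{p(1-p)}$: when $R\lesssim 1$, AM-GM ($\tfrac{1}{\sqrt{kn}}\le\tfrac12(\tfrac1k+\tfrac1n)$) folds it into the $\tfrac{\log}{k}+\tfrac{\log}{n}$ term; when $R\gg 1$, the threshold $n\gtrsim\log/(p(1-p))$ forces $R\lesssim\sqrt{n/\log}$ and simultaneously $p$ near the boundary forces $\KLdiv{q}{p}\gtrsim 1$, so the genuinely divergence-weighted first term dominates. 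The remaining purely second-order product of the two $O(1/k),O(1/n)$ Bernstein remainders yields the $\tfrac{\log}{kn}$ term. Finally I would take a union bound over the $O(1)$ events (setting each failure probability to a constant fraction of $\delta$ so the total is $\delta$, which converts $\log(2/\delta')$ into $\log(4/\delta)$) and collect all surviving pieces into the four named terms. The bookkeeping of this absorption, rather than any single estimate, is the delicate part.
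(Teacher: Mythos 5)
Your proposal follows essentially the same route as the paper's proof: the identical four-term decomposition obtained from two applications of the three-point identity (\myCref{lem-KL-three-point-identity}), the same split of the log-odds factor in term (iii) through the intermediate point $\log\frac{p}{1-p}$, and the same ingredients (\myCref{lem-Ber-concentration-KL}, \myCref{lem-Ber-concentration-reverse-KL}, \myCref{lem-local-Lipschitz-log-x/(1-x)}, Bernstein, and Pinsker), with the $n$-threshold entering exactly where you say it does. The $1/\sqrt{kn}$ cross term you flag is a genuine subtlety that the paper's final display absorbs silently, but it folds into the $\big(1+\sqrt{\KLdiv{q}{p}}/m(p,q)\big)\big[\frac{\log(4/\delta)}{k}+\frac{\log(4/\delta)}{n}\big]$ term by AM--GM combined with the $1$-Lipschitzness of $z\mapsto z(1-z)$ (which gives $q(1-q)\le p(1-p)+|q-p|$) and Pinsker, yielding $\sqrt{q(1-q)/p(1-p)}\lesssim 1+\sqrt{\KLdiv{q}{p}}/m(p,q)$ directly, so your case split on the variance ratio (whose large-ratio branch relies on the dubious claim that $\KLdiv{q}{p}\gtrsim 1$) is unnecessary.
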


\begin{proof}[Proof of \myCref{lem-Ber-concentration-KL-perturbed-double}]
By \myCref{lem-KL-three-point-identity},
\begin{align*}
& \KLdiv{\bar{x}_k}{ \bar{y}_n} 
= 
\KLdiv{q }{ \bar{y}_n} + \KLdiv{\bar{x}_k}{ q} + \left(\bar{x}_k - q\right) \cdot \left( \log\frac{q}{1-q} - \log\frac{\bar{y}_n}{1-\bar{y}_n}  \right), \\[4pt]
& \KLdiv{q}{ \bar{y}_n} = 
\KLdiv{q }{ p} + \KLdiv{p}{\bar{y}_n} + \left(q - p\right) \cdot \left( \log\frac{p}{1-p} - \log\frac{\bar{y}_n}{1-\bar{y}_n}  \right),
\end{align*}
which implies
\begin{align}
&\KLdiv{\bar{x}_k}{ \bar{y}_n} - \KLdiv{q }{ p} \notag \\[4pt]
&=
\KLdiv{\bar{x}_k}{q} + \KLdiv{p}{\bar{y}_n} \notag \\[4pt]
&\qquad + \left(\bar{x}_k - q\right) \cdot \left( \log\frac{q}{1-q} - \log\frac{\bar{y}_n}{1-\bar{y}_n}  \right)
+ \left(q - p\right) \cdot \left( \log\frac{p}{1-p} - \log\frac{\bar{y}_n}{1-\bar{y}_n}  \right) \notag \\[4pt]
&=
\KLdiv{\bar{x}_k}{q} + \KLdiv{p}{\bar{y}_n} \notag \\
&\qquad+
(\bar{x}_k-q)\cdot \left(\log\frac{q}{1-q} - \log\frac{p}{1-p} \right)
+
\left[ (\bar{x}_k-q) + (q-p) \right] \cdot \left( \log\frac{p}{1-p} - \log\frac{\bar{y}_n}{1-\bar{y}_n}  \right). \label{eqn-lem-Ber-concentration-KL-perturbed-double-proof-0}
\end{align}

By \myCref{lem-Ber-concentration-KL}, with probability at least $1-\delta/2$,
\begin{align*}
& \KLdiv{\bar{x}_k}{q} \le \frac{\log(4/\delta)}{k}, \\[4pt]
& \left|\bar{x}_k - q\right| \le \sqrt{\frac{2q(1-q)\log(4/\delta)}{k}} + \frac{2\log(4/\delta)}{3k}.
\end{align*}
By \myCref{lem-Ber-concentration-reverse-KL}, there exists a universal constant $c_1>0$ such that when $n\ge c_1\log(4/\delta)\cdot p^{-1}(1-p)^{-1}$, with probability at least $1-\delta/2$,
\begin{align*}
& \KLdiv{ p}{\bar{y}_n} 
\lesssim
\frac{\log(4/\delta)}{n}, \\[4pt]
& \left|\bar{y}_n - p\right| \le \sqrt{\frac{2p(1-p)\log(4/\delta)}{n}} + \frac{2\log(4/\delta)}{3n} \le \frac{p(1-p)}{2}.
\end{align*}
By \myCref{lem-local-Lipschitz-log-x/(1-x)},
\begin{align*}
& \left| \log\frac{p}{1-p} - \log\frac{\bar{y}_n}{1-\bar{y}_n}  \right|
\le 
\frac{2}{p(1-p)} \cdot \left[ \sqrt{\frac{2q(1-p)\log(4/\delta)}{n}} + \frac{2\log(4/\delta)}{3n} \right], \\[4pt]
& \left| \log\frac{q}{1-q} - \log\frac{p}{1-p} \right| \le \frac{|p-q|}{m(p,q)}.
\end{align*}

Substituting the inequalities above into \eqref{eqn-lem-Ber-concentration-KL-perturbed-double-proof-0} and applying a union bound, we obtain that with probability at least $1-\delta$,
\begin{align*}
& \left| \KLdiv{\bar{x}_k}{ \bar{y}_n} - \KLdiv{q }{ p} \right| \\[6pt]
&\lesssim
\frac{\log(4/\delta)}{k} + \frac{\log(4/\delta)}{n}
+
\left[ \sqrt{\frac{q(1-q)\log(4/\delta)}{k}} + \frac{\log(4/\delta)}{k} \right] \cdot \frac{|p-q|}{m(p,q)} \\[4pt]
&\qquad +
\left[ \sqrt{\frac{q(1-q)\log(4/\delta)}{k}} + \frac{\log(4/\delta)}{k} \right]\cdot \frac{1}{p(1-p)} \left[ \sqrt{\frac{p(1-p)\log(4/\delta)}{n}} + \frac{\log(4/\delta)}{n}  \right] \\[4pt]
&\qquad +
|q-p|\cdot \frac{1}{p(1-p)}\left[ \sqrt{\frac{p(1-p)\log(4/\delta)}{n}} + \frac{\log(4/\delta)}{n}  \right] \\[6pt]
&\lesssim
\frac{\sqrt{q(1-q)\KLdiv{q}{p}}}{m(p,q)} \cdot \sqrt{\frac{\log(4/\delta)}{k}}
+
\sqrt{\frac{\KLdiv{q}{p}}{p(1-p)}} \cdot \sqrt{\frac{\log(4/\delta)}{n}} \\[4pt]
&\qquad +
\left(1 + \frac{\sqrt{\KLdiv{q}{p}}}{m(p,q)} \right) \left[ \frac{\log(4/\delta)}{k} + \frac{\log(4/\delta)}{n} \right]
+
\sqrt{1 + \frac{q(1-q)}{p(1-p)}}\frac{\log(4/\delta)}{kn}.
\end{align*}
Here we have used Pinsker's inequality $|p-q| \le \sqrt{\KLdiv{q}{p}/2}$ (e.g., Theorem 4.19 in \cite{BLM13}).
\end{proof}

\begin{lemma}[Three-point identity]\label{lem-KL-three-point-identity}
For all $a,A,b\in(0,1)$,
\[
\KLdiv{a}{b} = \KLdiv{a}{A} + \KLdiv{A}{b} + (a-A) \left( \log\frac{A}{1-A} - \log\frac{b}{1-b} \right).
\]
\end{lemma}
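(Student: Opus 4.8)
The plan is to prove this purely by algebra, expanding both sides with the definition of the Bernoulli KL divergence and checking that every term involving the intermediate point $A$ cancels. Writing out $\KLdiv{a}{b} = a\log\frac{a}{b} + (1-a)\log\frac{1-a}{1-b}$, I would expand the two KL terms $\KLdiv{a}{A}$ and $\KLdiv{A}{b}$ on the right-hand side, then collect the coefficients of each of the six atomic logarithms $\log a,\ \log(1-a),\ \log A,\ \log(1-A),\ \log b,\ \log(1-b)$ separately, treating the correction term $(a-A)(\log\frac{A}{1-A} - \log\frac{b}{1-b})$ in the same bookkeeping.

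The key observation is that the linear correction term is engineered precisely to cancel the $A$-dependent logarithms produced by $\KLdiv{a}{A} + \KLdiv{A}{b}$. Concretely, the combined coefficient of $\log A$ coming from $\KLdiv{a}{A}+\KLdiv{A}{b}$ is $(A-a)$, while the correction term contributes $(a-A)\log A$, so the two cancel; an identical cancellation occurs for $\log(1-A)$. After absorbing the correction into the $\log b$ and $\log(1-b)$ coefficients, what survives is exactly $a\log a + (1-a)\log(1-a) - a\log b - (1-a)\log(1-b)$, which is $\KLdiv{a}{b}$. A cleaner coordinate-free way to see the same thing is to recognize $\KLdiv{\cdot}{\cdot}$ as the Bregman divergence generated by the negative-entropy potential $\phi(p)=p\log p+(1-p)\log(1-p)$, with $\phi'(p)=\log\frac{p}{1-p}$; the claimed formula is then the standard Bregman three-point identity $D_\phi(a,b)-D_\phi(a,A)-D_\phi(A,b)=(\phi'(A)-\phi'(b))(a-A)$, whose proof amounts to the telescoping of all $\phi(A)$ and $\phi'(A)$ contributions. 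I would present the direct expansion for self-containedness, since the statement is only for scalars in $(0,1)$.

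There is no genuine obstacle here: the statement is an exact algebraic identity valid for all $a,A,b\in(0,1)$, and the only thing to monitor is the bookkeeping of signs when collecting the six logarithmic coefficients. The sole place requiring a moment of care is verifying that the $\log b$ and $\log(1-b)$ coefficients recombine into $-a\log b-(1-a)\log(1-b)$ once the correction term is folded in, since those are precisely the contributions that do \emph{not} vanish.
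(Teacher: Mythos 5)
Your proposal is correct and matches the paper's proof, which simply verifies the identity by direct calculation; your coefficient bookkeeping for the six logarithms is exactly that computation, and the cancellations you describe all check out. The Bregman-divergence remark is a nice conceptual bonus but does not change the argument.
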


\begin{proof}[Proof of \myCref{lem-KL-three-point-identity}]
This is verified by direct calculation.
\end{proof}

\begin{lemma}\label{lem-local-Lipschitz-log-x/(1-x)}
For all $p,q\in(0,1)$,
\[
\left| \log\frac{p}{1-p} - \log\frac{q}{1-q} \right| \le \frac{|p-q|}{m(p,q)},
\]
where $m(p,q) = \min \left\{ p(1-p), q(1-q) \right\}$.
\end{lemma}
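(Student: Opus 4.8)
The plan is to recognize the quantity $\log\frac{x}{1-x}$ as the logit function and apply the mean value theorem. First I would set $f(x) = \log\frac{x}{1-x} = \log x - \log(1-x)$ for $x \in (0,1)$, whose derivative is
\[
f'(x) = \frac{1}{x} + \frac{1}{1-x} = \frac{1}{x(1-x)}.
\]
By the mean value theorem, there exists a point $\xi$ strictly between $p$ and $q$ such that
\[
f(p) - f(q) = f'(\xi)\,(p-q) = \frac{p-q}{\xi(1-\xi)},
\]
and hence $\left| f(p) - f(q) \right| = |p-q| / \big(\xi(1-\xi)\big)$. So it remains only to produce a lower bound on the denominator $\xi(1-\xi)$ in terms of $p$ and $q$.

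The key step is to observe that the map $g(x) = x(1-x)$ is concave on $[0,1]$. Since $\xi$ lies in the closed interval with endpoints $p$ and $q$, I can write $\xi = \lambda p + (1-\lambda) q$ for some $\lambda \in [0,1]$, and concavity gives
\[
g(\xi) \ge \lambda\, g(p) + (1-\lambda)\, g(q) \ge \min\{ g(p), g(q) \} = m(p,q).
\]
Substituting $\xi(1-\xi) = g(\xi) \ge m(p,q)$ into the mean value expression yields
\[
\left| \log\frac{p}{1-p} - \log\frac{q}{1-q} \right| = \frac{|p-q|}{\xi(1-\xi)} \le \frac{|p-q|}{m(p,q)},
\]
which is exactly the claim.

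I do not expect any serious obstacle here: the only point requiring an argument beyond a routine derivative computation is the bound $\xi(1-\xi) \ge m(p,q)$, and this is immediate from the concavity of $x \mapsto x(1-x)$ (a concave function restricted to a closed interval is bounded below by the smaller of its two endpoint values). One minor bookkeeping issue is the degenerate case $p=q$, where the inequality holds trivially as $0 \le 0$ and no appeal to the mean value theorem is needed; I would dispatch it in a single sentence before invoking the mean value theorem for $p \neq q$.
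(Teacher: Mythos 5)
Your proposal is correct and follows essentially the same route as the paper: apply the mean value theorem to the logit function to get $|f(p)-f(q)| = |p-q|/(\xi(1-\xi))$, then lower-bound $\xi(1-\xi)$ by $m(p,q)$ using the fact that $x\mapsto x(1-x)$ attains its minimum over the interval at an endpoint (the paper states this as $\max_{r\in[p,q]} 1/(r(1-r)) = 1/m(p,q)$, which you justify slightly more explicitly via concavity). No substantive difference.
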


\begin{proof}[Proof of \myCref{lem-local-Lipschitz-log-x/(1-x)}]
Without loss of generality assume $p<q$. Let $f(r) = \log\frac{r}{1-r} - \log\frac{p}{1-p}$, then $f'(r) = \frac{1}{r(1-r)}$, so by the mean value theorem, there exists $\xi\in(p,q)$ such that
\[
\left| \log\frac{q}{1-q} - \log\frac{p}{1-p}  \right|
=
\frac{|p-q|}{\xi(1-\xi)}
\le 
\max_{r\in [p,q]}\frac{|p-q|}{r(1-r)}
=
\frac{|p-q|}{m(p,q)}.
\]
This finishes the proof.
\end{proof}

\begin{lemma}\label{lem-liminf-to-lim}
If $f:\RR\to\RR$ satisfies $\liminf\limits_{x\to\infty} f(x) / x\ge L$, then there exists $f_-:\RR\to\RR$ such that $f_-(x)\le f(x)$ for all $x\in\RR$ and $\lim\limits_{x\to\infty} f_-(x) / x = L$.

Similarly, if $f:\RR\to\RR$ satisfies $\limsup\limits_{x\to\infty} f(x) / x\le U$, then there exists $f_+:\RR\to\RR$ such that $f(x)\le f_+(x)$ for all $x\in\RR$ and $\lim\limits_{x\to\infty} f_+(x) / x = U$.
\end{lemma}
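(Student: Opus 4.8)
The plan is to build $f_-$ directly from the $\varepsilon$–characterization of the liminf, pulling its slope down to exactly $L$ along a sequence of thresholds tending to infinity, while leaving the (irrelevant) behavior on bounded and negative arguments free. First I would unpack the hypothesis $\liminf_{x\to\infty} f(x)/x \ge L$: for every $n \in \ZZ_+$ there is a threshold $X_n$ with $f(x)/x \ge L - 1/n$ for all $x \ge X_n$. After replacing $X_n$ by $\max\{X_1,\dots,X_n,n\}$, I may assume the $X_n$ are nondecreasing with $X_n \to \infty$ and $X_1 > 0$.

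Next I would introduce the staircase index $n(x) = \max\{ n \in \ZZ_+ : X_n \le x\}$ for $x \ge X_1$, which is well defined (the set is nonempty and finite), nondecreasing, and satisfies $n(x) \to \infty$ as $x \to \infty$. With it I set
\[
f_-(x) = \begin{cases} \big(L - 1/n(x)\big)\, x, & x \ge X_1, \\ f(x), & x < X_1, \end{cases}
\]
and then verify the two required properties. For the pointwise bound, on $x \ge X_1$ I have $x \ge X_{n(x)}$, hence $f(x)/x \ge L - 1/n(x)$; multiplying by $x > 0$ gives $f(x) \ge \big(L - 1/n(x)\big)x = f_-(x)$, while for $x < X_1$ equality holds by construction. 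For the limit, on $x \ge X_1$ we have $f_-(x)/x = L - 1/n(x) \to L$ since $n(x) \to \infty$.

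The symmetric statement follows by the mirror construction: from $\limsup_{x\to\infty} f(x)/x \le U$ I pick nondecreasing $X_n \to \infty$ with $f(x)/x \le U + 1/n$ for $x \ge X_n$, and set $f_+(x) = \big(U + 1/n(x)\big)x$ for $x \ge X_1$ and $f_+(x) = f(x)$ otherwise; the same computation yields $f(x) \le f_+(x)$ and $f_+(x)/x \to U$. The one point requiring care—and the main obstacle to the naive attempt $f_-(x) = x\inf_{y\ge x} f(y)/y$—is that such an infimum may equal $-\infty$ for small $x$ and, more seriously, would only force $\lim f_-(x)/x = \liminf_{x\to\infty} f(x)/x$, a quantity that can strictly exceed $L$. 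The staircase construction sidesteps both issues at once: it commits to the exact target slope $L$ in the limit, it stays real-valued everywhere, and it exploits the fact that only the $x \to \infty$ tail matters, so $f_-$ can be left equal to $f$ on the bounded region $x < X_1$ without affecting the conclusion.
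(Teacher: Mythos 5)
Your proposal is correct and takes essentially the same route as the paper's proof: both build a staircase function equal to $(L-1/n)x$ on successive threshold intervals $[X_n,X_{n+1})$ and equal to $f$ below the first threshold, then read off the limit from $n(x)\to\infty$. Your version is if anything slightly more careful (explicitly forcing $X_1>0$ before multiplying the inequality $f(x)/x\ge L-1/n(x)$ by $x$), and handles the limsup case by a mirror construction where the paper invokes the substitution $f\mapsto -f$; these are cosmetic differences only.
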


\begin{proof}[Proof of \myCref{lem-liminf-to-lim}]
It suffices to prove the first part; the second part can be proved by replacing $f$ with $-f$.

Since $\liminf_{x\to\infty} f(x) / x\ge L$, then there exists a strictly increasing sequence $\{x_k\}_{k=1}^{\infty}$ such that for each $k\in\ZZ_+$,
\[
\frac{f(x)}{x} \ge L - \frac{1}{k}, \qquad \forall x \ge x_k.
\]
Define $f_-:\RR\to\RR$ by
\[
f_-(x) =
\begin{cases}
f(x), &\quad\text{if } x < x_1 \\
(L-1/k)x, &\quad\text{if } x \in [x_k,x_{k+1})
\end{cases}.
\]
Clearly $f(x) \ge f_-(x)$ for all $x\in\RR$. To see that $\lim_{x\to\infty} f_-(x)/x=L$, take an arbitrary $k\in\ZZ_+$. Then for all $x\ge x_k$, there exists $k'\ge k$ such that $x\in[x_{k'},x_{k'+1}]$, which implies
\[
\left| \frac{f_-(x)}{x} - L \right| = \frac{1}{k'} \le \frac{1}{k},
\]
so
\[
\limsup_{x\to\infty} \left| \frac{f_-(x)}{x} - L \right| \le \frac{1}{k}.
\]
Since this holds for all $k\in\ZZ_+$, then we conclude $f_-(x)/x\to L$ as $x\to\infty$.
\end{proof}

\begin{lemma}[Empirical Bernstein bound]\label{lem-emp-Bernstein}
Suppose $\{y_i\}_{i=1}^k$ are i.i.d.~random variables taking values in $[a,b]$, and $k\ge 2$. Define the range $M=b-a$, mean $\mu=\EE[y_1]$, sample mean $\bar{y}=\frac{1}{k}\sum_{i=1}^k y_i$, and sample variance $\samplesd_k^2 = \frac{1}{k-1}\sum_{i=1}^n(y_i-\bar{y})^2$. For every $\delta\in(0,1)$, with probability at least $1-\delta$,
\[
\left| \bar{y} - \mu \right| \le \samplesd_k \sqrt{\frac{2\log(4/\delta)}{k}} + \frac{7M\log(4/\delta)}{3(k-1)}.
\]
\end{lemma}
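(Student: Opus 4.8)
The plan is to recognize this as the two-sided form of the empirical Bernstein inequality of \cite{MPo09}, and to obtain it by invoking the standard one-sided bound twice (for the upper and lower tails) and combining via a union bound. Since the one-sided inequality is a cited result, essentially no new probabilistic work is required; the only steps are a rescaling from $[0,1]$ to $[a,b]$ and bookkeeping of the confidence level.

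First I would recall the one-sided empirical Bernstein inequality for random variables supported on $[0,1]$: if $\{z_i\}_{i=1}^k$ are i.i.d.\ in $[0,1]$ with mean $\mu_z$, sample mean $\bar{z}$, and sample standard deviation $s_{z,k}$, then for every $\delta'\in(0,1)$, with probability at least $1-\delta'$,
\[
\mu_z - \bar{z} \le s_{z,k}\sqrt{\frac{2\log(2/\delta')}{k}} + \frac{7\log(2/\delta')}{3(k-1)}.
\]
To transfer this to the $[a,b]$ setting, I would apply the affine change of variables $z_i = (y_i - a)/M$, which lies in $[0,1]$. Under this rescaling the means satisfy $\mu_z = (\mu - a)/M$ and $\bar{z} = (\bar{y}-a)/M$, so $\mu_z - \bar{z} = (\mu - \bar{y})/M$, while the sample standard deviation scales as $s_{z,k} = \samplesd_k/M$. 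Multiplying the $[0,1]$ bound through by $M$ therefore yields the one-sided $[a,b]$ bound
\[
\mu - \bar{y} \le \samplesd_k\sqrt{\frac{2\log(2/\delta')}{k}} + \frac{7M\log(2/\delta')}{3(k-1)},
\]
where the factor $M$ attaches only to the lower-order term and cancels in the variance term since $M\cdot s_{z,k}=\samplesd_k$.

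Next I would apply this one-sided bound at confidence level $\delta' = \delta/2$ to control $\mu - \bar{y}$, and separately apply it to the variables $\{-y_i\}_{i=1}^k$ (which have the same range $M$ and the same sample standard deviation $\samplesd_k$) at level $\delta/2$ to control $\bar{y} - \mu$. Taking a union bound over these two events, each of probability at least $1-\delta/2$, gives a combined failure probability at most $\delta$; on the complementary event both inequalities hold simultaneously, yielding the two-sided statement $|\bar{y}-\mu| \le \samplesd_k\sqrt{2\log(2/\delta')/k} + 7M\log(2/\delta')/(3(k-1))$. Substituting $\delta' = \delta/2$ and using $\log(2/\delta') = \log(4/\delta)$ produces exactly the claimed bound.

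There is no genuine obstacle here: the entire content is supplied by the cited Maurer--Pontil inequality, and the remaining work is the routine affine rescaling and the halving of $\delta$ across the two tails. The only point requiring a moment of care is verifying that the range $M$ and sample standard deviation $\samplesd_k$ are invariant under negation of the data, so that the same constants govern both tails and the factor-of-two cost in $\delta$ is the sole price of passing from the one-sided to the two-sided guarantee.
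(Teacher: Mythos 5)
Your proposal is correct and matches the paper's approach: the paper's proof is a one-line citation of Theorem 11 in \cite{MPo09}, and your write-up simply makes explicit the routine affine rescaling to $[a,b]$ and the two-sided union bound with $\delta'=\delta/2$ that turn the cited one-sided $[0,1]$ inequality into the stated form.
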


\begin{proof}[Proof of \myCref{lem-emp-Bernstein}]
This is a direct consequence of Theorem 11 in \cite{MPo09}.
\end{proof}

\section{Details of Numerical Experiments}\label{sec-appendix-experiments}

\subsection{The OpinionQA Dataset}\label{sec-opinion}

\paragraph{Selection of survey questions.} The original dataset is categorized into topics such as health, crime/security, and political issues. Ideally, we would want to consider questions from the same category to ensure that they are similar enough. However, the category with the most questions has fewer than 200 questions. We thus consider pooling all questions. The dataset has 1,442 survey questions in total, which is too large for our computational resources. We selected a subset of questions as follows. First, while the number of choices ranges from 2 to 19, most questions have 5 choices. To give a fair comparison and for simplicity, we only consider questions with 5 choices. Second, not all questions have choices that can be clearly ordered in sentiments, such as the following one:
\begin{quote}
\textit{Who do you think has the most responsibility to reduce the amount of made-up news and information? 1. The government, 2. Technology companies, 3. The public, 4. The news media, 5. None of these, 6. Refused.}
\end{quote}
We asked GPT-4o to determine if a question's choices can be ordered in sentiments and we keep those that have GPT-4o's affirmative answer. This leaves us with 546 questions. To compensate for the loss of similarity by pooling questions across various topics and to further reduce our computational cost, we selected 400 questions that are ``most similar'' to each other by embedding the question statements using OpenAI's \texttt{text-embedding-3-small}, calculating the mean, and selecting the 400 questions with the smallest Euclidean distance to the mean. Out of these 400 questions, 15 questions have various issues with their choices by manual inspection, so we exclude them. This leaves us with 385 questions. All these questions happen to have at least 400 responses.

\paragraph{Example questions.} The questions in the OpinionQA dataset span a wide range of topics, including health, crime/security, and political issues. Some example questions are as follows:
\begin{itemize}
\item \textit{How much, if at all, do you think wages and incomes are contributing to your opinion about how the economy is doing? 
\begin{center}
1.~A great deal \quad 2.~A fair amount \quad 3.~Not too much \quad 4.~Not at all \quad 5.~Refused
\end{center}}
\item \textit{Regardless of whether you would want to move, how likely is it that you will move to a different community at some point in the future?
\begin{center}
1.~Very likely \quad 2.~Somewhat likely \quad 3.~Not too likely \quad 4.~Not at all likely \quad 5.~Refused
\end{center}}
\item \textit{How much, if anything, would you be willing to change about how you live and work to help reduce the effects of global climate change? Would you be willing to make:
\begin{center}
1.~A lot of changes \,  2.~Some changes \,  3.~Only a few changes \, 4.~No changes at all \,  5.~Refused
\end{center}}
\end{itemize}

\paragraph{Profiles.} Excluding surveyees with missing information, each of the 385 questions we consider has at least 400 responses. Since there was no information on the surveyees' identification, by dropping repeated profiles we can only say that there are at least 32,864 surveyees. Each surveyee is described by 12 features. Their corresponding categories are listed in \myCref{tab:opinionprofilefeatures}.

\begin{table}[h]
\TABLE
{Categories of Surveyees' Features in the OpinionQA Dataset. \label{tab:opinionprofilefeatures}}
{\begin{tabular}{|c|c|}
\hline
Feature & Options \\
\hline
US Citizenship &  `Yes', `No' \\
\hline
Region & `Northeast', `Midwest', `South', `West' \\
\hline
Sex & `Male', `Female' \\
\hline
Age & `18-29', `30-49', `50-64', `65+' \\
\hline
Marital Status & `Married', `Divorced', `Separated', `Widowed', `Never been married' \\
\hline
Race & `White', `Black', `Asian', `Hispanic', `Other' \\
\hline
Educational Background & \makecell{`Less than high school', `High school graduate', \\ `Some college, no degree', `Associate's degree', \\ `College graduate/some postgrad', `Postgraduate'}\\
\hline
Income & \makecell{`Less than \$30,000', `\$30,000-\$50,000', `\$50,000-\$75,000', \\ `\$75,000-\$100,000', `\$100,000 or more'} \\
\hline
Religious Affiliation & \makecell{`Protestant', `Roman Catholic', `Mormon', `Orthodox', \\ `Jewish', `Muslim', `Buddhist', `Hindu', \\ `Atheist', `Agnostic', `Other', `Nothing in particular'} \\
\hline
Religious Attendance & \makecell{`More than once a week', `Once a week', `Once or twice a month', \\ `A few times a year', `Seldom', `Never'} \\
\hline
Political Party & `Republican', `Democrat', `Independent', `Other' \\
\hline
Political ideology & \makecell{`Very conservative', `Conservative',\\ `Moderate', `Liberal', `Very liberal'} \\
\hline
\end{tabular}}
{}
\end{table}

\paragraph{Synthetic response generation.} We generate synthetic profiles by bootstrapping the 32,864 unique real profiles. We then generate synthetic answers by prompting LLMs to pretend that they are a surveyee with the synthetic profile and answer the question. An example prompt is as follows:
\begin{quote}
\textit{Pretend that you reside in the US and you are a US citizen from the West region of the country. You are female, your age is between 18 and 29, and you are single. In terms of race, you are white. In terms of education, you attended college but did not graduate. Your annual income is less than \$30,000. Religion-wise, you do not belong to any particular religion, and you never attend religious services. Politically, you are affiliated with a political party that is not Democratic or Republican, and you consider your political ideology to be liberal. Please answer the following question}: 

\textit{How much, if at all, do you think what happens to black people in the country overall affects what happens in your own life? [`1. A lot', `2. Some', `3. Not much', `4. Not at all', `5. Refused'].}

\textit{Please provide your answer choice (a single number from 1 to 5) in double square brackets.}
\end{quote}
In our experiments, LLMs usually directly gave answers in the required format, e.g., `[[2]]'.

\subsection{The EEDI Dataset}\label{sec-eedi}

\paragraph{Example questions.} Some example questions from the EEDI dataset are as follows:
\begin{itemize}
\item \textit{What number belongs in the box? $\square + 7 = 2$
\[
\text{A) 9} \quad \text{B) -5} \quad \text{C) -6} \quad \text{D) 5}
\]
}
\item \textit{If you multiply a square number by $9$, you get a square number. Is this statement:
\[
\text{A) always true} \quad \text{B) sometimes true} \quad \text{C) never true} \quad \text{D) impossible to say}
\]}
\item \textit{Which calculation is equal to $-20$?
\[
\text{A) } 2 \times (-2) - (-4) \times 4 \quad \text{B) } -28 - (-4) \times 2 \quad \text{C) } (-5)\top 2 + 5 \quad \text{D) } (-42) \div (-2) + 1
\]}
\end{itemize}

\paragraph{Profile distribution.} Excluding students with missing information which take up less than 10\% of the total population, there are 2,111 students who answered at least one of the 412 questions. Each student is described by three features: gender, age, and whether or not they are eligible for free school meals or premium pupil. Gender is represented by 1 or 2, where 1 corresponds to female and 2 corresponds to male. The students' ages are rounded to integers from 11 and 14. Whether or not a student is eligible for free school meals is represented by 0 or 1, where 0 corresponds to not eligible and 1 corresponds to eligible. The distribution of these students' features is presented in \myCref{tab:eediprofiledist}.

\begin{table}[h]
\TABLE
{Summary Statistics of Student Features in the EEDI Dataset.\label{tab:eediprofiledist}}
{\begin{tabular}{|c|c|c|c|c|c|}
\hline
 & Min & Max & Mean & Median & Standard Deviation \\
\hline
Gender & $1$ & $2$ & $1.4988$ & $1$ & $0.5001$ \\
\hline
Age & $11$ & $14$ & $11.2776$ & $11$ & $0.4696$ \\
\hline
Premium Pupil & $0$ & $1$ & $0.2842$ & $0$ & $0.4512$ \\
\hline
\end{tabular}}
{}
\end{table}

\paragraph{Synthetic response generation.} For each question, we generate synthetic profiles by sampling with replacement from the real profiles. We then generate synthetic answers by prompting LLMs to pretend that they are a student with the synthetic profile and answer the question. We adapted the prompt from \cite{HMG24} with slight modifications to reduce computational cost. An example prompt featuring an 11-year-old boy who is not eligible for free school meals is as follows:
\begin{quote}
\textit{Pretend that you are an 11-year-old student. Your gender is male. You are not eligible for free school meals or pupil premium due to being relatively financially advantaged. Given your characteristics, is it likely that you would be able to solve the following problem?}

\textit{Problem: [Insert question here]}

\textit{If yes, put the final answer choice (a single letter) in double square brackets. If you are likely to struggle with this problem, put a plausible incorrect answer choice (a single letter) in double square brackets.}
\end{quote}
An example answer from GPT-4o when given the second example question above is as follows:
\begin{quote}
\textit{As an 11-year-old student, I might have learned about square numbers and multiplication in school. However, the problem may be a bit tricky if I haven't thought about how multiplying square numbers by other numbers can also result in square numbers. I might not immediately realize that 9 is actually a square number itself (3 squared), which makes this property more evident.}

\textit{Considering this, I could find the reasoning challenging and decide based on a misconception. I might go with a plausible incorrect answer choice like [[B]] because I might think that it's only sometimes possible without realizing the full mathematical principle involved.}
\end{quote}

\section{Additional Experiment Results}\label{sec-appendix-experiments-results}

In this section, we provide additional experiment results for the simple method and the general method.

\subsection{Sharpness of the Simple Method}\label{sec-appendix-sharpness-simple}

In \myCref{tab-sharpness-simple-OpinionQA} and \myCref{tab-sharpness-simple-EEDI}, we report the $95$th percentile of the relative error $\big| \widehat{k} - k^*_{\test} \big| / k^*_{\test}$ over the $100$ train-test splits for the simple method on the OpinionQA and EEDI datasets, respectively. We observe that the selected sample size $\widehat{k}$ is consistently close to $k^*_{\test}$, verifying the sharpness of $\widehat{k}$.

\begin{table}[h]
\TABLE
{95th Percentile of $\big| \widehat{k} - k^*_{\test} \big| / k^*_{\test}$ for the Simple Method on the OpinionQA Dataset.\label{tab-sharpness-simple-OpinionQA}}
{\begin{tabular}{lcccc}
		\hline
$\alpha$ & $0.05$ & $0.1$ & $0.15$ & $0.2$ \\
        \hline
Claude 3.5 Haiku & $0.26$ & $0.23$ & $0.27$ & $0.25$ \\
DeepSeek-V3 & $0.45$ & $0.38$ & $0.26$ & $0.26$ \\
GPT-3.5 Turbo & $0.51$ & $0.23$ & $0.14$ & $0.13$ \\
GPT-4o mini & $0.31$ & $0.30$ & $0.17$ & $0.20$ \\
GPT-4o & $0.25$ & $0.23$ & $0.20$ & $0.20$ \\
GPT-5 mini & $0.29$ & $0.27$ & $0.23$ & $0.31$ \\
Llama 3.3 70B & $0.44$ & $0.15$ & $0.23$ & $0.27$ \\
Mistral 7B & $0.24$ & $0.31$ & $0.38$ & $0.25$ \\
Random & $0.20$ & $0.15$ & $0.13$ & $0.13$ \\
    \hline
    \end{tabular}}
{The $95$th percentiles are taken over $100$ random train-test splits of the survey questions.}
\end{table}

\begin{table}[h]
\TABLE
{95th Percentile of $\big| \widehat{k} - k^*_{\test} \big| / k^*_{\test}$ for the Simple Method on the EEDI Dataset.\label{tab-sharpness-simple-EEDI}}
{\begin{tabular}{lcccc}
        \hline
$\alpha$ & $0.05$ & $0.10$ & $0.15$ & $0.2$ \\
        \hline
Claude 3.5 Haiku & $0.16$ & $0.13$ & $0.13$ & $0.16$ \\
DeepSeek-V3 & $0.35$ & $0.30$ & $0.30$ & $0.30$ \\
GPT-3.5 Turbo & $0.11$ & $0.08$ & $0.08$ & $0.09$ \\
GPT-4o mini & $0.18$ & $0.13$ & $0.12$ & $0.08$ \\
GPT-4o & $0.28$ & $0.53$ & $0.36$ & $0.33$ \\
GPT-5 mini & $0.10$ & $0.08$ & $0.09$ & $0.08$ \\
Llama 3.3 70B & $0.17$ & $0.11$ & $0.12$ & $0.12$ \\
Mistral 7B & $0.08$ & $0.13$ & $0.09$ & $0.09$ \\
Random & $0.11$ & $0.17$ & $0.19$ & $0.19$ \\
    \hline
    \end{tabular}}
{The $95$th percentiles are taken over $100$ random train-test splits of the survey questions.}
\end{table}

\subsection{Experiment Results for the General Method}

In this section, we provide experiment results for the general method. Similar to \myCref{sec-experiments-results}, we will evaluate the following metrics:
\begin{enumerate}
\item miscoverage probability proxy $\coveragealt_{\test}(\widehat{k})$,
\item sharpness of $\widehat{k}$, through comparison with an oracle sample size $k_{\test}^* = \max\{k : \coveragealt_{\test}(k) \le \alpha \}$,
\item width of the selected confidence interval $\simCIBern(\widehat{k})$,
\item estimated hidden population size $\widehat{\kappa} = \widehat{k}/C$.
\end{enumerate}

\paragraph{Coverage validity.} In \myCref{fig-miscoverage-general}, we plot the miscoverage probability proxy $\coveragealt_{\test}(\widehat{k})$ for different LLMs on the two datasets. Similar to the case of the simple method, $\coveragealt_{\test}(\widehat{k})$ is consistently close to or below the target miscoverage level $\alpha$, which shows that the selected confidence intervals $\simCIBern(\widehat{k})$ have approximately $(1-\alpha)$ coverage probabilities. This verifies the coverage guarantee of our method in \myCref{thm-coverage}.

\begin{figure}[h]
	\FIGURE{
    \subcaptionbox{OpinionQA Dataset}
    {\includegraphics[width=0.48\linewidth]{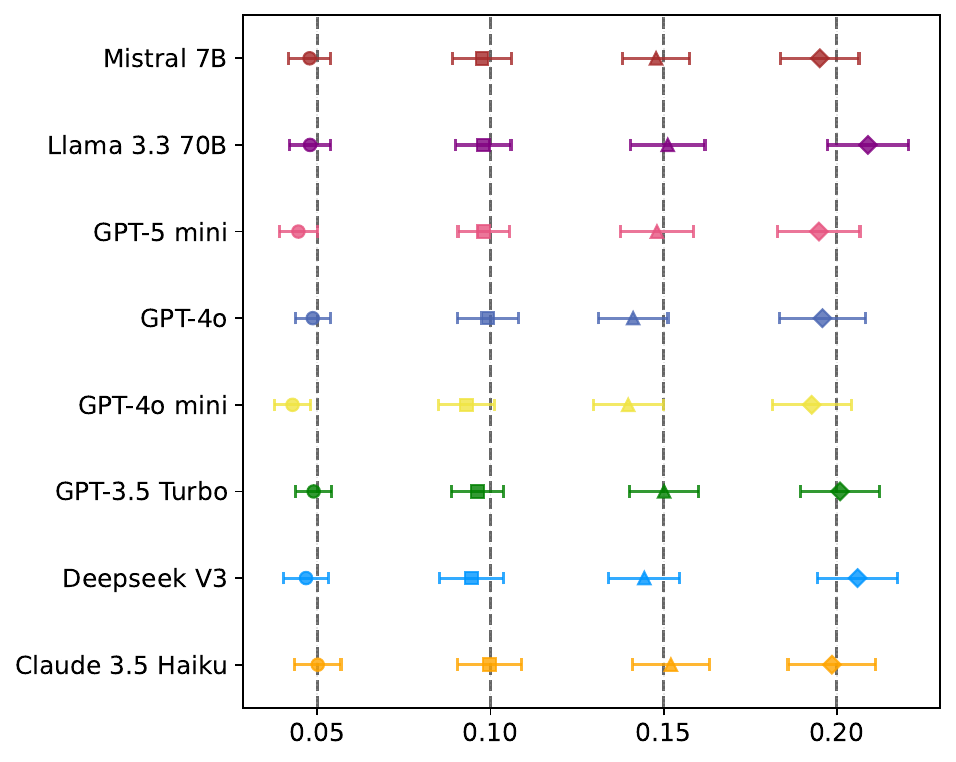}}
    \hfill\subcaptionbox{EEDI Dataset}
    {\includegraphics[width=0.48\linewidth]{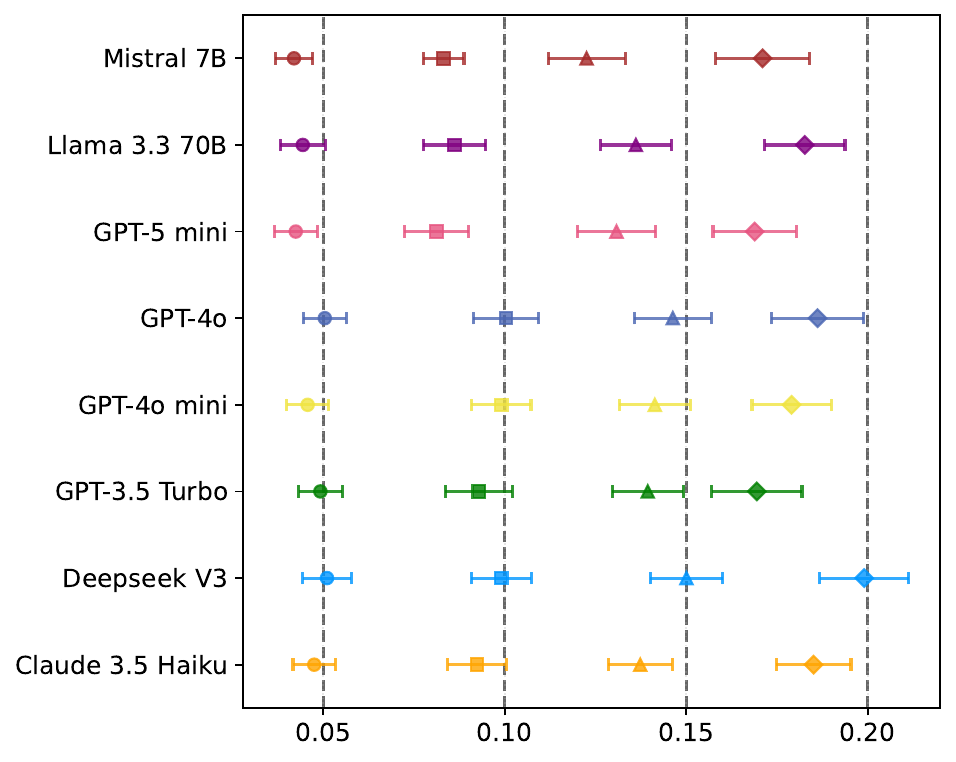}}
	}
	{Miscoverage Probability Proxy $\coveragealt_{\test}(\widehat{k})$ for the General Method. \label{fig-miscoverage-general}}
	{Horizontal axis: target miscoverage level $\alpha$. Vertical axis: LLM. Circles, squares, triangles and diamonds represent $\coveragealt_{\test}(\widehat{k})$ for $\alpha=0.05,0.1,0.15,0.2$, respectively. The results are averaged over $100$ random train-test splits of the questions. The half-width of each error bar is $1.96$ times the standard error.}
\end{figure}

\paragraph{Sharpness of selected sample size.} To verify the sharpness of the sample size $\widehat{k}$ selected by the general method, we will compare it with an oracle sample size $k_{\test}^* = \max\{k \in \ZZ_+ : \coveragealt_{\test}(k) \le \alpha \}$, which is the maximum sample size that guarantees $(1-\alpha)$ proxy coverage over the testing set of survey questions. \myCref{tab-sharpness-general-OpinionQA} and \myCref{tab-sharpness-general-EEDI} report the $95$th percentile of the relative error $\big| \widehat{k} - k^*_{\test} \big| / k^*_{\test}$ over the $100$ train-test splits on the OpinionQA and EEDI datasets, respectively. We observe that $\widehat{k}$ is consistently close to $k^*_{\test}$, verifying its sharpness.

\begin{table}[h]
\TABLE
{95th Percentile of $\big| \widehat{k} - k^*_{\test} \big| / k^*_{\test}$ for the General Method on the OpinionQA Dataset.\label{tab-sharpness-general-OpinionQA}}
{\begin{tabular}{lcccc}
		\hline
$\alpha$ & $0.05$ & $0.1$ & $0.15$ & $0.2$ \\
        \hline
Claude 3.5 Haiku & $0.25$ & $0.22$ & $0.27$ & $0.24$ \\
DeepSeek-V3 & $0.44$ & $0.36$ & $0.26$ & $0.25$ \\
GPT-3.5 Turbo & $0.47$ & $0.24$ & $0.14$ & $0.12$ \\
GPT-4o mini & $0.31$ & $0.28$ & $0.17$ & $0.22$ \\
GPT-4o & $0.23$ & $0.23$ & $0.19$ & $0.21$ \\
GPT-5 mini & $0.26$ & $0.25$ & $0.23$ & $0.31$ \\
Llama 3.3 70B & $0.42$ & $0.15$ & $0.23$ & $0.27$ \\
Mistral 7B & $0.25$ & $0.30$ & $0.37$ & $0.25$ \\
Random & $0.19$ & $0.15$ & $0.13$ & $0.12$ \\
    \hline
    \end{tabular}}
{The $95$th percentiles are taken over $100$ random train-test splits of the survey questions.}
\end{table}

\begin{table}[h]
\TABLE
{95th Percentile of $\big| \widehat{k} - k^*_{\test} \big| / k^*_{\test}$ for the General Method on the EEDI Dataset.\label{tab-sharpness-general-EEDI}}
{\begin{tabular}{lcccc}
        \hline
$\alpha$ & $0.05$ & $0.1$ & $0.15$ & $0.2$ \\
        \hline
Claude 3.5 Haiku & $0.15$ & $0.13$ & $0.11$ & $0.16$ \\
DeepSeek-V3 & $0.35$ & $0.26$ & $0.25$ & $0.30$ \\
GPT-3.5 Turbo & $0.11$ & $0.08$ & $0.09$ & $0.09$ \\
GPT-4o mini & $0.18$ & $0.16$ & $0.12$ & $0.09$ \\
GPT-4o & $0.29$ & $0.50$ & $0.38$ & $0.32$ \\
GPT-5 mini & $0.07$ & $0.08$ & $0.09$ & $0.09$ \\
Llama 3.3 70B & $0.14$ & $0.12$ & $0.12$ & $0.12$ \\
Mistral 7B & $0.12$ & $0.09$ & $0.09$ & $0.10$ \\
Random & $0.12$ & $0.15$ & $0.17$ & $0.18$ \\
    \hline
    \end{tabular}}
{The $95$th percentiles are taken over $100$ random train-test splits of the survey questions.}
\end{table}

\paragraph{Interval width.} In \myCref{fig-width-general}, we plot the widths of the selected confidence intervals $\simCIBern(\widehat{k})$ for different LLMs and target miscoverage levels $\alpha$. The results are similar to those of the simple method. In simulating social opinions (\myCref{fig-width-simple-OpinionQA}), GPT-4o has the smallest misalignment gap; in simulating middle-school student answers to mathematics questions, all LLMs yield wide confidence intervals, reflecting large misalignment gaps.

\begin{figure}[H]
	\FIGURE{
    \subcaptionbox{OpinionQA Dataset \label{fig-width-general-OpinionQA}}
    {\includegraphics[width=0.46\linewidth]{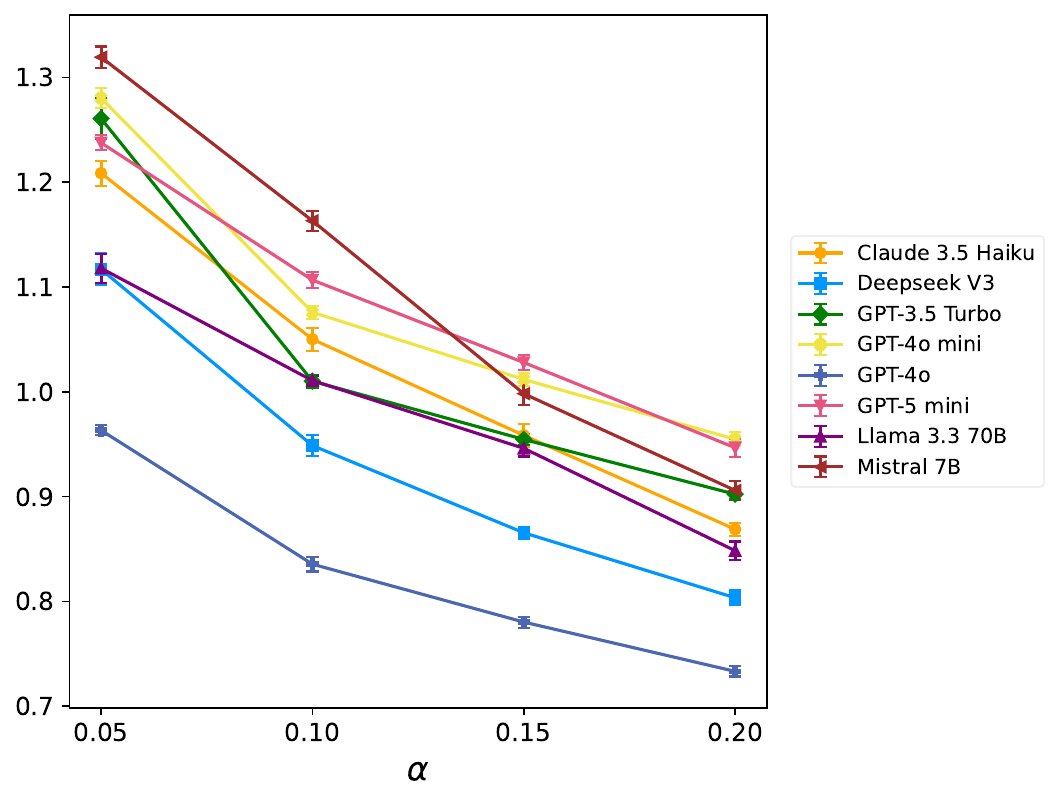}}
    \hfill\subcaptionbox{EEDI Dataset \label{fig-width-general-EEDI}}
    {\includegraphics[width=0.46\linewidth]{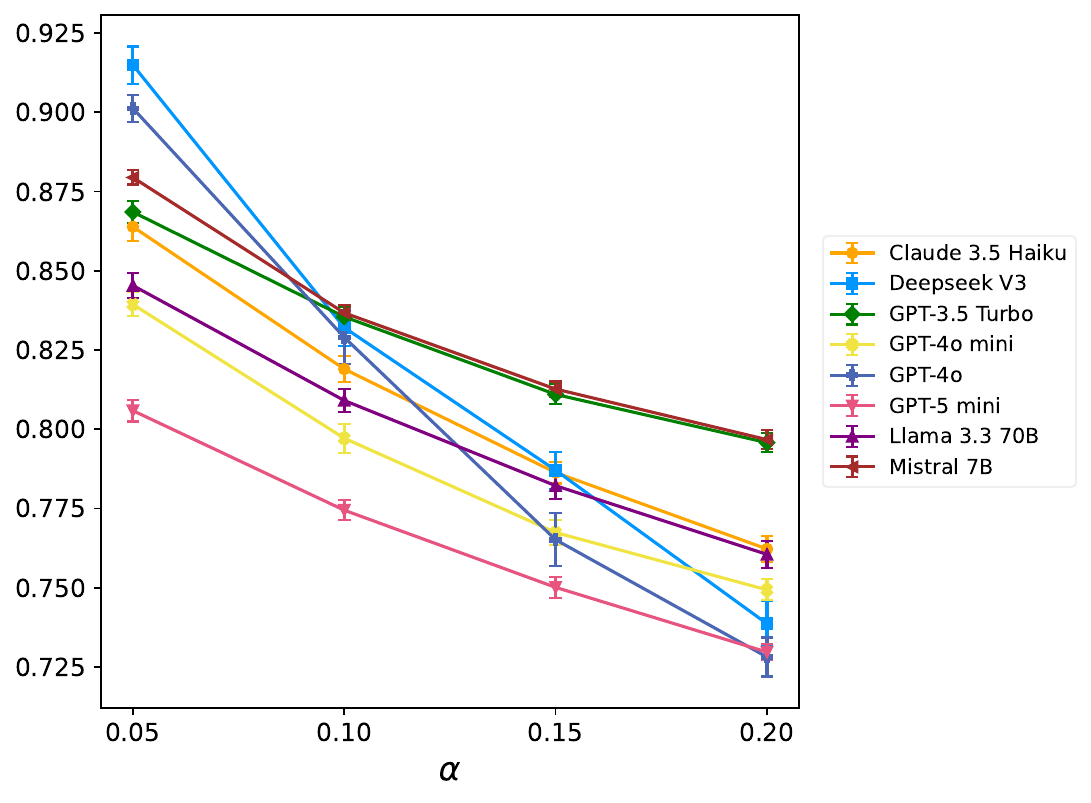}}
	}
	{Widths of Confidence Intervals $\simCIBern(\widehat{k})$ for the General Method. \label{fig-width-general}}
	{Horizontal axis: target miscoverage level $\alpha$. Vertical axis: width of $\simCIBern(\widehat{k})$. The results are averaged over $100$ random train-test splits. The half-width of each error bar is $1.96$ times the standard error.}
\end{figure}

\paragraph{Estimated hidden population size $\widehat{\kappa} = \widehat{k}/C$.} In \myCref{fig-kappa-general}, we report the estimated hidden population size $\widehat{\kappa} = \widehat{k}/C$ for different LLMs on the OpinionQA and EEDI datasets, under $\alpha=0.05$. The results are similar to those for the simple method.

\begin{figure}[H]
	\FIGURE{
    \subcaptionbox{OpinionQA Dataset}
    {\includegraphics[width=0.4\linewidth]{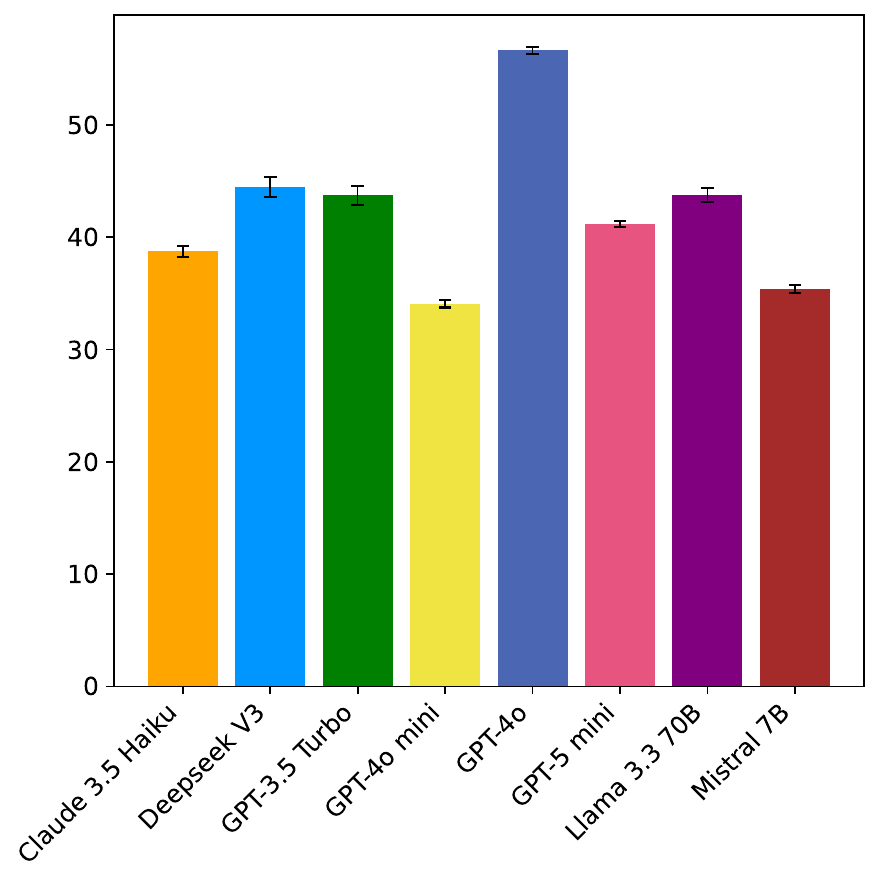}}
    \hfill\subcaptionbox{EEDI Dataset}
    {\includegraphics[width=0.4\linewidth]{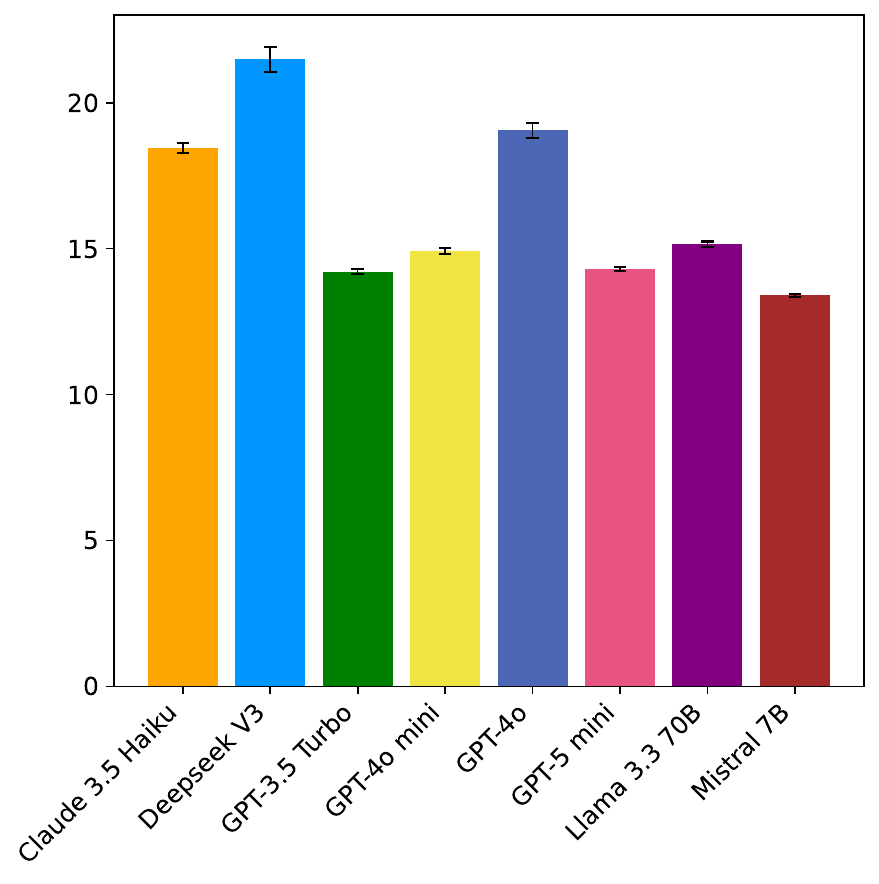}}
	}
	{Estimated Hidden Population Sizes $\widehat{\kappa}=\widehat{k}/C$ of Different LLMs from the General Method. \label{fig-kappa-general}}
	{The results are averaged over $100$ random train-test splits. The half-width of each error bar is $1.96$ times the standard error.}
\end{figure}

\subsection{Ablation Studies for the Dilation Factor}\label{sec-ablation-C}

In the main experiments, we set the dilation factor to $C=2$. We now examine the sensitivity of the estimated hidden population size $\widehat{\kappa}=\widehat{k}/C$ to this choice. We conduct ablation studies with GPT-5 mini on the OpinionQA dataset, varying the dilation factor $C$ and recomputing $\widehat{k}/C$ over $100$ random train-test splits.

\myCref{fig-ablation-C} shows that the estimated hidden population size is stable across the values of $C$ considered, for both the simple and general methods. In particular, the average value of $\widehat{k}/C$ varies by less than $2$ as $C$ changes. This stability is consistent with the theoretical interpretation in \myCref{sec-effective-sample-size}, where the normalized sample size $\widehat{k}(C)/C$ estimates the LLM's effective human sample size as $C\to\infty$.

\begin{figure}[H]
	\FIGURE{
    \subcaptionbox{Simple Method \label{fig-ablation-simple}}
    {\includegraphics[width=0.4\linewidth]{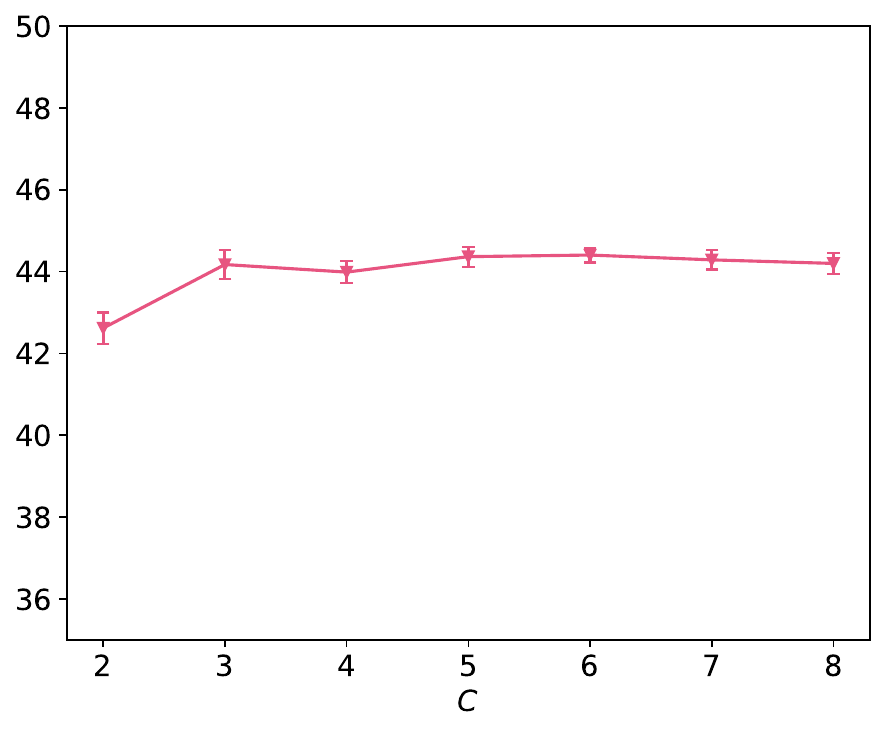}}
	\hfill\subcaptionbox{General Method \label{fig-ablation-general}}
    {\includegraphics[width=0.4\linewidth]{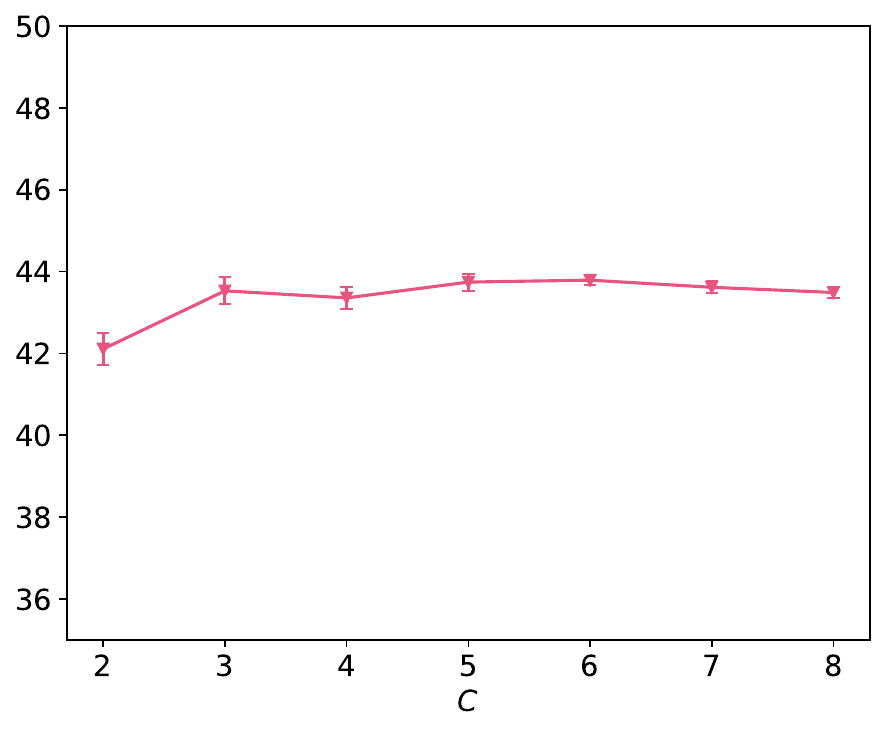}}
	}
	{Ablation Studies for the Dilation Factor $C$. \label{fig-ablation-C}}
	{Horizontal axis: dilation factor $C$. Vertical axis: estimated hidden population size $\widehat{k}/C$ of GPT-5 mini. The results are averaged over $100$ random train-test splits. The half-width of each error bar is 1.96 times the standard error.}
\end{figure}

\newpage

{
\bibliographystyle{ims}
\bibliography{bib}
}

\end{document}